%local definitions for shadow.tex
%%%%%%%%%%%%%%%%%%%%%%%%%%%%%%%%%%%%%%%%%%%%%%%%

\documentclass[oneside, 11pt, reqno]{amsart}
\usepackage[left=1in, right=1in, top=1in, bottom=1in]{geometry}

% packages
\usepackage[usenames,dvipsnames]{xcolor}

%\usepackage[color,notref,notcite]{showkeys}
%\definecolor{labelkey}{rgb}{0.8,0.3,0.1}
%\definecolor{refkey}{gray}{1} 
%\reversemarginpar

\usepackage{graphicx,enumitem}
\raggedbottom
\usepackage{multirow}
% Miscelaneous
\newcommand{\figcaption}[1]{
	\addtocounter{figure}{1}
	{\center Figure \arabic{figure}. #1\\ }
	\addcontentsline{lof}{figure}{#1}
}
\DeclareMathOperator\Liq{Liq}
\DeclareMathOperator\sgn{sgn}
\DeclareMathOperator\Int{Int}
\DeclareMathOperator\Cl{Cl}
\newcommand{\argmax}{\operatorname{argmax}}
\newcommand{\sAS}{\sA(\tS)}
\newcommand{\sCS}{\sC(\tS)}
\newcommand{\eit}[1]{\EE\left[\int_0^{\infty}  #1\, dt \right] }
\newcommand{\ebs}{\eta_B,\eta_S}
\newcommand{\tPP}{\bar{\PP}}
\newcommand{\eittp}[1]{\EE^{\tPP}\left[\int_0^{\infty}  #1\, dt \right] }
\newcommand{\sx}{\hS^x}
\newcommand{\hpix}{\hat{\pi}^x}
\newcommand{\hvx}{\hat{V}^x}
\newcommand{\hcx}{\hat{c}^x}
\newcommand{\hkx}{\hat{\kappa}^x}
\newcommand{\hvpzx}{\hat{\vp}^{0,x}}
\newcommand{\hvpx}{\hat{\vp}^x}

% Overs and unders
\newcommand{\ox}{\overline{x}}    \newcommand{\ux}{\underline{x}}
 
\newcommand{\old}{\overline{\ld}} \newcommand{\uld}{\underline{\ld}}

\newcommand{\oy}{\overline{y}}    \newcommand{\uy}{\underline{y}}
\newcommand{\oS}{\overline{S}}    \newcommand{\uS}{\underline{S}}
\newcommand{\os}{\overline{s}}    \newcommand{\us}{\underline{s}}

% Greeks and Germans
\newcommand{\eps}{\varepsilon}
\newcommand{\ld}{\lambda}

\newcommand{\vp}{\varphi}
\newcommand{\vpz}{\vp^0}
\newcommand{\vpu}{\vp^{\uparrow}} \newcommand{\vpd}{\vp^{\downarrow}}
\newcommand{\Phixu}{\Phi^{x\uparrow}}\newcommand{\Phixd}{\Phi^{x\downarrow}}
\newcommand{\hvpz}{\hat{\vp}^0}
\newcommand{\hvp}{\hat{\vp}}
\newcommand{\hd}{\hat{\delta}}

\newcommand{\hSig}{\hat{\Sigma}}
\newcommand{\hGm}{\hat{\Gamma}}
\newcommand{\hth}{\hat{\theta}}
\newcommand{\halq}{\hat{\alpha}_q} \newcommand{\halz}{\hat{\alpha}_0}
\newcommand{\hbt}{\hat{\beta}}
\newcommand{\hgm}{\hat{\gamma}}

% General abbreviations
%%%%%%%%%%%%%%%%
\newcommand{\tot}{\tfrac{1}{2}} % one half in a small-font frac

 % scalar product
\newcommand{\abs}[1]{\left| #1 \right|} % absolute value
\newcommand{\set}[1]{\{#1\}} % curly brackets
 % curly brackets
\newcommand{\Bset}[1]{\Big\{#1\Big\}} % curly brackets

\newcommand{\sets}[2]{\set{#1\,:\,#2}} % a set with "such that"
 % a set with "such that"
\newcommand{\Bsets}[2]{\Bset{#1\,:\,#2}} % a set with "such that"
\newcommand{\ind}[1]{ {\mathbf 1}_{{#1}}} % convex-theoretic indicator of a set
\newcommand{\inds}[1]{ {\mathbf 1}_{\set{#1}}} % indicator of a curly set
\newcommand{\seq}[1]{\set{#1_n}_{n\in\N}} % a sequence indexed by n
\newcommand{\prfi}[1]{ \{ #1 \}_{t\in [0,\infty)}} 

\providecommand{\R}{} \renewcommand{\R}{{\mathbb R}}
 
\newcommand{\N}{{\mathbb N}}
\newcommand{\PP}{{\mathbb P}}
\newcommand{\EE}{{\mathbb E}}

\newcommand{\define}[1]{{\textbf{#1}}}

% stuff with diacritics
\newcommand{\ito}{It\^ o}
 % cadlag phrase
 % caglad phrase
\newcommand{\Zitkovic}[1]{{\v Z}itkovi\'c}
\newcommand{\Sirbu}[1]{S\^\i rbu}

% cal, bb, bys shortcuts

%\newcommand\ba{{\mathbb a}}

%\newcommand\sb{{\mathcal b}}

%\newcommand\sc{{\mathcal c}}

\newcommand\hc{{\hat{c}}}

\newcommand\tu{{\tilde{u}}}
\newcommand\hu{{\hat{u}}}

\newcommand\hx{{\hat{x}}}

\newcommand\sA{{\mathcal A}}

\newcommand\sB{{\mathcal B}}

\newcommand\bB{{\mathbb B}}

\newcommand\sC{{\mathcal C}}

\newcommand\sD{{\mathcal D}}

\newcommand\sE{{\mathcal E}}

\newcommand\bE{{\mathbb E}}

\newcommand\sL{{\mathcal L}}

\newcommand\sP{{\mathcal P}}

\newcommand\bR{{\mathbb R}}

\newcommand\sS{{\mathcal S}}

\newcommand\tS{{\tilde{S}}}
\newcommand\hS{{\hat{S}}}

\newcommand\bT{{\mathbb T}}

\newcommand\sU{{\mathcal U}}

\newcommand\sV{{\mathcal V}}

\newcommand\bW{{\mathbb W}}

\newcommand\ga{{g_{\alpha}}}
\newcommand\ba{{\beta_{\alpha}}}

%theorems
\numberwithin{equation}{section}
\theoremstyle{plain}                % title and number in bold, text italic
\newtheorem{theorem}{Theorem}[section]
\newtheorem{lemma}[theorem]{Lemma}
\newtheorem{proposition}[theorem]{Proposition}

\theoremstyle{definition}           % title and number in bold, text normal
\newtheorem{definition}[theorem]{Definition}

\theoremstyle{remark}
\newtheorem{remark}[theorem]{Remark}

\usepackage{graphicx} 
 
\begin{document}

\author{Jin Hyuk Choi, Mihai \Sirbu{} and Gordan \Zitkovic{}}
\title[Optimal consumption with transaction costs]{Shadow prices and
  well-posedness in the problem of  optimal investment and consumption
  with transaction costs}

\maketitle

\begin{abstract} We revisit  the optimal investment and consumption
  model of Davis and Norman (1990) and Shreve and Soner (1994),
  following a shadow-price approach similar to that  of Kallsen and
  Muhle-Karbe (2010).  Making use of the completeness of the model
  without transaction costs, we reformulate and reduce the
  Ha\-mil\-ton-Jacobi-Bellman equation for this singular stochastic
  control problem to  a non-standard free-boundary problem for a
  first-order ODE with an integral constraint. Having shown that the
  free boundary problem has a smooth solution, we use it to construct
  the solution of the original optimal investment/consumption problem
  in a self-contained manner and without any recourse to the dynamic
  programming principle.  Furthermore,  we provide an explicit
  characterization of model parameters for which the value function is
  finite.  
\end{abstract}

\section{Introduction} Ever since the seminal work of Merton (see
\cite{Mert:69} and \cite{Mert:71}), the problem of dynamic optimal
investment and consumption occupied  a central role in mathematical
finance and financial economics. Merton himself, together with many of
the researchers that followed him, made the simplifying assumption of
\emph{no market frictions}: there are no transaction costs, borrowing
and lending occur at the same interest rate, the assets can be bought
and sold immediately in any quantity and at the same price (perfect
liquidity), etc.  Among those, transaction costs are (arguably) among
the most important and (demonstrably) the most studied.

\subsection{Existing work} The problem of optimal investment where
transactions cost are present has received (and continues to receive)
considerable attention.  Following the early work of Constantinides
and Magill \cite{ConMag:76}, Davis and Norman \cite{DavNor90}
considered a risky asset driven by the geometric Brownian Motion for
which proportional transactions costs are levied on each transaction.
These authors formulated the optimal investment/consumption problem as
a singular stochastic control problem, and approached it using the
method of dynamic programming. Very early in the game it has been
intuited, and later proved to varying degrees of rigor, that the
optimal strategy has the following general form: 
\begin{enumerate}
  \item The investor should not trade at all as long as his/her
	holdings stay within the so-called ``no-trade region''  - a wedge
	around the Merton-proportion line.  
  \item Outside the ``no-trade
	region' , the investor should trade so as to reach the no-trading
	region as soon as possible, and, then, adjust the portfolio in a
	minimal way in order not to leave it ever after. 
\end{enumerate}
Such a strategy first appeared in \cite{ConMag:76} and was later made
more precise in \cite{DavNor90}.  The analysis of \cite{DavNor90} was
subsequently complemented by that of Shreve and Soner \cite{ShrSon94}
who removed various technical conditions, and clarified the key
arguments using the technique of viscosity solutions. Still, even in
\cite{ShrSon94}, technical conditions needed to be imposed. Most
notably, the analysis there assumes that \emph{the problem is well
  posed}, i.e., that the value function is finite; no necessary and
sufficient condition for this assumption, in terms of the parameters
of the model, is given in \cite{ShrSon94}. In fact, to the best of our
knowledge, the present paper provides the first such characterization.

\medskip

More recently, Kallsen and Muhle Karbe \cite{KalMuh10} approached the
problem using the concept of  a \emph{shadow price}, first introduced
by \cite{JouKal95} and \cite{LamPhaSch98}.  Roughly speaking, the
shadow-price approach amounts to comparing the problem \emph{with}
transaction costs to a family of similar problems, but \emph{without}
transaction costs, whose risky-asset prices lie between the bid  and
ask prices of the original model.  The most unfavorable of these
prices is expected to yield the same utility as the original problem
where transaction costs are paid.  As shown in \cite{KalMuh10}, this
approach works quite well for the case of the logarithmic utility,
which admits an explicit solution of the problem without transaction
costs in a very general class of not-necessarily Markovian models. The
fact that the logarithmic utility is the only member of the CRRA
(power) family of utility functions with that property makes a direct
extension of their techniques seem difficult to implement.  Very
recently, and in parallel with our work, partial results in this
direction have been obtained by Herczegh and Prokaj \cite{hp2011}
whose approach (and the intuition behind it) is based on the
second-order nonlinear free-boundary HJB equation of \cite{ShrSon94},
and applies only to a rather restrictive range of parameters.

\subsection{Our contributions.} Our results apply to the model
introduced \cite{DavNor90} or  \cite{ShrSon94}, and deal with
\emph{general power-utility functions} and \emph{general values of the
  parameters}.  It is based on the shadow-price approach, but quite
different in philosophy and execution from that of either
\cite{KalMuh10} or \cite{hp2011}. Our contributions can be divided
into two groups:

\bigskip

\noindent \textit{Novel treatment and  proofs of, as well as insights
  into the existing results.} We provide a new and complete path to
the solution to the optimal investment/consumption problem with
transaction costs and power-type utilities.  Our approach, based on
the notion of the shadow price,  is fully self-contained, does not
rely on the dynamic programming principle and expresses all the
features of the solution in terms of a solution to a single,
constrained free-boundary problem for a one-dimensional first-order
ODE. This way, it is able to distinguish between various parameter
regimes which remained hidden under the more abstract approach of
\cite{DavNor90} and \cite{ShrSon94}. Interestingly, most of those
regimes turn out to be ``singular'', in the sense that our first-order
ODE develops a singularity in the right-hand side. While we are able
to treat them fully, those cases require a much more delicate and
insightful analysis. The results of both \cite{KalMuh10} and
\cite{hp2011} apply only to the parameter regimes where no singularity
is present.

\bigskip 

\noindent \textit{New results.}
One of the advantages of our approach is that it allows us to give an
explicit characterization of the set of model parameters for which the
optimal investment and consumption problem with transaction costs is
well posed. As already mentioned above, to the best of our knowledge,
such a characterization is new, and not present in the literature.

\medskip

\noindent Not only as another application, but also as an integral
part of our proof,  we furthermore prove that a shadow price exists
whenever the problem is well-posed.

\medskip

\noindent Finally, our techniques can be used to provide precise
regularity information about all of the analytic ingredients, the
value function being one of them.  Somewhat surprisingly, we observe
that in the singular case these are not always real-analytic, even
when considered away from the free boundary.

\subsection{The organization of the paper.} The set-up and the main
results are presented in Section 2. In Section 3 we describe the
intuition and some technical considerations leading to our
non-standard free-boundary problem. In Section 4, we prove a
verification-type result, i.e., show how to solve the singular control
problem, assuming that a smooth-enough solution for the free-boundary
equation is available. The proof of existence of such a  smooth
solution is the most involved part of the paper. In order to make our
presentation easier to parse, we split this proof into two parts.
Section 5 presents the main ideas of the proof, accompanied by
graphical illustrations. The rigorous proofs follow in Section 6.

\section{The Problem and the Main Results}

\subsection{The Market} We consider a model of a financial market in
which the price process $\prfi{S_t}$ of the risky asset (form
simplicity called the ``stock'') is given by \[ dS_t= S_t(\mu\, dt+
  \sigma\, dB_t),\ t\in [0,\infty) \text{ with } S_0>0.\] Here, $B$ is
a standard Brownian motion, and $\mu>0$ and $\sigma>0$ are constants
- parameters of the model.  The information structure is given by the natural saturated filtration generated by $B$. An economic agent starts with $\eta_S$
shares of the stock and $\eta_B$ units of an interestless bond and
invests in the two securities dynamically.  Transaction costs are not
assumed away, and we model them as proportional to the size of the
transaction. More precisely, they are determined by two constants
$\uld\in (0,1)$ and $\old>0$: one gets only $\uS_t=(1-\uld) S_t$ for
one share of the stock, but pays $\oS_t=(1+\old)S_t$ for it.

\subsection{Solvency and Admissible Strategies} 
We assume that the agent's initial position $(\eta _B, \eta _S)$ is \define{strictly solvent}, which means that it can be liquidated to a positive cash position. More precisely, we assume that $\Liq(\eta _B,\eta _S,\uS_0,\oS_0)> 0$ where
\begin{equation}
   \label{equ:def-liq}
\Liq(\vp^0,\vp,\us,\os)= \vp^0 + \vp^+ \us- \vp^- \os.  
\end{equation}

The agent's
\define{(consumption/trading) strategy} is  described by a triple
$(\vpz,\vp,c)$ of {\bf optional} processes 
%(with respect to thenatural augmentation of the filtration generated by $B$) 
such that
$\vp$ and $\vpz$ are right-continuous and of finite variation and $c$
is nonnegative and  locally integrable, a.s.  The processes $\vpz$ and
$\vp$ have the meaning of the amount of cash held in the money market
and the number of shares in the risky asset, respectively, while $c$
is the consumption rate.

In order to incorporate the potential initial jump we distinguish
between the initial values $(\varphi ^0_{0-}, \varphi _{0-})$ and the
values $(\varphi ^0_{0}, \varphi _{0})$ (after which the processes are
right-continuous).  This is quite  typical for optimal
investment/consumption strategies, both in frictional and frictionless
markets,  when the agent initially holds stocks, in addition to bonds.
In this spirit, we always assume that $(\varphi ^0_{0-}, \varphi
_{0-})=(\eta _B, \eta_S).$

 A strategy $(\vpz,\vp,c)$ is said to be \define{self-financing} if 
\begin{equation}
   \label{equ:self-fin}
   % \nonumber 
   \begin{split}
 \vpz_t = \vpz_{0-}- \int_0^t \oS_u d\vpu_u + \int_0^t \uS_u
d \vpd_u-\int_0^t c_u\, du,
   \end{split}
\end{equation}
 where $\vp= \vp_{0-}+\vpu - \vpd$ is the pathwise minimal
 (Hahn-Jordan) decomposition of $\vp$ into a difference of two
 non-decreasing adapted, right-continuous processes, \emph{with
   possible jumps at time zero}, as we assume that
 $\vpu_{0-}=\vpd_{0-}=0.$

  The integrals used in (\ref{equ:self-fin}) above, 
 with respect to the (pathwise Stieltjes) measures
  $d\vpu$ and $d\vpd$ characterized by
  $d\vpu((a,b])= \vpu(b)-\vpu(a)$, and
  $d\vpd((a,b])= \vpd(b)-\vpd(a)$,
  for $0\leq a < b <\infty$ together with 
 $d\vpu(\{0\})= \vpu(0)-\vpu(0-)=\vpu(0)$, and
   $d\vpd(\{0\})= \vpd(0)-\vpd(0-)=\vpd(0)$.

A self-financing strategy $(\vpz,\vp,c )$ is called
\define{admissible} if its position is always \define{solvent}, i.e., if
\begin{equation}
   \label{equ:liq}
   % \nonumber 
   \begin{split}
\Liq(\vpz_t,\vp_t,\uS_t,\oS_t)\geq 0, \text{ for all $t$, a.s. }
   \end{split}
\end{equation}
The set of
all admissible strategies with $\vpz_{0-}=\eta_B$ and $\vp_{0-}=\eta_S$ is
denoted by $\sA $, and the set of all $c$ such that
$(\vpz,\vp,c)\in\sA$ for some $\vpz$ and $\vp$ -
the so-called \define{financeable consumption processes} - is denoted by
$\sC$. 

\subsection{Utility functions}
For $p\in (-\infty,1)$, we consider the \define{utility function}
$U:[0,\infty)\to [-\infty,\infty)$ of the power (CRRA) type. It is defined
for $c\geq 0$ by
\[ U(c) = \begin{cases} \tfrac{1}{p} c^p, & c\ne 0, p\ne 0\\ \log(c), & c\ne 0, p=0,
\end{cases}\text{ and }\ U(0)= \begin{cases} 0, & p>0,\\ -\infty, & p\leq 0
\end{cases}
\]
Our
task is to analyze the optimal investment and consumption problem, with the value
\begin{equation}
   \label{equ:oc-prob}
   % \nonumber 
   \begin{split}
 u = \sup_{c\in \sC} \sU(c),
\text{ where }\  \sU(c)=\eit{e^{-\delta t} U(c_t)},
   \end{split}
\end{equation}
and $\delta>0$ stands for the (constant) \define{impatience rate}. As
part of the definition of $\sU$, we posit that $\sU(c)=-\infty$ unless
$\eit{e^{-\delta t} (U(c_t))^-}<\infty$.

\subsection{Consistent price processes} An \ito{}-process $\tS$ is
called a \define{consistent price (process)} if $\uS_t\leq \tS_t\leq
\oS_t$, for all $t\geq 0$, a.s.; the set of all consistent prices
is denoted by $\sS$. For each consistent price $\tS\in\sS$, and a solvent pair
of initial holdings $(\eta_B,\eta_S)\in\R^2$ such that $\Liq(\eta _B,\eta _S,\uS_0,\oS_0)\geq 0$, we define the set
$\sAS$ of \define{(frictionless) admissible} strategies
$(\vpz,\vp,c)$, as we would in the standard frictionless market where
the price-process is given by $\tS$. More precisely, for
$(\vpz,\vp,c)$ to belong to $\sAS$ it is necessary and sufficient that
the following three conditions hold
\begin{enumerate}
\item[(i)]  $\vpz, \vp$ and $c$ are
progressively measurable, $c_t\geq 0$,
a.s., for all $t\in [0,\infty]$, 
\item [(ii)] $\vpz_{0-}=\eta_B$ and $\vp_{0-}=\eta_S$, and
\item [(iii)] $V_t=\vpz_t + \vp_t \tS_t\geq 0$, for all $t\in [0,\infty)$, a.s,. and
  \begin{equation}
   \label{equ:V}
   % \nonumber 
   \begin{split}
	 V_t=V_{0-}+ \int_0^t \vp_u\, d\tS_u-\int_0^t c_u\, du, t\geq 0, \text{ a.s.}
   \end{split}
\end{equation}
\end{enumerate} 
The set of processes $\prfi{c_t}$
that appear as the third component of an element of
$\sAS$ will be denoted by $\sCS$, i.e.,
\[ \sCS=\sets{c}{\text{there exist }\vpz,\vp,\text{ such that }
  (\vpz,\vp,c)\in\sAS}.\] The elements of $\sCS$ can be interpreted as
the consumption processes financeable from the initial holding
$(\ebs)$ in the frictionless market modeled by $\tS$.  The intuition
that the presence of transaction costs can only reduce the collection
of financeable consumption processes can be formalized as in the
following easy proposition.
\begin{proposition}
\label{pro:C-in-CS}
$\sC\subseteq \sCS$, for each $\tS\in\sS$.
\end{proposition}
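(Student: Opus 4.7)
The plan is to take an arbitrary strategy $(\vpz,\vp,c) \in \sA$, keep the risky-asset holdings unchanged, and build a frictionless cash process $\tvp^0$ that finances the same consumption stream $c$ under the price $\tS$. The construction is essentially algebraic: I would set $\tvp = \vp$ and define
\[ \tvp^0_t = \eta_B - \int_0^t \tS_u\, d\vpu_u + \int_0^t \tS_u\, d\vpd_u - \int_0^t c_u\, du, \]
so that by construction $\tvp^0$ is progressively measurable and $\tvp^0_{0-}=\eta_B$. Setting $V_t = \tvp^0_t + \vp_t \tS_t$, integration by parts applied to $\vp_t \tS_t$, using that $\tS$ is continuous (\ito) and $\vp$ is of finite variation, gives
\[ dV_t = d\tvp^0_t + \vp_{t-}\, d\tS_t + \tS_t\, d\vp_t = \vp_{t-}\, d\tS_t - c_t\, dt, \]
with the initial condition $V_{0-} = \eta_B + \eta_S \tS_0$. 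This verifies the self-financing equation \eqref{equ:V} needed for $(\tvp^0, \vp, c) \in \sAS$.

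The substantive step is the nonnegativity of $V_t$. Comparing the definitions of $\tvp^0$ and $\vpz$ and using $\uS \le \tS \le \oS$ together with $d\vpu, d\vpd \ge 0$, I get
\[ \tvp^0_t - \vpz_t = \int_0^t (\oS_u - \tS_u)\, d\vpu_u + \int_0^t (\tS_u - \uS_u)\, d\vpd_u \ge 0. \]
Hence
\[ V_t \ge \vpz_t + \vp_t \tS_t = \vpz_t + \vp_t^+ \tS_t - \vp_t^- \tS_t \ge \vpz_t + \vp_t^+ \uS_t - \vp_t^- \oS_t = \Liq(\vpz_t, \vp_t, \uS_t, \oS_t) \ge 0, \]
where the second inequality again uses $\uS \le \tS \le \oS$, and the last is the solvency assumption \eqref{equ:liq} defining $\sA$.

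All three requirements in the definition of $\sAS$ are then met, so $(\tvp^0, \vp, c) \in \sAS$ and therefore $c \in \sCS$. There is no genuine obstacle here; the only delicate point is handling the possible jumps of $\vpu$, $\vpd$ at $t=0$, which is absorbed transparently by the Stieltjes-measure conventions already fixed in the excerpt (so $d\vpu(\{0\}) = \vpu(0)$, etc.) and by the fact that the \ito{} process $\tS$ has no jump at $0$.
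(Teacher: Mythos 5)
Your proposal is correct and follows essentially the same route as the paper: keep $\vp$ unchanged, define the frictionless cash account by self-financing under $\tS$ (your $\tvp^0$ coincides with the paper's choice after integration by parts), and deduce $V_t\geq 0$ from $\uS\leq\tS\leq\oS$ together with the solvency condition \eqref{equ:liq}. The only cosmetic difference is that the paper arranges the same inequalities as a single chain \eqref{equ:by-parts}--\eqref{equ:pos} rather than isolating the comparison $\tvp^0_t\geq\vpz_t$.
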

\begin{proof}
  For $c\in\sC$, let $(\vpz,\vp)$ be such that $(\vpz,\vp,c)\in
  \sA$. By the self-financing condition (\ref{equ:self-fin}),
  the fact that $\uS_t\leq \tS_t\leq \oS_t$ and integration by parts
  %(supported by (\ref{equ:left-St})
 (simplified by the fact that
  $\tS$ is continuous), we have
\begin{equation}
   \label{equ:by-parts}
   % \nonumber 
   \begin{split}
-\int_0^t c_u\, du
&= \vpz_t - \vpz_{0-}+ \int_0^t \oS_u d\vpu_u - \int_0^t \uS_u
d \vpd_u \geq \vpz_t - \vpz_{0-} + 
\int_0^t \tS_u\, d\vp_u\\ 
&= \vpz_t - \vpz_{0-} + \tS_t \vp_t - \tS_0 \vp_{0-} - \int_0^t \vp_{u}\, d\tS_u
   \end{split}
\end{equation}
Therefore, by the admissibility criterion (\ref{equ:liq}), we have
\begin{equation}
   \label{equ:pos}
   % \nonumber 
   \begin{split}
\eta_B+ \tS_0 \eta_S + \int_0^t \vp_{u}\, d\tS_u-\int_0^t c_u\, du
\geq \vpz_t +\tS_t \vp_t\geq 0. 
   \end{split}
\end{equation}
It remains to set $\tilde{\vp}=\vp$ and $\tilde{\vp}^0_t=
\eta_B+\int_0^t \tilde{\vp}_u\, d\tS_u - \int_0^t c_u\, du
- \tilde{\varphi}_t \tilde{S}_t+\eta_S \tilde{S}_0$,
 and
observe that (\ref{equ:pos}) directly implies (\ref{equ:V}). Thus,
$(\tilde{\vp}^0,\tilde{\vp},c)\in \sAS$.
\end{proof}
It will be important in the sequel to be able to check whether
an element of $\sCS$ belongs to $\sC$. It happens,
essentially, when a strategy that finances it ``buys'' only when
$\tS_t=\oS_t$ and ``sells'' only when $\tS_t=\uS_t$. A precise
statement is given in the following proposition.
\begin{proposition}
\label{pro:equally-important}
  Given $\tS\in\sS$, let $c\in\sCS$ be such that
  there exist processes $\vpz$ and $\vp$ such that
  \begin{enumerate}
  \item \label{ite:1-equally-important}
$(\vpz,\vp,c)\in \sAS$, 
  \item \label{ite:2-equally-important}
$\vp$ is a right-continuous process of finite variation, and
  \item \label{ite:3-equally-important}the Stieltjes measure on
    $[0,\infty)$ induced by $\vpu$ is carried by $ \set{\tS_t=\oS_t}$
    and that induced by $\vpd$ by $ \set{\tS_t=\uS_t}$.
  \end{enumerate}
Then, $c\in \sC$. 
\end{proposition}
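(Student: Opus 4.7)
The plan is to show that the triple $(\vpz,\vp,c)$ supplied by the hypothesis is itself a witness for $c\in\sC$; concretely, I would check that $(\vpz,\vp,c)\in\sA$ by verifying the friction self-financing equation~(\ref{equ:self-fin}) and the solvency condition~(\ref{equ:liq}).

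For self-financing, the idea is to upgrade to an equality the inequality used in the proof of Proposition~\ref{pro:C-in-CS}. Condition~(\ref{ite:3-equally-important}) yields the identifications $\int_0^t \oS_u\,d\vpu_u=\int_0^t \tS_u\,d\vpu_u$ and $\int_0^t \uS_u\,d\vpd_u=\int_0^t \tS_u\,d\vpd_u$, whose difference is $\int_0^t \tS_u\,d\vp_u$. Because $\tS$ is continuous and $\vp$ is right-continuous of finite variation by~(\ref{ite:2-equally-important}), integration by parts gives $\int_0^t \tS_u\,d\vp_u=\tS_t\vp_t-\tS_0\eta_S-\int_0^t \vp_u\,d\tS_u$. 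Substituting the frictionless dynamics~(\ref{equ:V}) together with $V_t=\vpz_t+\vp_t\tS_t$ then collapses the right-hand side of~(\ref{equ:self-fin}) exactly to $\vpz_t$. This simultaneously establishes self-financing and, via the resulting formula $d\vpz_t=-\oS_t\,d\vpu_t+\uS_t\,d\vpd_t-c_t\,dt$, shows that $\vpz$ is of finite variation, supplying the remaining regularity required for membership in $\sA$.

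The main obstacle is the solvency estimate $\Liq(\vpz_t,\vp_t,\uS_t,\oS_t)\geq 0$. The pointwise identity $\Liq_t = V_t-\vp_t^+(\tS_t-\uS_t)-\vp_t^-(\oS_t-\tS_t)$ only gives $\Liq_t\leq V_t$, so~(\ref{ite:1-equally-important}) does not directly yield $\Liq\geq 0$ from $V\geq 0$ at a single time. I would argue by contradiction: on an event $A\in\sF_{t_0}$ of positive probability on which, say, $\vp_{t_0}\geq 0$ and $\Liq_{t_0}<0$, condition~(\ref{ite:3-equally-important}) restricts all subsequent rebalancing to the spread boundaries, and the identity $dV_t=\vp_t\,d\tS_t-c_t\,dt$ (which has no $d\vpu$ or $d\vpd$ terms under~(\ref{ite:3-equally-important})) shows that trading is ``$V$-neutral''. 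Exploiting the recurrence/unboundedness properties of the underlying geometric Brownian motion $S$, one can then select a sub-event of $A$ on which $\tS$ is driven so low that $\vpz_{t_0}+\vp_{t_0}\tS$ becomes strictly negative before the selling boundary $\{\tS=\uS\}$ rises to a level that would allow the accumulated cash position to be restored, forcing $V<0$ at some later time and contradicting~(\ref{ite:1-equally-important}). Making this dichotomy precise, together with the symmetric argument in the $\vp_{t_0}<0$ case, is the main technical content of the solvency step.
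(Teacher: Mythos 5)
Your self-financing step is exactly the paper's proof: starting from the frictionless wealth identity (\ref{equ:V}), integrating $\vp\,d\tS$ by parts (legitimate by condition (\ref{ite:2-equally-important}) and the continuity of $\tS$) to get $\vpz_t=\eta_B-\int_0^t\tS_u\,d\vp_u-\int_0^t c_u\,du$, and then using condition (\ref{ite:3-equally-important}) to replace $\tS_u$ by $\oS_u$ on the support of $d\vpu$ and by $\uS_u$ on the support of $d\vpd$, which is precisely the frictional self-financing equation (\ref{equ:self-fin}). Up to this point the two arguments coincide line by line, and the paper's proof in fact stops here: it concludes ``$c\in\sC$'' once (\ref{equ:self-fin}) is established, without separately discussing the solvency constraint (\ref{equ:liq}).

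You are right that solvency does not follow in one line from admissibility in the $\tS$-market, since $\Liq(\vpz_t,\vp_t,\uS_t,\oS_t)\le V_t$ pointwise, with the inequality pointing the wrong way; flagging this is a legitimate observation about a step the paper passes over in silence. However, what you offer for it is a plan rather than a proof. As written the argument would not go through: the parenthetical that (\ref{equ:V}) ``has no $d\vpu$ or $d\vpd$ terms under (\ref{ite:3-equally-important})'' is beside the point (it holds for every element of $\sAS$), the word ``recurrence'' is off (for $\mu>\sigma^2/2$ the geometric Brownian motion is transient to $+\infty$; what one actually needs, and what is true, is that the conditional law of $\inf_{s\ge t}S_s$ given $\FF_t$ charges every neighbourhood of $0$), and the narrative about the selling boundary ``rising'' never isolates the mechanism producing the contradiction. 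A more promising route along your lines is: on $\{\Liq_{t_0}<0,\ \vp_{t_0}>0\}$ stop at $\tau_\eps=\inf\{s\ge t_0:\ S_s\le\eps\}$, use the already-established (\ref{equ:self-fin}) together with $\oS_u\ge\uS_u$ to bound $\vpz_{\tau_\eps}\le\vpz_{t_0}-\int_{t_0}^{\tau_\eps}\uS_u\,d\vp_u$, integrate by parts and let $\eps\downarrow0$; even then one still has to control the stochastic integral $\int\vp_u\,d\uS_u$ and possible sign changes of $\vp$, so the step is genuinely nontrivial. In short: the part of your proof that overlaps with the paper is correct and identical to it; the extra solvency step you correctly identify as necessary is left unproven.
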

\begin{proof}
  Let the triplet $(\vpz,\vp,c)\in\sAS$ satisfy the conditions of
  the proposition. In particular, we have
\begin{equation}
\nonumber 
   \begin{split}
  0  &= \eta_B + \eta_S \tS_0- \vpz_t-\vp_t\tS_t+ \int_0^t \vp_u\,
  d\tS_u-\int_0^t\, c_u\, du .
   \end{split}
\end{equation}
Thanks to  condition \eqref{ite:3-equally-important}, the
integration-by-parts formula and the self-financing property
(\ref{equ:self-fin}), it follows that
\begin{equation}
\nonumber 
   \begin{split}
  \vpz_t  &= \eta_B - \int_0^t \tS_t\, d\vp_u-\int_0^t\, c_u\, du
= 
\eta_B - 
\int_0^t \tS_t\, d\vpu_u +
\int_0^t \tS_t\, d\vpd_u -
\int_0^t c_u\, du\\
&=\eta_B - 
\int_0^t \oS_t\, d\vpu_u +
\int_0^t \uS_t\, d\vpd_u -
\int_0^t c_u\, du.
   \end{split}
\end{equation}
Hence,  $c\in\sC$. 
\end{proof}
\subsection{Shadow Problems}
\label{sse:shadow-pr}
For each consistent price process $\tS$, we define an
auxiliary optimal-consumption problem - called the
\define{$\tS$-problem}, with the value $u(\tS)$, by
\begin{equation}
   \label{equ:weak-duality}
   % \nonumber 
   \begin{split}
  u(\tS)=\sup_{c\in \sCS}\sU(c),\text{ so that } u \leq
  \inf_{\tS\in\sS} u(\tS),
   \end{split}
\end{equation}
where
$u$ is defined as in \eqref{equ:oc-prob}, 
and the inequality on the right is implied by 
 Proposition \ref{pro:C-in-CS}.
In words,  each consistent price $\tS$ affords at least as good an
investment opportunity as the original frictional market. 

It is in the heart of our approach to show that the
duality gap, in fact, closes, i.e., that the inequality in
\eqref{equ:weak-duality} becomes an equality; the worst-case shadow
problem performs no better than frictional one.
\begin{definition}
  A consistent price  $\tS$ is called a \define{shadow price}
  if $u=u(\tS)$. 
\end{definition}
The central idea of the present paper is to look for a shadow price as
the minimizer of the right-hand side of (\ref{equ:weak-duality})
viewed as a stochastic control problem. More precisely, we turn our
attention to a search for an optimizer in 
 the \define{shadow problem}:
\begin{equation}
   \label{equ:shadow}
   % \nonumber 
   \begin{split}
 \tu = \inf_{\tS\in\sS} u(\tS).
   \end{split}
\end{equation}
We start by tackling the shadow problem in a
formal manner and deriving an analytic object (a free-boundary
problem) related to its solution. Next, we show that this
free-boundary problem indeed admits a solution and use it to construct
the candidate shadow price. Finally, instead of showing that our
candidate is indeed an optimizer for \eqref{equ:shadow} and that
$u=\tu$, we use the
following direct consequence of Proposition
\ref{pro:equally-important}.
\begin{proposition}
\label{pro:when-shadow}
  Suppose that for $\tS\in\sS$ there exists a triplet
 $(\vpz, \vp,c)$ 
  such that
  \begin{enumerate}
  \item $(\vpz,\vp,c)$ satisfies conditions
    \eqref{ite:1-equally-important}, \eqref{ite:2-equally-important}
    and \eqref{ite:3-equally-important} of Proposition
    \ref{pro:equally-important}, and
  \item $\sU(c')\leq \sU(c)$, for all $c'\in \sCS$,
  \end{enumerate}
Then, $\tS$ is a shadow price.
\end{proposition}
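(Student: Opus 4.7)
The plan is to chain together three ingredients: the weak duality already established in \eqref{equ:weak-duality}, the optimality of $c$ inside the frictionless $\tS$-market (hypothesis (2)), and the statement of Proposition \ref{pro:equally-important} (via hypothesis (1)).

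First, I would observe that since $c\in\sCS$ (it appears as the consumption component of a triplet in $\sAS$ by condition \eqref{ite:1-equally-important}) and is a maximizer in the frictionless $\tS$-problem by hypothesis (2), we immediately get
\begin{equation*}
u(\tS)=\sup_{c'\in\sCS}\sU(c')=\sU(c).
\end{equation*}
Next, hypothesis (1) is precisely the assumption of Proposition \ref{pro:equally-important}, so that proposition delivers $c\in\sC$. By definition of $u$ in \eqref{equ:oc-prob},
\begin{equation*}
\sU(c)\le u.
\end{equation*}
Combined with the weak duality half of \eqref{equ:weak-duality}, namely $u\le u(\tS)$, we have the chain
\begin{equation*}
u\le u(\tS)=\sU(c)\le u,
\end{equation*}
which forces $u=u(\tS)$. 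By definition this says $\tS$ is a shadow price.

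There is no real obstacle here: the whole point of the preceding two propositions is to make this reduction mechanical, and the proof is just an assembly of their conclusions with the definition of a shadow price. The only thing worth double-checking is that hypothesis (1) includes membership of $(\vpz,\vp,c)$ in $\sAS$ (so that $c\in\sCS$ is legitimate and $u(\tS)=\sU(c)$ makes sense), which is explicitly part of condition \eqref{ite:1-equally-important}.
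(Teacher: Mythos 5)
Your proof is correct and is exactly the argument the paper intends: the proposition is presented there as a "direct consequence" of Proposition \ref{pro:equally-important} together with the weak duality $u\le u(\tS)$ from \eqref{equ:weak-duality}, and your chain $u\le u(\tS)=\sU(c)\le u$ is precisely that reduction.
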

\begin{remark}
  The route we take towards the existence of a shadow price may appear
to be somewhat circuitous. It is chosen so as to maximize the
intuitive appeal of the method and minimize (already formidable)
technical difficulties. 
\end{remark}

While the remainder of the paper is devoted to the implementation of
the above idea, we anticipate its final results here, for the convenience of
the reader. 
 An important by-product of our analysis is the  \emph{explicit
   characterization} of those  parameter values which result in a
 well-posed problem (the value function is finite).
 To the best of our knowledge, such a characterization is not present
 in the literature, and the finiteness of the value function is either
 assumed (as in \cite{ShrSon94}) or deduced from rather strong conditions 
 (as in \cite{KalMuh10}). 
\begin{theorem}\label{thm:well-posed}
Given  the environment parameters $\mu,\sigma\in
  (0,\infty)$ and the transaction costs $\uld\in (0,1)$, $\old>0$,
the following statements are equivalent:

\smallskip

\begin{figure}[h!]
  \noindent\begin{minipage}{0.47\textwidth}
%\item \label{ite:well-posed} 
  (1) The problem is well posed, i.e
  $$-\infty <u<\infty, $$ whenever
  $\Liq(\eta _B,\eta _S,\uS_0,\oS_0)>0$.

  \medskip

% \item \label{ite:parameters} 
  \noindent (2) The parameters of the model satisfy one
  of the following three conditions:\\
- $p\leq 0$,\\
- $0<p<1$ and $\mu<\sqrt{\frac{2\delta(1-p)\sigma^2}{p}}$,\\
- $0<p<1$, $\sqrt{\frac{2\delta(1-p)\sigma^2}{p}}\leq \mu <
\frac{\delta}p+\frac{(1-p)\sigma ^2}2$ and $$C(\mu,\sigma,
p,\delta)<\log(\tfrac{1+\old}{1-\uld}),$$
  where the function $C(\cdot,\cdot,\cdot,\cdot)$ is given by
  \eqref{ite:C-expression} in a closed form.
\end{minipage}
\hfill
\begin{minipage}{0.50\textwidth}
\begin{center}
  \includegraphics[width=\textwidth]{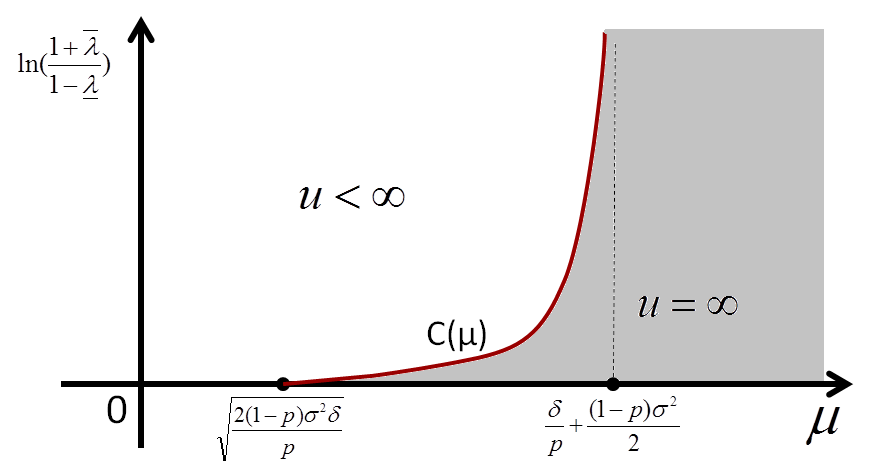}\\
  \caption{The well-posedness region.}
\end{center}
\end{minipage}
\end{figure}
\end{theorem}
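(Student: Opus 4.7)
My plan is to partition the parameter space along the trichotomy in (2) and, in each regime, exploit the universal upper bound $u \leq \inf_{\tS \in \sS} u(\tS)$ supplied by Proposition \ref{pro:C-in-CS}. The two ``easy'' sub-cases reduce to classical Merton theory. For $p \leq 0$ I would simply take $\tS = S$: the frictionless power-utility Merton problem is well-posed for every $\mu,\sigma,\delta>0$ whenever $p \leq 0$, so $u \leq u(S) < \infty$. The lower bound $u > -\infty$ is then witnessed by a single admissible strategy -- liquidate at time $0$ into cash of size $\ell := \Liq(\eta_B,\eta_S,\uS_0,\oS_0) > 0$ and consume deterministically at rate $c_t = \beta \ell e^{-\beta t}$ -- whose utility evaluates to an elementary convergent integral. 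Similarly, for $0<p<1$ with $\mu<\sqrt{2\delta(1-p)\sigma^2/p}$, the inequality is exactly the Merton finiteness condition $\delta > p\mu^2/(2(1-p)\sigma^2)$ at $\tS = S$, so again $u \leq u(S) < \infty$, while $u \geq 0$ by taking $c \equiv 0$.

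The genuinely hard case is $0<p<1$ with $\mu \geq \sqrt{2\delta(1-p)\sigma^2/p}$, where the frictionless problem at $\tS = S$ is itself ill-posed and transaction costs must do the work. Here I would use the shadow-price program of the paper: look for $\tS \in \sS$ of the form $\tS_t = S_t e^{Z_t}$ with $Z_t \in [\log(1-\uld), \log(1+\old)]$, where $Z$ is read off from the free-boundary problem derived in Section 3. An Ito decomposition of $\tS$ gives a modified drift and volatility, and the Merton well-posedness criterion for the $\tS$-problem becomes an integrability condition on the dynamics of $Z$. The ODE analysis in the later sections shows that a compatible smooth solution to the free-boundary problem exists iff the available width $\log((1+\old)/(1-\uld))$ of the range of $Z$ strictly exceeds the critical value $C(\mu,\sigma,p,\delta)$, computed in closed form from the ODE. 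When this strict inequality holds, Sections 4--6 construct a shadow price, and Proposition \ref{pro:when-shadow} yields $u = u(\tS) < \infty$. The necessity of the outer bound $\mu < \delta/p + (1-p)\sigma^2/2$ is a simple moment computation: since $\EE[e^{-\delta t} S_t^p] = S_0^p \exp\bigl(t(p\mu - p(1-p)\sigma^2/2 - \delta)\bigr)$, a nonnegative exponent lets a buy-and-hold strategy combined with a geometric consumption schedule $c_t = \alpha S_t$ (admissible after a single initial liquidation of bonds into shares) produce $\sU(c) = +\infty$, which handles the necessity in the ``supercritical'' regime.

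For the remaining converse implication, the case to rule out is $0<p<1$ with $\sqrt{2\delta(1-p)\sigma^2/p} \leq \mu < \delta/p+(1-p)\sigma^2/2$ and $C(\mu,\sigma,p,\delta) \geq \log((1+\old)/(1-\uld))$. I would argue by contradiction: if $u < \infty$ in this regime, a compactness argument applied to the shadow problem \eqref{equ:shadow} produces an optimizer $\tS^\ast \in \sS$ with $u(\tS^\ast) < \infty$; running the heuristic of Section 3 backwards on $\tS^\ast$ converts it into an admissible solution of the free-boundary problem whose no-trade width is at most $\log((1+\old)/(1-\uld)) \leq C$, contradicting the sharpness of $C$ as the minimal admissible width. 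The main obstacle in the whole program is precisely this sharp characterization of $C$ in closed form, together with the nonexistence of an admissible free-boundary solution below this threshold: this is the content of the detailed ODE analysis in the ``singular'' regime where the right-hand side develops a singularity, and it is the most delicate part of Sections 5 and 6 on which the entire theorem ultimately rests.
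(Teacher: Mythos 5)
Your treatment of the easy regimes is essentially sound and in places more direct than the paper's. For $p\leq 0$ and for $0<p<1$ with $\mu<G$, bounding $u\leq u(S)$ via Proposition \ref{pro:C-in-CS} with the consistent price $\tS=S$ and invoking the classical Merton finiteness condition is a legitimate shortcut (the paper only notes this as intuition in Section 5 and instead routes everything through the free-boundary construction, which it needs anyway for Theorem \ref{thm:main}). Your blow-up argument for $\mu\geq A$ is the same moment computation as Proposition \ref{lem:infinity}, except that a literal buy-and-hold cannot finance $c_t=\alpha S_t$; one must sell shares continuously, as in the paper's explicit choice $\varphi_t=(t+1)^{-(1-p)/p}$, $c_t=\tfrac{1-p}{p}(1-\uld)S_t(t+1)^{-1/p}$. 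That is a repairable imprecision.

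The genuine gap is the necessity direction in the critical regime $G\leq\mu<A$ with $\log(\tfrac{1+\old}{1-\uld})\leq C$. Your proposed contradiction -- extract an optimizer $\tS^\ast$ of the shadow problem by ``compactness'' and run the Section 3 heuristic backwards to produce a free-boundary solution of width $\leq C$ -- does not work as stated, for two reasons. First, the duality inequality \eqref{equ:weak-duality} reads $u\leq\inf_{\tS\in\sS}u(\tS)$, so $u<\infty$ gives no upper bound on the shadow value and no reason for a finite optimizer $\tS^\ast$ to exist; the set $\sS$ carries no compactness that would yield one. Second, converting a stochastic optimizer into a $C^2$ solution of \eqref{equ:HJB-g} is exactly the dynamic-programming/regularity step the paper is structured to avoid, and nothing in Sections 3--6 supplies it. The paper closes this case by an entirely different, forward argument (Proposition \ref{prop:wellposed}): for each $\alpha$ the already-verified solution $\ga$ computes the value $u(1,0,\old,0)=\tfrac1p T_u(\alpha,0)^{1-p}$ at the transaction-cost level $\log(1+\old)=G(\alpha)$; since $T_u(\alpha,0)\to\infty$ and $G(\alpha)\downto C$ as $\alpha\to\infty$ (Proposition \ref{pro:THIRD}), monotonicity of $u$ in $\old$ forces $u=\infty$ whenever $\log(\tfrac{1+\old}{1-\uld})\leq C$. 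You would need to adopt this monotonicity-and-limit argument (or something equivalent) to complete the proof.
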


%\begin{equation}
%\begin{split}\label{ite:C-expression}
%C(\mu,\sigma,p,\delta) = \int_0^{K(\mu,\sigma,p,\delta)} k \Big( \frac{l_u'(k)}{k-l_u(k)} - \frac{l_d'(k)}{k-l_d(k)} \Big) dk
%\end{split}
%\end{equation}
%where $l_u(k), l_d(k)$ are solutions to the quadratic equation $a(k) X^2 - b(k) X + c(k) = 0$ and $K(\mu,\sigma,p,\delta)$ is a smaller solution to $b(k)^2 = 4a(k)c(k)$  with
%\begin{equation}\label{K}
%\begin{split}
%\begin{array}{ll} a(k)=2p\delta(1+k), \\  b(k)= (2\delta + p(1-p)(2\mu-\sigma^2))k + 2p(1-p)\mu, \\ c(k)=(1-p)(2\mu+(p^2-1)\sigma^2)k + (1-p)^3 \sigma^2. \end{array},
%\end{split}
%\end{equation}
\begin{remark} For $\sigma =0$, the third condition in (2) above
 reduces to a well-known condition of Shreve and Soner. Indeed, the
 entire Section 12 in \cite{ShrSon94}, culminating in Theorem 12.2,
 p.~677, is devoted to the well-posedness problem with two bonds (i.e,
 with $\sigma=0$).
\end{remark}

As demonstrated by our second main result, 
the shadow-price approach not only allows us to fully characterize the
conditions under which a solution to the frictional
optimal investment/consumption problem exists, but it also sheds light on 
its form and regularity. 
\begin{theorem}
\label{thm:main}
 Given the %agent
% parameters $\ebs\in\R^2$, $p\in  (-\infty,1)$, $\delta>0$, the environment 
parameters $\mu,\sigma\in
  (0,\infty)$ and the transaction costs $\uld\in (0,1)$, $\old>0$, we
  assume that well-posedness conditions of
  Theorem \ref{thm:well-posed} hold. Then
\begin{enumerate}
%\item \label{ite:main-1} The problem is well posed, i.e
%$-\infty <u<\infty$
%\item \label{item-main-11} The parameters of the model satisfy 
%\begin{enumerate}
%\item either $p<0$ or
%\item $p\in [0,1)$, $\delta > \frac{p \mu ^2}{2 (1-p)\sigma ^2}$ 
%\end{enumerate}

\item \label{ite:main-3}

  There exist constants $\ux,\ox$ with $0< \ux < \ox$ and a function $g\in C^2[\ux,\ox]$ such that
 
  \begin{enumerate}
 % \item $\ux>0$ for $p\in (0,1)$ and $\ox<0$ for $p\in (-\infty,0]$, 
  \item $g'(x)>0$ for $x\in (\ux,\ox)$, and $g$ satisfies the equation 
\begin{equation}
   \label{equ:HJB-g}
   % \nonumber 
   \begin{split}
     \inf_{\Sigma,\theta\in\R} \Big( \tot \Sigma^2 \tfrac{x}{g'(x)} -
     \alpha_q(\Sigma,\theta) x - \beta(\theta) g(x) + \gamma(\theta)
	 \Big)=0,\ x\in (\ux,\ox),
   \end{split}
\end{equation}
where 
\begin{equation}%
    \label{equ:greeks}
    \begin{split}
& q=\tfrac{p}{1-p}, \ 
      \alpha_q(\Sigma,\theta) = \theta\sigma -\mu - \Sigma
      \Big(\tot \Sigma+\sigma-\theta(1+q) \Big),\\
      &\beta(\theta) =(1+q) \Big(\delta - \tot q  \theta^2\Big),\text{ and } 
      \gamma(\theta) =\begin{cases} \tot \theta^2, & p=0, \\
\sgn(p), & p\ne 0.
\end{cases}
\end{split}
\end{equation}
  \item \label{ite:main-4} the following boundary/integral conditions
	are satisfied:
	\begin{equation}\label{equ:integral-cond}
	 \begin{split}
	   g'(\ux+)=g'(\ox-)=0\text{ and }
	\int_{\ux}^{\ox} \tfrac{g'(x)}{x}\,
	dx=\log(\tfrac{1+\old}{1-\uld}).
	 \end{split}
	\end{equation}
  \item The function $h:[\ux,\ox]\to\R$, defined by 
\begin{equation}
   \label{equ:def-h}
   % \nonumber 
   \begin{split}
     h(x)=
     \begin{cases} (1-x) g'(x)+1,& p=0,\\
       q g(x) \left(g'(x)+1\right)-(q+1) x g'(x),& p\ne 0,\\
     \end{cases}
   \end{split}
\end{equation}
admits no zeros on $[\ux,\ox]$. 
\end{enumerate}
\item \label{ite:main-2}
For any $(\eta _B, \eta _S)$ such that 
\begin{equation}\label{strict:solvent}
\Liq(\eta _B,\eta _S,\uS_0,\oS_0)>0, 
\end{equation}
  there exists a shadow price $\prfi{\tS_t}$, of the form $\tS_t = S_t e^{f(X^{\hx}_t)}$, 
  where
  \begin{itemize}
	\item[-] $f(x)=\uy + \int_x^{\ox} \tfrac{g'(t)}{t}\, dt$, for
	  $x\in [\ux,\ox]$, 
	\item[-] the value of the constant $\hx$ is determined as in Proposition
	  \ref{prop:rx}, and
	\item[-]  $X^{\hx}$ is the unique
		solution of reflected SDE \eqref{equ:Skorokhod} with
		$X^{\hx}_0=\hx$.
  \end{itemize}
\item \label{ite:main-8} For any $(\eta _B, \eta _S)$ satisfying \eqref{strict:solvent},
 the value $u$ and an optimal investment/consumption strategy
 $(\hvpz,\hvp,\hc)$ for the main problem (\ref{equ:oc-prob}) are given by
$$u= \hu(\eta_B,\eta_S;\hx),\quad (\hvpz_t,\hvp_t,\hc_t)=(\hvp_t^{0,\hx},\hvp_t^{\hx},\hc_t^{\hx}), $$
where $\hx$ is defined in Proposition \ref{prop:rx}, and $\hu,
\hvp_t^{0,x},\hvp_t^{x}$ and $\hc_t^{x}$ in Lemma \ref{lem:complete-optimal}.
\end{enumerate}
\end{theorem}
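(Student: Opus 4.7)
The plan is to split the proof into two layers: (i) obtain a $C^2$ solution of the free-boundary problem asserted in item \eqref{ite:main-3}, and (ii) use that solution as the input to a verification-style construction that yields items \eqref{ite:main-2} and \eqref{ite:main-8} via Propositions \ref{pro:equally-important} and \ref{pro:when-shadow}.

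For item \eqref{ite:main-3}, I would first carry out the minimization over $(\Sigma,\theta)\in\R^2$ in \eqref{equ:HJB-g}. Since $\alpha_q$ is quadratic in $\Sigma$ and $\beta$, $\gamma$ are quadratic in $\theta$, the first-order conditions give $\Sigma^*(x)$ and $\theta^*(x)$ as explicit rational functions of $x$, $g(x)$, $g'(x)$. Substituting them back reduces \eqref{equ:HJB-g} to a single autonomous first-order ODE of the form $g'(x)=F(x,g(x))$ whose denominator is, up to a factor, exactly the function $h$ of \eqref{equ:def-h}; this is the structural reason why the non-vanishing of $h$ must be part of item \eqref{ite:main-3}. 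The conditions $g'(\ux+)=g'(\ox-)=0$ arise as the natural smooth-fit conditions at the reflection boundaries, while the integral constraint in \eqref{equ:integral-cond} encodes that the log-width of the candidate shadow-price corridor (constructed in item \eqref{ite:main-2}) must equal the log bid-ask spread $\log((1+\old)/(1-\uld))$. I would then run a two-parameter shooting argument: fix one endpoint, integrate the reduced ODE up to the next zero of $g'$ to locate the opposite endpoint, and adjust the initial data until the integral condition is met. Continuity/monotonicity of the shooting map in its parameters, combined with a case analysis of sign regimes driven exactly by the well-posedness dichotomy of Theorem \ref{thm:well-posed}, should yield existence and uniqueness of $(\ux,\ox,g)$.

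Given $g$, item \eqref{ite:main-2} is a direct construction. I define $f(x)=\uy+\int_x^{\ox} g'(t)/t\,dt$ on $[\ux,\ox]$ and choose $\uy=\log(1-\uld)$; the integral condition then forces $e^{f(\ux)}=1+\old$ and $e^{f(\ox)}=1-\uld$, so that $\uS_t\le S_t e^{f(x)}\le \oS_t$ for every admissible $x$. I would invoke classical Skorokhod theory to obtain the unique continuous semimartingale $X^{\hx}$ reflected in $[\ux,\ox]$ with initial value $\hx$, set $\tS_t:=S_t e^{f(X^{\hx}_t)}$, and apply It\^o's formula to verify that $\tS$ is an It\^o process; membership in $\sS$ is immediate from the range of $f$. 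Crucially, $\tS_t=\oS_t$ exactly on $\{X^{\hx}_t=\ux\}$ and $\tS_t=\uS_t$ exactly on $\{X^{\hx}_t=\ox\}$, i.e., on the sets where the reflecting local time drives the dynamics of $X^{\hx}$ upward and downward respectively.

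For item \eqref{ite:main-8}, the frictionless $\tS$-market is complete, so the $\tS$-problem admits a dual/martingale-method solution whose optimal consumption, wealth and portfolio $(\hc^{x},\hvp^{x},\hvp^{0,x})$ of Lemma \ref{lem:complete-optimal} can be read off from $g$. Two facts then remain to be checked: that this triplet is $\tS$-admissible and attains $u(\tS)$, which is standard once completeness and the concavity of $U$ are in hand, and that $\hvp^{x}$ has finite variation with its increasing (resp.\ decreasing) part carried by $\{\tS=\oS\}$ (resp.\ $\{\tS=\uS\}$). The latter follows because $\hvp^{x}$ changes only when $X^{\hx}$ visits the reflecting boundary, and the boundary correspondence identified in the previous paragraph puts each directional change on the correct equality set. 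Proposition \ref{pro:equally-important} then upgrades $\hc^{x}\in\sCS$ to $\hc^{x}\in\sC$, and Proposition \ref{pro:when-shadow} concludes that $\tS$ is a shadow price with $u=u(\tS)=\hu(\eta_B,\eta_S;\hx)$, where $\hx$ is the unique point such that the $\tS$-market optimal wealth at $t=0$ matches $\Liq(\eta_B,\eta_S,\uS_0,\oS_0)$ under the correspondence of Proposition \ref{prop:rx}. The serious obstacle throughout is item \eqref{ite:main-3}: the reduction to a first-order ODE is elementary, but controlling the solution near the singular locus $\{h=0\}$ in all parameter regimes allowed by Theorem \ref{thm:well-posed}, and showing the shooting procedure actually produces a smooth solution satisfying both smooth-fit and integral conditions, is the delicate analysis the paper defers to Sections 5 and 6.
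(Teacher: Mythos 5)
Your verification half (items \eqref{ite:main-2} and \eqref{ite:main-8}) follows the paper's route in Section \ref{sec:verify} essentially step for step: Skorokhod reflection for $X^{\hx}$, $\tS=Se^{f(X^{\hx})}$, complete-market duality for the $\tS$-problem, the support property of $d\hvp$ on the reflection sets, and then Propositions \ref{pro:equally-important} and \ref{pro:when-shadow}. That part is fine, modulo the stochastic representation of $g$ (Proposition \ref{pro:rep-g}), which is what actually identifies $g$ with the dual value and which you subsume under ``standard.''

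The genuine gap is item \eqref{ite:main-3}, which you reduce to a two-sentence shooting plan and explicitly defer. Two specific problems. First, a structural inaccuracy: the denominator of the reduced ODE $g'=L(\cdot,g)=P/Q$ is \emph{not} the function $h$ of \eqref{equ:def-h} up to a factor; $h$ is the common denominator of the pointwise optimizers $\hth,\hSig$ in \eqref{equ:optimizers}, and its non-vanishing is needed so that these are Lipschitz (hence the reflected SDE \eqref{equ:Skorokhod} is solvable) and $\sigma+\hSig$ stays away from $0$ --- it is a property to be \emph{verified} for the constructed $g$, not the locus where the ODE degenerates. The ODE degenerates on the curve $\bB=\{Q=0\}$, and the whole difficulty is the single point $P$ where $P=Q=0$, through which the maximal solution $\ga$ must pass whenever $\pi\geq 1$ and $\alpha\leq x_P$. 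Second, every step your shooting argument needs is nontrivial and unproven: (i) that $\ga$ meets $\bB$ only at $P$ (the containment curve $\tau$ and region $\Omega_0$ of Proposition \ref{Omega}); (ii) that $\ga$ extends continuously through $P$ and is in fact $C^2$ there (the singular-ODE Lemma \ref{lem:f} and Claims 1--4 of Proposition \ref{pro:SECOND}; note the solution is generically \emph{not} analytic at $P$); (iii) that $\beta_{\alpha}<\infty$ at all, which fails for $\mu\geq A$ and requires a contradiction argument when $G\leq\mu<A$ since $\bT$ is then an unbounded hyperbola; and (iv) the limits of the shooting map $G(\alpha)$ as $\alpha\searrow 0$ and $\alpha\nearrow x_N$ (or $\infty$), the latter producing the finite threshold $C$ of \eqref{ite:C-expression} that is exactly the well-posedness boundary of Theorem \ref{thm:well-posed}. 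Without (i)--(iv) the intermediate-value argument has no content, so the existence claim in item \eqref{ite:main-3} is not established by your proposal.
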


\begin{remark}
In \eqref{equ:integral-cond}, if $(\ux,g(\ux))=P$ ($P$ is a singular point described in Section 5), then the condition $g'(\ux)=0$ can be violated. For this exceptional case, Proposition \ref{pro:Y} is still valid: More precisely, in the part (2) of Proposition \ref{pro:Y}, we need to show that $\frac{g'(X_t^x)}{X_t^x} d\Phi_t^{x}\equiv 0$. If $(\ux,g(\ux))=P$, in \eqref{equ:Skorokhod}, the drift is positive and the volatility is zero, thus, we conclude that $d\Phi_t^{x\uparrow} \equiv 0$.
%that the drift term is positive and the volatility term vanishes when $X_t^x=\ux$. Thus, the reflection is unnecessary when $X_t^x=\ux$, i.e., $d\Phi_t^{x\uparrow}\equiv 0$.
%$\hat{\Gamma}(\ux)=0$ and $\ux \hat{\beta}(\ux)-q\hat{\theta}(\ux)\hat{\Gamma}(\ux)>0$, so, conclude that $\Phi_t^{\uparrow}\equiv 0$.
\end{remark}

%\begin
\section{A heuristic derivation of a free-boundary problem}
\label{sec:heuristic}
The purpose of the present section is to provide a heuristic derivation of 
a free-boundary problem for a one-dimensional first-order ODE which will
later be used to construct a shadow process and the solution of our
main problem. With the fully rigorous verification coming later, we
often do not pay attention to integrability or measurability
conditions and formally push through many steps in this section.

We start by splitting the shadow problem \eqref{equ:shadow}
according to the starting value
of the process $\tS$:
\begin{equation}
   \label{equ:shadow-2}
   % \nonumber 
   \begin{split}
 \tu  = \inf_{s_0\in [(1-\uld) S_0, (1+\old) S_0]} \quad 
\inf_{\tS\in\sS, \tS_0=s_0} \quad \sup_{c\in \sC(\tS)} \sU(c).
   \end{split}
\end{equation}

One can significantly simplify the analysis of the above problem by
noting that, since each $\tS$ is a strictly positive \ito{}-process,
we can always choose processes $\Sigma=\Sigma(\tS)$ and
$\theta=\theta(\tS)$ such that
\[ d\tS_t= \tS_t (\sigma+\Sigma_t)\, (dB_t+\theta_t\, dt),\
\tS_0=s_0. \] 
It pays to 
pass to the logarithmic scale, and introduce the
process $Y_t=\log(\tS_t/ S_t)$, whose dynamics is given by
\begin{equation}
   \label{equ:Y-dynamics}
   % \nonumber 
   \begin{split}
 dY_t = \alpha_0(\theta_t,\Sigma_t)\, dt + \Sigma_t\, dB_t,
   \end{split}
\end{equation}
on the natural domain $Y_t\in [\uy,\oy]$. Here,
$\uy=\log(1-\uld)$,
$\oy=\log(1+\old)$
and the function $\alpha_0$ is given by \eqref{equ:greeks}
for $q=0$.
 This way, the family of consistent price
processes is  parametrized by the set
\[\sP=\sets{(y,\Sigma,\theta)}{ y\in [\uy,\oy], (\Sigma,\theta)\in
  \sP(y)},\] where $\sP(y)$ is the set of all pairs of regular-enough
processes $(\Sigma,\theta)$ such that the process $\prfi{Y_t}$, given
by (\ref{equ:Y-dynamics}) and starting at $Y_0= y$, stays in the
interval $[\uy,\oy]$, a.s. 

We note that the market modeled by $\tS$ is complete, and that, thanks
to the absence of friction, the agent with the initial holdings
$(\ebs)$ will achieve the same utility as the one who immediately
liquidates the position, i.e., the one with the initial wealth of
$\eta_B+ S_0 e^y \eta_S$. Therefore, the standard duality theory
suggests that
\begin{equation}
   \label{equ:dual}
   % \nonumber 
   \begin{split}
 \sup_{c\in \sCS} \sU(c)=\inf_{z>0} \Big(
(\eta_B+S_0 e^y \eta_S) z+ \sV(z \sE(-\theta\cdot B)) \Big),
   \end{split}
\end{equation}
 where
$\tS$ and $(y,\Sigma,\theta)$ are
related as above and
\begin{equation}
   \label{equ:sV-def}
   % \nonumber 
   \begin{split}
V(z)=\sup_{c>0} ( U(c) - c z ), \quad \sV(Z)=\eit{ e^{-\delta t} V(e^{\delta t}Z_t)}.
   \end{split}
\end{equation}
\begin{remark}
  The Legendre-Fenchel transform $V$ of $U$ admits an explicit and
  simple expression in the case of a power utility. Indeed, we have
  \[ V(z)= \begin{cases} \tfrac{1}{q} z^{-q}, & p\ne 0,\\
    -1-\log(z), & p=0,
  \end{cases}\] where $q=p/(1-p)$. The parameter $q$ is the negative
  of the conjugate exponent of $p$, i.e.,
  $\tfrac{1}{p}-\tfrac{1}{q}=1$ ($q=0$, for $p=0$) and this
  relationship will be assumed to hold throughout the paper without
  explicit mention.
\end{remark}
Consequently, if we combine \eqref{equ:shadow-2} and \eqref{equ:dual}, we 
 obtain the following equality:
\begin{equation}
\label{equ:split-pr}
   \begin{split}
     \tu = \inf_{ (y,z)\in [\uy,\oy]\times (0,\infty) } \Big((\eta_B+S_0
     e^y \eta_S) z+ \inf_{(\Sigma,\theta)\in \sP(y)} \sV(z
     \sE(-\theta\cdot B)) \Big).
   \end{split}
\end{equation}
The expression above is particularly convenient because it separates
the shadow problem into a stochastic control problem over $\sP(y)$,
and a (finite-dimensional) optimization problem over $y$ and $z$,
which can be solved separately.

\subsection{A dimensional reduction}
\label{sub:dim-reduce}
Thanks to homogeneity ($\log$-homogeneity for $p=0$) of the map
$z\mapsto \sV(zZ)$, a dimensional reduction is possible in the inner
control problem in (\ref{equ:split-pr}).  Indeed, with
$\hd=\delta (1+q)$, we have
\[ \sV(z \sE(-\theta\cdot B)) =
\begin{cases}
-\tfrac{2+\log(z)}{\delta} + 
\eit{e^{-\delta t} \Big(-\log(\sE(-\theta\cdot B)_t)\Big)}, & p=0,\\
\tfrac{z^{-q}}{q}\,  
\eit{e^{-\hd t} \sE(-\theta\cdot B)_t^{-q}}  , & p\ne 0. \\
\end{cases}
\]
Hence,
\begin{equation}
   \label{equ:hfu-form}
   % \nonumber 
   \begin{split}
 \tu = \inf_{y\in [\uy,\oy]} \begin{cases}
  \tfrac{1}{\delta}\Big(-1+\log\Big(\delta(\eta_B+S_0 e^y \eta_S)\Big) 
  + w(y) \Big), & p=0 \\
\tfrac{(\eta_B+S_0 e^y \eta_S)^p}{p} \abs{w(y)}^{1-p}, &p\ne 0,
\end{cases}
   \end{split}
\end{equation}
where
\[ w(y) = \inf_{(\Sigma,\theta)\in \sP(y)}
\begin{cases}
 \eit{\delta e^{-\delta t} \Big(-\log(\sE(-\theta\cdot B)_t)\Big)}, & p=0,\\
\sgn(p)\eit{e^{-\hd t} \sE(-\theta\cdot B)_t^{-q}}, & p\ne 0.
\end{cases}
\]
In the heuristic spirit of the present section, it will be assumed
that the processes of the form $\theta\cdot B$ and $\sE(q\theta\cdot
B)$ are (true) martingales so that the definition of the stochastic
exponential and the simple identity
\begin{equation}
   \label{equ:identity}
   % \nonumber 
   \begin{split}
\sE(-\theta\cdot B)^{-q} = \sE( q\theta\cdot B) \exp\Big( \tot q(1+q)
\int_0^{\cdot} \theta_u^2\, du\Big)
   \end{split}
\end{equation}
 can be used to  simplify the
expression for $w$ even further:
\begin{equation}
   \label{equ:w}
   % \nonumber 
   \begin{split}
 w(y) = \inf_{(\Sigma,\theta)\in \sP(y)}
\begin{cases}
 \tfrac{1}{2} 
\eit{e^{-\delta t} \theta^2_t}, & p=0,\\
\sgn(p)\eittp{e^{-\hd t}  e^{\tot q (1+q) \int_0^t \theta^2_u\, du}},& p\ne 0.
\end{cases}
   \end{split}
\end{equation}
Here, the measure\footnote{One should rather call $\tPP$ a cylindrical
measure, but, given the heuristic nature of the present section, we do
not pursue this distinction.} $\tPP$  is (locally) given by $d\tPP=\sE( q\theta
\cdot B)\, d\PP$.  By Girsanov's theorem the process
\begin{equation}
   \label{equ:BQ}
   % \nonumber 
   \begin{split}
     \bar{B}=B-\int_0^{\cdot}q\theta_u\, du
\end{split}
\end{equation}
 is (locally) a $\tPP$-Brownian motion and the dynamics of the 
process $Y$ can be conveniently written as 
\[ dY_t = \alpha_q(\theta_t,\Sigma_t)\, dt + \Sigma_t\, d\bar{B}_t.\]
The expression inside the infimum in (\ref{equ:w}) involves a
discounted running cost. Hence, it fits in the classical framework of
optimal stochastic control, and a formal HJB-equation can be written
down. We note that even though the process $\sE(-\theta\cdot B)$
appears in the original expression for $w$, the simplification in
(\ref{equ:w}) allows us to drop it from the list of state variables
and, thus, reduce the dimensionality of the problem. Indeed, the formal
HJB has the following form:
\begin{equation}
   \label{equ:HJB-w}
   % \nonumber 
   \begin{split}
 \inf_{\Sigma,\theta} \Big( \tot \Sigma^2 w''(y) + \alpha_q(\Sigma,\theta) w'(y) -
\beta(\theta) w(y) + \gamma(\theta) \Big)=0,\ y\in (\uy,\oy)
   \end{split}
\end{equation}
where the functions $\beta$ and $\gamma$ are defined in \eqref{equ:greeks}. 

In order to fully characterize the optimization problem, we need to
impose the boundary conditions at $\uy$ and $\oy$ to enforce the
requirement that $Y$ stay within the interval $[\uy,\oy]$. These
amount to turning off the diffusion completely and leaving only the
drift in the appropriate (inward) direction when $Y$ reaches the
boundary. Thanks to the form of the function $\alpha_q$ and the
equation (\ref{equ:HJB-w}), as well as the expectation that $w'$ be
bounded on $[\uy,\oy]$, we are led to
the boundary condition
\begin{equation}
   \label{equ:HJB-w-bd}
   % \nonumber 
   \begin{split}
 w''(\uy)=w''(\oy)= +\infty.
   \end{split}
\end{equation}
It will be shown in the following section that, in addition to the
annihilation of the diffusion coefficient, (\ref{equ:HJB-w-bd}) will
ensure that the drift coefficient $\alpha_q$ will indeed have the
proper sign of at the boundary.

\subsection{Shadow price as a strategy in a game}
\label{game}
By interpreting the
problem of shadow prices as a game, 
one can arrive to the two-point boundary problem \eqref{equ:HJB-w},
\eqref{equ:HJB-w-bd} without the use of duality. 

Let $\xi (y)=\eta _B+\eta_S S_0 e^y$ denote the
initial wealth of the investor who is not subject to transaction
costs,  for a 
  fixed $\eta_B, \eta_S$ and $y\in [\oy,\uy]$. Finding a shadow price
  now amounts to the following:
\begin{enumerate}
\item given $(\theta, \Sigma)$, solving the optimal investment problem for an investor with initial wealth
$\xi (y)$
i.e., maximizing 
the expected discounted utility
$$\EE[\int _0^{\infty} e^{-\delta t}U(c_t)].$$
over $(\varphi, c)$ (numbers of shares   of  $\tS$ and the rate of
consumption). 
\item  minimizing the obtained value over $(\theta, \Sigma)$, and
\item finally, further minimizing over  $y\in [\oy,\uy].$
\end{enumerate}
Up to the last minimization over $y$, the above defines 
a stochastic game with the value
$$v(\xi ,y):=\inf _{\theta , \Sigma}\sup _{\pi, c}\EE[\int _0^{\infty}
e^{-\delta t}U(c_t)].$$
The corresponding Isaacs equation with a two dimensional state 
$(\Xi,Y)$ and the initial condition $(\Xi _0, Y_0)=(\xi, y)$ scales as
\[ v(\xi,y)=\frac {\xi^p}{p} |w (y)|^{1-p}\text{ 
(compare to \eqref{equ:hfu-form}).}\] Consequently, it can be reduced
to a one-dimensional equation for $w(y)$, with $y\in [\oy,\uy]$, which
turns out to be exactly the boundary-value problem
\eqref{equ:HJB-w}, \eqref{equ:HJB-w-bd}.
Once the game is solved, we can find a shadow price by simply
minimizing the value $v(\xi (y),y)$ over $y\in [\oy,\uy]$. 

We believe that a similar approach - namely of 
rewriting the problem of optimal investment and consumption with
transaction costs as a game, through the use of consistent prices -
works well in more general situations, e.g., when multiple assets are
present.

\subsection{An order reduction} Finally, based on the fact that the
equation (\ref{equ:HJB-g}) is autonomous, we introduce an order-reducing
change of variable. With $w'$  expected to be increasing and continuous
on $[\uy,\oy]$, we define the function $g:[\ux,\ox]\to \R$, with
$\ox=-w'(\uy)$ and $\ux=-w'(\oy)$ by $w(y)=g(-w'(y))$. This transforms
the equation (\ref{equ:HJB-g}) into
\begin{equation}
\nonumber 
   \begin{split}
     \inf_{\Sigma,\theta} \Big( \tot \Sigma^2 \tfrac{x}{g'(x)} -
     \alpha_q(\Sigma,\theta) x - \beta(\theta) g(x) + \gamma(\theta)
     \Big)=0,
   \end{split}
\end{equation}
with (free) boundary conditions $g'(\ux)=g'(\ox)=0$ and
$\int_{\ux}^{\ox} \tfrac{g'(x)}{x}\, dx=\oy-\uy.$ 
%We note that, since non-trivial maximization in
%(\ref{equ:hfu-form}) is expected,
%the function $w$ should be decreasing, and, consequently,
The free boundaries $\ux$ and $\ox$ are expected
to be positive.

\section{Proof of the main theorem: verification}
\label{sec:verify}

We start the proof of our main Theorem \ref{thm:main} with a verification
argument which establishes the implication
$\eqref{ite:main-3}\implies \eqref{ite:main-2}$. After that, in Lemma
\ref{lem:complete-optimal} and Proposition \ref{prop:rx}, we show 
\eqref{ite:main-8}.

Let us assume, therefore, that a triplet $(\ux,\ox,g)$, as in part
\eqref{ite:main-3} of Theorem \ref{thm:main}, is given (and fixed for
the remainder of the section), and that the function $h$ is defined as
in \eqref{equ:def-h}. Let $\hth: [\ux,\ox]\to \R$ and
$\hSig:[\ux,\ox]\to \R$ be the formal optimizers of (\ref{equ:HJB-g}),
i.e.,
\begin{equation}
   \label{equ:optimizers}
   % \nonumber 
   \begin{split}
%\begin{cases}
%\hth(x) = \sigma x/ h(x),\  \hSig(x) = -\sigma (1-x) g'(x)/h(x), & p=0 \\
%\hth(x) = -\sigma (1-p) (q
%   g'(x)-1) x/  h(x),\ \hSig(x) = 
%   -\sigma (q g(x)-x) g'(x)/ h(x), & p\ne 0.
% \end{cases}\\
\begin{cases}
\hth(x) = \frac{\sigma x}{ h(x)},\  \hSig(x) = -\frac{\sigma (1-x) g'(x)}{h(x)}, & p=0 \\
\hth(x) = -\frac{\sigma(1-p) x(q
   g'(x)-1)}{h(x)},\ \hSig(x) = 
   -\frac{\sigma (q g(x)-x) g'(x)}{h(x)}, & p\ne 0.
 \end{cases}
   \end{split}
\end{equation}
 Similarly, let $\halq,\halz, \hbt, \hgm : [\ux,\ox]\to \R$ be the
 compositions of the functions $\alpha_q$, $\alpha_0$, $\beta$ and
 $\gamma$ of \eqref{equ:greeks}  with $\hth$ and $\hSig$. Using the
 explicit formulas above, one readily checks that function $\hGm(x)=-
 \tfrac{x}{g'(x)} \hSig(x)$ admits a Lipschitz extension to
 $[\ux,\ox]$.

 While the equation (\ref{equ:HJB-g}) can be written in a more
 explicit way - which will be used extensively later - for now we
 choose to keep its current variational form. We do note, however, the
 following useful property of the function $g$:
\begin{proposition}
For all $x\in (\ux,\ox)$ with $g'(x)\ne 0$, we have
\begin{equation}
   \label{equ:envelope}
   % \nonumber 
   \begin{split}
 \tot \hSig^2(x) \tfrac{d}{dx} \left( \tfrac{x}{g'(x)}
\right)  - \halq(x) - g'(x) \hbt(x)=0.
   \end{split}
\end{equation}
\end{proposition}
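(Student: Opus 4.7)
The plan is to prove \eqref{equ:envelope} by an envelope-type argument, exploiting the fact that $(\hSig(x),\hth(x))$ realizes the infimum of the HJB functional \eqref{equ:HJB-g} and that this infimum equals zero.

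First, introduce the function
\[
F(x,\Sigma,\theta) = \tot \Sigma^2 \tfrac{x}{g'(x)} - \alpha_q(\Sigma,\theta)\,x - \beta(\theta)\,g(x) + \gamma(\theta),
\]
so that \eqref{equ:HJB-g} asserts $F(x,\Sigma,\theta)\ge 0$ for every $(\Sigma,\theta)\in\R^2$ and $F(x,\hSig(x),\hth(x))=0$ for all $x\in(\ux,\ox)$.

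Fix $x_0\in(\ux,\ox)$ with $g'(x_0)\ne 0$. Since $g\in C^2[\ux,\ox]$ and $g'$ is continuous, $g'$ is nonzero in a neighborhood of $x_0$, so the map $x\mapsto x/g'(x)$ is continuously differentiable there. Now \emph{freeze} the controls at their optimal values at $x_0$ and consider
\[
\psi(x) := F(x,\hSig(x_0),\hth(x_0)) = \tot\hSig(x_0)^2\tfrac{x}{g'(x)} - \halq(x_0)\,x - \hbt(x_0)\,g(x) + \hgm(x_0),
\]
defined on a neighborhood of $x_0$. By the HJB inequality applied at each $x$ with the fixed controls $(\hSig(x_0),\hth(x_0))$, we have $\psi(x)\ge 0$, while $\psi(x_0)=0$. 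Hence $x_0$ is a local minimum of $\psi$ and $\psi'(x_0)=0$.

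Differentiating $\psi$ at $x_0$ yields
\[
\psi'(x_0) = \tot\hSig(x_0)^2 \tfrac{d}{dx}\!\left(\tfrac{x}{g'(x)}\right)\bigg|_{x=x_0} - \halq(x_0) - \hbt(x_0)\,g'(x_0),
\]
and setting this equal to zero gives exactly \eqref{equ:envelope}. The only technical step is the regularity of $x\mapsto x/g'(x)$, which is immediate from $g\in C^2$ and $g'(x_0)\ne 0$; no differentiability of $\hSig$ or $\hth$ is required since we only vary $x$ with the controls held constant. This is the cleanest way to avoid the smoothness issues that would arise from a direct envelope theorem along $x\mapsto (\hSig(x),\hth(x))$, and it is the main (and essentially only) obstacle in the argument.
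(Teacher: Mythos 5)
Your proof is correct and is precisely the envelope-theorem route that the paper itself invokes (its proof is a one-line appeal to either direct computation or the Envelope Theorem); freezing the controls at $(\hSig(x_0),\hth(x_0))$ and using the first-order condition at the interior minimum of $\psi$ is the standard rigorous form of ``passing the derivative inside the infimum.'' The only implicit ingredient, shared with the paper, is that the formulas \eqref{equ:optimizers} genuinely attain the infimum in \eqref{equ:HJB-g}, so that $\psi(x_0)=0$.
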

\begin{proof}
  The equation (\ref{equ:envelope}) follows either by direct
  computation (using the explicit formulas \eqref{equ:optimizers} for
  $\hSig$ and $\hth$ above) or the appropriate version of the Envelope
  Theorem (see, e.g., Theorem 3.3, p.~475 in \cite{GinKey02}), which
  states, loosely speaking, that we ``pass the derivative inside the
  infimum'' in the 
 equation
  (\ref{equ:HJB-g}). 
\end{proof}
\subsection{Construction of the state processes}\label{XY}
The family of processes $\prfi{X^x_t}$, $x\in [\ux,\ox]$, defined in
this section, will play the role of state processes in the
construction of the shadow-price process below. Thanks to the
Lipschitz property of the function $h$,  for each $x\in
[\ux,\ox]$ there exists a unique solution $\Big(
\prfi{X^x_t},\prfi{\Phi^x_t}\Big)$ of the following reflected
(Skorokhod-type) SDE
\begin{equation}
\label{equ:Skorokhod}
\left\{   \begin{split}
     dX^x_t &= \Big( 
X^x_t \hbt(X^x_t)
-q \hth(X^x_t) \hGm(X^x_t)
\Big)
\, dt + \hGm(X^x_t)\,
     dB_t+ d\Phi^x_t,\\
X^x_0&=x\in [\ux,\ox].
   \end{split}
\right.
\end{equation}
Here, $\Phi$ is the ``instantaneous inward reflection'' term for the
boundary $\set{\ux,\ox}$, i.e., a continuous  process of finite variation
whose pathwise Hahn-Jordan decomposition $(\Phixu, \Phixd)$ satisfies
\[ 
\Phixu_t = \int_0^t \inds{X^x_u=\ux}\, d\Phixu_u,\text{ and } \Phixd_t
= \int_0^t \inds{X^x_u=\ox}\, d\Phixd_u,\ t\geq 0.\] The reader is
referred to \cite{Sko61} for a more detailed discussion of various
possible boundary behaviors of diffusions in a bounded interval, as
well as the original existence and uniqueness result 
 \cite[pp.~269-274]{Sko61}) for \eqref{equ:Skorokhod}.

 For $x\in [\ux,\ox]$, we define the function $f:[\ux,\ox]\to\R$ by
 $f(x)=\uy+\int_x^{\ox} \tfrac{g'(\xi)}{\xi}\, d\xi$ and the process
 $\prfi{Y^x_t}$ by $Y^x_t= f(X^x_t)$. 
  In relation to the heuristic discussion of Section
 \ref{sec:heuristic}, we note that $f$ plays the (formal) role of the
 inverse of the derivative $w'$. Moreover, the process $Y^x$ has the following properties:
 \begin{proposition}
\label{pro:Y}
 For $x\in [\ux,\ox]$, we have
 \begin{enumerate}
 \item \label{ite:Y-1}
 $Y^x_t\in [\uy,\oy]$, for all $t\geq 0$, a.s., and
 \item \label{ite:Y-2}
$Y^x_0=f(x)$ and $dY^x_t= \halz(X^x_t)\, dt
   + \hSig(X^x_t)\, dB_t$.
 \end{enumerate}
 \end{proposition}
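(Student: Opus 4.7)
The plan is to verify both items by, first, a direct check that $f$ maps $[\ux,\ox]$ bijectively onto $[\uy,\oy]$, and, second, a careful application of It\^o's formula to $Y^x_t=f(X^x_t)$ that exploits the envelope identity \eqref{equ:envelope} together with the boundary vanishing of $g'$.

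For item \eqref{ite:Y-1}, the function $f$ is $C^2$ on $[\ux,\ox]$ because $g\in C^2[\ux,\ox]$ and $\ux>0$. Since $g'(x)>0$ on $(\ux,\ox)$, we have $f'(x)=-g'(x)/x<0$, so $f$ is strictly decreasing. At the endpoints, $f(\ox)=\uy$ by the definition of $f$, and, by the integral constraint in \eqref{equ:integral-cond}, $f(\ux)=\uy+\log\tfrac{1+\old}{1-\uld}=\oy$. Since the Skorokhod construction guarantees $X^x_t\in[\ux,\ox]$ for all $t$ a.s., we conclude $Y^x_t=f(X^x_t)\in[\uy,\oy]$.

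For item \eqref{ite:Y-2}, $Y^x_0=f(X^x_0)=f(x)$ by construction. Applying It\^o's formula to $f(X^x_t)$ and using \eqref{equ:Skorokhod} gives
\begin{equation*}
dY^x_t = f'(X^x_t)\Bigl(X^x_t\hbt(X^x_t)-q\hth(X^x_t)\hGm(X^x_t)\Bigr)dt + \tot f''(X^x_t)\hGm(X^x_t)^2\,dt+f'(X^x_t)\hGm(X^x_t)\,dB_t + f'(X^x_t)\,d\Phi^x_t.
\end{equation*}
The reflection term vanishes: $\Phi^{x\uparrow}$ (resp.\ $\Phi^{x\downarrow}$) is supported on $\{X^x_t=\ux\}$ (resp.\ $\{X^x_t=\ox\}$), and the boundary conditions in \eqref{equ:integral-cond} give $g'(\ux+)=g'(\ox-)=0$, hence $f'(\ux)=f'(\ox)=0$. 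For the diffusion coefficient, $f'(x)\hGm(x)=-\tfrac{g'(x)}{x}\bigl(-\tfrac{x}{g'(x)}\hSig(x)\bigr)=\hSig(x)$ whenever $g'(x)\ne 0$, and the identity extends by continuity (with the Lipschitz extension of $\hGm$) to all of $[\ux,\ox]$, yielding the claimed volatility $\hSig(X^x_t)\,dB_t$.

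It remains to identify the drift with $\halz(X^x_t)$. On $\{g'(X^x_t)\ne 0\}$, using $f'(x)=-g'(x)/x$ and $f'(x)\hGm(x)=\hSig(x)$, the drift reduces to
\begin{equation*}
-g'(x)\hbt(x)-q\hth(x)\hSig(x)+\tot f''(x)\hGm(x)^2.
\end{equation*}
A short computation shows $\tot f''(x)\hGm(x)^2=\tot\hSig(x)^2\tfrac{d}{dx}\bigl(\tfrac{x}{g'(x)}\bigr)$, so the envelope identity \eqref{equ:envelope} lets us replace $\tot f''(x)\hGm(x)^2-g'(x)\hbt(x)$ by $\halq(x)$. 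Finally, the definition of $\alpha_q$ versus $\alpha_0$ in \eqref{equ:greeks} gives $\alpha_q(\Sigma,\theta)-\alpha_0(\Sigma,\theta)=q\Sigma\theta$, so $\halq(x)-q\hth(x)\hSig(x)=\halz(x)$, as required. The set $\{g'(X^x_t)=0\}$ is handled by continuity (using the Lipschitz extension of $\hGm$ and the remark that on the exceptional endpoint considered in the paper's remark, the reflection term $d\Phi^{x\uparrow}$ still vanishes because the drift alone keeps $X^x$ inside $[\ux,\ox]$). The main technical point, and really the only delicate one, is the simultaneous handling of the reflection term and of the possible singularity $g'(x)\to 0$ at the boundary; everything else is It\^o bookkeeping organized around \eqref{equ:envelope}.
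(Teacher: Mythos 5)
Your proof is correct and follows essentially the same route as the paper's: It\^o's formula applied to $f(X^x)$, the envelope identity \eqref{equ:envelope} to identify the drift as $\halz$, and the boundary conditions $g'(\ux+)=g'(\ox-)=0$ to annihilate the reflection term, with item \eqref{ite:Y-1} read off from the integral constraint in \eqref{equ:integral-cond}. You merely spell out more of the intermediate algebra (the identities $f'\hGm=\hSig$ and $\halq-q\hth\hSig=\halz$) than the paper does.
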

\begin{proof}
  Property \eqref{ite:Y-1} follows from the definition of the function
  $f$ and the assumption (c) of part $\eqref{ite:main-3}$ of Theorem
  \ref{thm:main}.  For \eqref{ite:Y-2}, \ito{}'s formula reveals the
  following dynamics of $Y^x$:
\begin{multline}
  \nonumber d Y^x_t= \Big( -g'(X^x_t) \hbt(X^x_t)- q\hSig(X^x_t)
  \hth(X^x_t)+ \tot \hSig^2(X^x_t)\tfrac{d}{dx} \big( \tfrac{x}{g'(x)}
  \big)\Big|_{x=X^x_t} \Big)\, dt+ \hSig(X^x_t)\,
  dB_t - \tfrac{g'(X^x_t)}{X^x_t}\, d\Phi^x_t.
\end{multline}
The identity \eqref{equ:envelope} allows us to simplify the above expression to 
\[ d Y^x_t= \halz(X^x_t)\, dt + \hSig(X^x_t)\,
dB_t - \tfrac{g'(X^x_t)}{X^x_t}\, d\Phi^x_t.\] Finally, since $g'$
vanishes on the boundary, the singular term disappears and we obtain 
the second statement. 
\end{proof}
\subsection{A stochastic representation for the function $g$}
For notational convenience, we define \[ W^x_t = \begin{cases}
  -\delta \log\Big(\sE\big(-\hth(X^x)\cdot B\big)_t\Big), & p=0,\\
  \sgn(p) \sE(-\hth(X^x) \cdot B)_t^{-q}, & p\ne 0.
\end{cases}
\]

\begin{proposition} \label{pro:rep-g}
For $x\in [\ux,\ox]$, we have 
\begin{equation}
   \label{equ:repr-g}
   % \nonumber 
   \begin{split}
g(x)= \EE[\int_0^{\infty} e^{-\hd t} W^x_t\, dt]. 
   \end{split}
\end{equation}
\end{proposition}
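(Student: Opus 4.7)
The plan is a Feynman--Kac / verification argument: I apply It\^o's formula to an appropriate discounted product of $g(X^x_t)$ with a stochastic exponential, then use the HJB equation \eqref{equ:HJB-g} evaluated at the optimizers $\hth,\hSig$, together with the envelope identity \eqref{equ:envelope}, to cancel the drift against the running cost $-e^{-\hd t}W^x_t$. Taking expectations and passing to the limit $T\to\infty$ yields \eqref{equ:repr-g}.

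Concretely, consider first $p\neq 0$, and set $M^x_t=\sE(-\hth(X^x)\cdot B)_t$ and $Z_t = \sgn(p)e^{-\hd t}(M^x_t)^{-q}$, so that $Z_t = e^{-\hd t}W^x_t$. Using the identity \eqref{equ:identity} one checks directly that $Z_t = \sgn(p)\,\sE(q\hth(X^x)\cdot B)_t\exp\bigl(-\int_0^t\hbt(X^x_u)\,du\bigr)$, and therefore $dZ_t = Z_t\bigl(q\hth(X^x_t)\,dB_t - \hbt(X^x_t)\,dt\bigr)$. Applying It\^o's formula to $Z_t g(X^x_t)$ using the dynamics \eqref{equ:Skorokhod}, the cross-variation between $Z$ and $g(X^x)$ contributes $Z_t q\hth(X^x_t)g'(X^x_t)\hGm(X^x_t)\,dt$, which exactly cancels the $-q\hth(X^x_t)\hGm(X^x_t)$ piece coming from the drift of $X^x$ multiplied by $g'$. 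After this cancellation the finite-variation part of $d(Z_tg(X^x_t))$ reduces to
\begin{equation*}
Z_t\Bigl[-\hbt(X^x_t)g(X^x_t) + X^x_t\hbt(X^x_t)g'(X^x_t) + \tfrac{1}{2}g''(X^x_t)\hGm(X^x_t)^2\Bigr]\,dt + Z_t g'(X^x_t)\,d\Phi^x_t.
\end{equation*}
Using $\hGm = -x\hSig/g'$ to rewrite the diffusion term and substituting the envelope identity \eqref{equ:envelope} in the form $\tfrac12\hSig^2\bigl(\tfrac{1}{g'}-\tfrac{xg''}{g'^2}\bigr)=\halq + g'\hbt$, the bracket collapses to $\tfrac12\hSig^2\tfrac{x}{g'} - \halq x - \hbt g$, which by the HJB equation \eqref{equ:HJB-g} (at the optimizers) equals $-\hgm=-\sgn(p)$. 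Since $g'(\ux)=g'(\ox)=0$ and $\Phi^x$ is supported on $\{X^x_t\in\{\ux,\ox\}\}$, the boundary term $Z_t g'(X^x_t)\,d\Phi^x_t$ vanishes identically. Integrating from $0$ to $T$ and taking expectations (the stochastic integral being at least a local martingale, handled by localization) gives
\begin{equation*}
\EE[Z_T\,g(X^x_T)] = g(x) - \EE\!\left[\int_0^T e^{-\hd t}W^x_t\,dt\right].
\end{equation*}

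The last step is to send $T\to\infty$ and argue that $\EE[Z_T\,g(X^x_T)]\to 0$. Since $g$ is continuous on the compact interval $[\ux,\ox]$ it is bounded, so the transversality condition reduces to showing $\EE[|Z_T|]$ decays (or at least stays controlled so that, together with dominated/monotone convergence on the right-hand side, one can take the limit). This is where the hard work lies: the bounds on $\hth$ on $[\ux,\ox]$ together with the exponent $\hd=\delta(1+q)$ from the discount and the quadratic gain in $\hbt$ must combine so that the product $e^{-\hd t}|M^x_t|^{-q}$ is integrable in the relevant sense; this is precisely the content of the well-posedness conditions from Theorem \ref{thm:well-posed}, and from the non-vanishing of $h$ in part (c) of \eqref{ite:main-3} which keeps $\hth$ bounded. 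For $p=0$ the same argument applies verbatim after replacing $(M^x_t)^{-q}$ by $-\delta\log M^x_t$ and using $\hgm(x)=\tfrac12\hth^2$, with the standard trick of writing $-\log M^x_t = \hth(X^x)\cdot B + \tfrac12\int_0^\cdot\hth(X^x_u)^2\,du$ to convert the It\^o lemma output into the form $e^{-\delta t}W^x_t\,dt$.

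The main obstacle is therefore not the formal computation, which is mechanical once the envelope identity \eqref{equ:envelope} is in hand, but rather the integrability/transversality step, which is the place where the well-posedness regime characterized in Theorem \ref{thm:well-posed} enters. A clean way to execute it is to work under the measure $d\tPP=\sE(q\hth(X^x)\cdot B)\,d\PP$ (cf.\ \eqref{equ:BQ}); under $\tPP$, $Z_t$ becomes a pure exponential in $\int_0^t\hbt(X^x_u)\,du$, the boundedness of $\hbt$ on $[\ux,\ox]$ then yields exponential-in-$T$ decay of $\EE^{\tPP}[|Z_T|]$, and Girsanov gives the desired $\PP$-integrability after verifying that $\sE(q\hth(X^x)\cdot B)$ is a genuine martingale (a Novikov-type check using boundedness of $\hth$ on $[\ux,\ox]$).
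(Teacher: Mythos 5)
Your It\^o/HJB/envelope computation is essentially the paper's: after the change of measure identity \eqref{equ:identity}, your process $Z_tg(X^x_t)$ is the same object as the paper's $\rho^x_tg(X^x_t)$ viewed under $\tPP_t$, and the cancellation of the drift down to $-\hgm$ and of the reflection term via $g'(\ux)=g'(\ox)=0$ is exactly identity \eqref{equ:g-expr}. (Two small bookkeeping points: with your $Z_t=\sgn(p)e^{-\hd t}(M^x_t)^{-q}$ you have $Z_0=\sgn(p)$, so your displayed identity is off by a factor $\sgn(p)$ when $p<0$ -- harmless; and the paper takes expectations under $\tPP_t$, where the integrand $\rho^x_tg'\hGm$ is bounded, precisely to avoid the localization/uniform-integrability issue you wave at with ``handled by localization''.)

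The genuine gap is the transversality step, which you defer and then propose to close incorrectly. Your ``clean way'' claims that boundedness of $\hbt$ on $[\ux,\ox]$ gives exponential-in-$T$ decay of $\EE^{\tPP}[|Z_T|]=\EE^{\tPP_T}[\rho^x_T]$; but boundedness is not enough -- you need $\hbt$ bounded \emph{below by a positive constant}, and for $0<p<1$ this can fail, since $\hbt(x)=(1+q)\bigl(\delta-\tot q\,\hth(x)^2\bigr)$ is negative wherever $\hth(x)^2>2\delta/q$. This is not a pathological worry: in the well-posed regime $G\leq\mu<A$ (large transaction costs) the corresponding frictionless problem already violates $\delta>\tot q(\mu/\sigma)^2$, so one cannot expect a positive lower bound on $\hbt$ along the path. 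Appealing to Theorem \ref{thm:well-posed} does not repair this either: the proposition is part of the machinery used to establish that theorem, and the paper's proof never invokes it. The paper closes the argument differently, and case by case: for $p<0$ one does have $\hbt\geq\hd$, hence $\rho^x_t\leq e^{-\hd t}$ and the boundary term dies; for $p=0$ one uses $\tPP=\PP$, $\rho^x_t=e^{-\delta t}$ and boundedness of $\hth$ and $g$; and for $p>0$ -- the case your argument misses -- one uses the identity \eqref{equ:g-expr} itself: since $g\geq0$ and $W^x\geq0$, the identity forces $\int_0^{\infty}e^{-\hd t}\EE[W^x_t]\,dt\leq g(x)<\infty$, hence $e^{-\hd t_n}\EE[W^x_{t_n}]\to0$ along some sequence $t_n\to\infty$, and the bound $\EE^{\tPP_t}[\rho^x_tg(X^x_t)]\leq|g|_{\infty}e^{-\hd t}\EE[W^x_t]$ then kills the boundary term along that sequence, which suffices because the running-cost integral is monotone in $t$. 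Without this (or an equivalent) argument, your proof is incomplete exactly in the parameter regimes that make the problem singular and interesting.
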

\begin{proof}
Using the equation (\ref{equ:HJB-g}),  relation
(\ref{equ:envelope}) and \ito{}'s formula, we can derive the following
dynamics for the process $g(X^x_t)$:
\begin{equation}%\label{}
    \nonumber 
    \begin{split}
      d g(X^x_t) &= 
\Big(\hbt(X^x_t) g(X^x_t) - \hgm(X^x_t) \Big)\, dt +g'(X^x_t) \hGm(X^x_t)\, d\bar{B}_t\\
    \end{split}
\end{equation}
where $\bar{B}$ is given by (\ref{equ:BQ}) with $\theta_t=\hth(X^x_t)$. On the other hand, if we 
set  $\rho^x_t=e^{-\int_0^t \hbt(X^x_u)\, du}$ and 
$H^x_t = \int_0^t \rho^x_u \hgm(X^x_u) \, du+ \rho^x_t g(X^x_t)$, we obtain
that 
\[ dH^x_t = \rho^x_t g'(X^x_t) \hGm(X^x_t)\, d\bar{B}_t.\] Girsanov's
theorem (applicable thanks to the boundedness of $\hth$) implies that
$\bar{B}$ is a Brownian motion on $[0,t]$, under the measure
$\tPP_t$, defined by $d\tPP_t=\sE(q \hth\cdot B)_t\, d\PP$. Therefore,
\[
\EE^{\tPP_t}\left[\rho^x_t g(X^x_t)\right] + \EE^{\tPP_t}\left[
  \int_0^t \rho^x_u \hgm(X^x_u)\, du\right] = \EE^{\tPP_t}[
H_t]=\EE^{\tPP_t}[H_0]=g(x),\] where the boundedness of the integrands
was used to do away with the  stochastic integrals with respect to $\bar{B}$.
The exponential identity (\ref{equ:identity}) now implies that
\begin{equation}
   \label{equ:g-expr}
   % \nonumber 
   \begin{split}
 g(x) = \EE^{\tPP_t}[\rho^x_t g(X^x_t)]+
\EE[\int_0^t e^{-\hd u}W^x_u \, du]+
\begin{cases}
0 ,& p\ne 0\\
\tot e^{-\delta t} \EE[ \int_0^t \hth(X^x_s)^2\, ds], & p=0\\
\end{cases}
   \end{split}
\end{equation}

For $p=0$, we can use the fact that $\hth$ and $g$ are bounded to
conclude that 
\[ \EE^{\tPP^t}[\rho^x_t g(X^x_t)]= \EE[ e^{-\delta t} g(X^x_t)] \to
0\text{ and }e^{-\delta t} \EE[ \int_0^t \hth(X^x_s)^2\, ds] \to 0.\]
These two limits can now easily be combined with \eqref{equ:g-expr} to yield
\eqref{equ:repr-g}. 

To deal with the case $p>0$, we note that non-negativity of $g$ and
$W^x_t$ in \eqref{equ:g-expr} implies that 
\begin{equation}
   \label{equ:fin-p1}
   % \nonumber 
   \begin{split}
\int_0^{\infty} e^{-\hd t}
\EE[W^x_t]\, dt<\infty.
   \end{split}
\end{equation}
Moreover, with
$\abs{g}_{\infty}=\sup_{x\in [\ux,\ox]} \abs{g(x)} $, we have
\[\EE^{\tPP_t}[\rho^x_t g(X^x_t)]\leq \abs{g}_{\infty}
\EE^{\tPP^t}[\rho^x_t]=\abs{g}_{\infty} e^{-\hd t} \EE[ W^x_t].\]
Therefore, it is enough to observe that $e^{-\hd t_n}
\EE[W^x_{t_n}]\to 0$, along sequence $\seq{t}$ with $t_n\to\infty$
which exists thanks to (\ref{equ:fin-p1}).

For $p<0$, the fact that $\rho^x_t \leq e^{-\hd t}$ implies that
$\EE^{\tPP_t}\left[\rho^x_t \abs{g(X^x_t)}\right] \leq
\abs{g}_{\infty}e^{-\hd t} \to 0$, which, in turn, 
together with \eqref{equ:g-expr}, implies \eqref{equ:repr-g}.
\end{proof}
\subsection{The Shadow Market}
For $x\in [\ux,\ox]$, we define the process $\sx_t= S_t e^{Y^x_t}$ and observe that, by
\ito{}'s formula, it admits the following dynamics: 
\begin{equation}
   \label{equ:sx-dynamics}
   % \nonumber 
   \begin{split}
 d\sx_t = \sx_t\Big(\sigma+\hSig(X^x_t)\Big)\Big( \hth(X^x_t)\, dt+
dB_t\Big),\ \sx_0= S_0 e^{f(x)}.
   \end{split}
\end{equation}
 The goal of this subsection is to show that $\sx_t$ is a
shadow price for the appropriate choice of the initial value $x\in
[\ux,\ox]$. Note that $\sigma+\hSig(X^x_t)$ is bounded away from $0$, by the properties of $g$.

Recall from subsection \ref{sse:shadow-pr} that $\hu(\sx)$ is the
value of the optimal consumption problem $\sU(c)\to\max$ in the
(frictionless) financial market driven by $\sx$ for an agent with the
initial holding $\eta_B$ in the bond, and $\eta_S$ in the stock (the
$\sx$-problem).  The following lemma, which describes the optimal
investment/consumption policy that achieves the maximum, will play a key role in
the proof of the shadow property of the process $\sx$. To simplify the
notation, we introduce the following shortcuts:
\[ \xi(x)= \eta_B+S_0 e^{f(x)}\eta_S,\ \Pi(x) = \begin{cases} 
x, & p=0,\\ \tfrac{x}{q g(x)}, & p\ne 0,
\end{cases}\ \text{ and } \ 
K(x) = \begin{cases}
  \delta & p=0, \\ \tfrac{1}{\abs{g(x)}}, & p\ne 0.\\
\end{cases}
\]
\begin{lemma}
\label{lem:complete-optimal}
  For $x\in [\ux,\ox]$ and the initial positions $(\eta_B,\eta_S)$ with
  $\eta_B+S_0 e^{f(x)}\eta_S\geq 0$, we have
\begin{equation}%
\label{equ:hu}
    \begin{split}
 \hu(\eta_B,\eta_S;x) = \begin{cases}
\tfrac{1}{\delta} \Big(-1+ \log(\delta \xi(x)) + g(x) \Big), & p=0,\\
\tfrac{1}{p} \xi(x)^p \abs{g(x)}^{1-p}, & p\ne 0.
\end{cases}
\end{split}
\end{equation}
Moreover, with the processes $\prfi{\hpix_t}$, $\prfi{\hkx_t}$ and
$\prfi{\hvx_t}$ defined by
\begin{equation}
   \label{equ:pi-c-V}
   % \nonumber 
   \begin{split}
 \hpix_t= \Pi(X^x_t),\quad \hkx_t= K( X^x_t), \quad
\hvx_t = \xi(x) \sE\Big(
\int_0^{\cdot} \tfrac{\hpix_u}{\sx_u}\, d\sx_u - \int_0^{\cdot}
\hkx_u\, du \Big)_t,
   \end{split}
\end{equation}
the optimal strategy $(\hvpzx, \hvpx, \hcx)$ for the $\sx$-problem 
is given for $t\geq 0$ by
\begin{equation}
   \label{equ:optimal}
\hcx_t = \hvx_t \hkx_t,\qquad 
\hvpzx_t =
\hvx_t(1-\hpix_t)\quad \text{ and }\quad \hvpx_t=\tfrac{\hvx_t \hpix_t}{\sx_t}.
\end{equation}
\end{lemma}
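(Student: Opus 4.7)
The frictionless market driven by $\sx$ is complete with respect to the single Brownian motion $B$, with volatility $\sigma+\hSig(X^x_t)$ bounded away from zero and market price of risk $\hth(X^x_t)$ bounded on $[\ux,\ox]$. The unique state-price density (up to scaling) is therefore $Z^x_t:=\sE(-\hth(X^x)\cdot B)_t$, which is a true martingale by the boundedness of $\hth$ and Novikov's criterion. I would prove the lemma by the classical martingale/duality method for CRRA utility in three steps: admissibility of the candidate triplet, direct evaluation of $\sU(\hcx)$, and a matching dual upper bound for every competing $c'\in\sCS$.

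\emph{Admissibility.} Immediate algebra gives $\hvpzx_t+\hvpx_t\sx_t=\hvx_t$, and differentiating the stochastic-exponential representation of $\hvx$ in \eqref{equ:pi-c-V} yields the self-financing identity $d\hvx_t=\hvpx_t\,d\sx_t-\hcx_t\,dt$, which is \eqref{equ:V}. Strict positivity and boundedness of $\hvx$, $\hpix$, $\hkx$, $\hcx$ on $[\ux,\ox]$ follow from continuity of $g$, $g'$ and the no-zero property of $h$ in part (c) of Theorem \ref{thm:main}\eqref{ite:main-3}, together with the explicit formulas \eqref{equ:def-h}, $\Pi$, $K$. Consequently $(\hvpzx,\hvpx,\hcx)\in\sA(\sx)$ and $\hcx\in\sCS$.

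\emph{Value computation.} Using \eqref{equ:sx-dynamics} and the optimizers \eqref{equ:optimizers}, a short Girsanov/Itô calculation paralleling the one in the proof of Proposition \ref{pro:rep-g} reveals that the discounted marginal utility $e^{-\delta t}U'(\hcx_t)$ is a positive deterministic multiple of $Z^x_t$ scaled by $e^{-\delta t}$ times powers of $g$ evaluated along $X^x$. Substituting into $\sU(\hcx)$, applying the exponential identity \eqref{equ:identity} and then invoking the representation \eqref{equ:repr-g} of $g(x)$, collects the factor $\xi(x)^p\abs{g(x)}^{1-p}/p$ (or its $\log$-analogue when $p=0$), which is precisely the right-hand side of \eqref{equ:hu}.

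\emph{Dual upper bound.} For any $c'\in\sCS$ financed by a wealth process $V'$ with $V'_{0-}=\xi(x)$, the process $Z^x_tV'_t+\int_0^t Z^x_uc'_u\,du$ is a nonnegative local martingale, hence a supermartingale, producing the budget constraint $\EE\!\int_0^\infty Z^x_tc'_t\,dt\le\xi(x)$. The Fenchel--Young inequality $e^{-\delta t}U(c'_t)\le e^{-\delta t}V(z^* e^{\delta t}Z^x_t)+z^*Z^x_tc'_t$, integrated in time and expectation with $z^*$ chosen as the dual minimizer, gives $\sU(c')\le\sV(z^*Z^x)+z^*\xi(x)$. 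The explicit form of $\sV$ on power/log utilities, combined again with \eqref{equ:identity} and \eqref{equ:repr-g}, reduces the right-hand side to the same value as in \eqref{equ:hu}, closing the gap.

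The principal obstacle is integrability bookkeeping: one must justify that $Z^x$ and $\sE(q\hth(X^x)\cdot B)$ are true martingales on $[0,\infty)$, that the supermartingale property extends to $t=\infty$ so that the budget constraint is tight, and that $(U(\hcx))^-$ is integrable when $p<0$. These are exactly the localization and uniform-bound estimates already deployed in the proof of Proposition \ref{pro:rep-g} --- boundedness of $\hth$ and $g$ on $[\ux,\ox]$ is what powers them --- so the same arguments transfer here with minor modifications.
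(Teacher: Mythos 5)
Your proposal is correct and follows essentially the same route as the paper: where the paper simply cites the standard complete-market duality theorem (Karatzas--Shreve, Theorem 9.11) for the dual representation of $\hu(\eta_B,\eta_S;x)$, you inline its proof via the supermartingale/Fenchel--Young argument, and both arguments then reduce the explicit formula to the exponential identity \eqref{equ:identity} and the representation of $g$ in Proposition \ref{pro:rep-g}, finishing with a direct verification of the candidate strategy. One small caution on your ``value computation'' step: the first-order condition must exhibit $e^{-\delta t}U'(\hcx_t)$ as an exact \emph{constant} multiple of $\sE(-\hth(X^x)\cdot B)_t$ --- the factor $\abs{g(x)}$ enters only through that constant (via $\xi(x)$ and the dual minimizer $z^*$), not as a path-dependent multiple involving $g(X^x_t)$, since otherwise optimality would fail.
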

\begin{proof}
The standard complete-market duality
theory (see, e.g.,  Theorem 9.11, p.~141 in \cite{KarShr98}) implies that
$$ \hu(\eta_B,\eta_S;x) = \inf_{z>0} \left( (\eta_B+S_0 e^{f(x)} \eta_S) z +
  \sV\Big(z\sE\big(-\hth(X^x)\cdot B\big)\Big)\right) $$
where dual functional $\sV$ is as in (\ref{equ:sV-def}). Furthermore, following
the computations that lead to (\ref{equ:hfu-form}) in subsection
\ref{sub:dim-reduce}, and using the representation of Proposition
\ref{pro:rep-g}, we get \eqref{equ:hu}

Once the form of the value function $\hu$ has been determined, it is a
routine computation  derive the
expressions for the optimal investment/consumption strategy. Indeed,
let the processes $\prfi{\hpix_t}$ be given by
\begin{equation}
   \label{equ:inner-pi}
   % \nonumber 
   \begin{split}
 \hpix_t=
\tfrac{ \hth(X^x_t)}{(1-p)\big( \sigma+\hSig(X^x_t)\big)}
+\begin{cases}
 0, & p=0,\\
-\frac{X^x_t}{g(X^x_t)} 
\frac{ \hSig(X^x_t)}{ \sigma+\hSig(X^x_t)}, & p\ne 0,
\end{cases}
   \end{split}
\end{equation}
 and $\prfi{\hkx_t}$ and $\prfi{\hvx_t}$  as in the
statement. Then, one readily checks that the triplet $(\hvpzx,\hvpx,
\hcx)$ given by \eqref{equ:optimal} is an optimal
investment/consumption strategy.

Finally, the equality between the form \eqref{equ:inner-pi} and the
simpler one given in \eqref{equ:pi-c-V} in the statement follows by direct
computation where one can use the explicit formulas for the functions $\hSig$ and $\hth$ from \eqref{equ:optimizers}. 
\end{proof}
\begin{proposition}\label{prop:rx}
  Let $(\eta_B,\eta_S)$ be an admissible initial wealth, i.e., such
  that $\Liq(
  \eta_B,\eta_S,(1-\uld) S_0, (1+\old) S_0)\geq 0$. For the function
  $r:[\ux,\ox]\to\R$, given by $r(x) = \eta_S S_0 e^{f(x)} (1-\Pi(x)) -
  \eta_B \Pi(x)$, let the constant $\hx\in [\ux,\ox]$ be
  defined by
\[ \hx =
\begin{cases}
\ox, & r(x)>0\text{ for all } x\in [\ux,\ox]\\
\ux, & r(x)<0\text{ for all } x\in [\ux,\ox]\\
\text{a solution to } r(x)=0, & otherwise.
\end{cases}
\]
Then $\hS=\hat{S}^{\hx}$ is a
shadow price.
\end{proposition}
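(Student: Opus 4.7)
The plan is to verify the hypotheses of Proposition \ref{pro:when-shadow} with $\tS=\hS=\hat{S}^{\hx}$ and candidate triplet $(\hvp^{0,\hx},\hvp^{\hx},\hc^{\hx})$ from Lemma \ref{lem:complete-optimal}; optimality in $\sCS$ is already built into that lemma and supplies condition~(2) of Proposition \ref{pro:when-shadow}. What remains is to check conditions \eqref{ite:1-equally-important}--\eqref{ite:3-equally-important} of Proposition \ref{pro:equally-important} together with the correct initial values $\hvp^{0,\hx}_{0-}=\eta_B$ and $\hvp^{\hx}_{0-}=\eta_S$, possibly via a single initial jump.

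First I would record the boundary identification coming from \eqref{equ:integral-cond}: $f(\ux)=\oy$ and $f(\ox)=\uy$, so $\hS_t=\oS_t\Leftrightarrow X^{\hx}_t=\ux$ and $\hS_t=\uS_t\Leftrightarrow X^{\hx}_t=\ox$. Since by construction $\Phi^{\hx\uparrow}$ charges only $\set{X^{\hx}=\ux}$ and $\Phi^{\hx\downarrow}$ only $\set{X^{\hx}=\ox}$, any process whose positive (resp.\ negative) variation is driven by $d\Phi^{\hx\uparrow}$ (resp.\ $d\Phi^{\hx\downarrow}$) is automatically supported on $\set{\hS=\oS}$ (resp.\ $\set{\hS=\uS}$), which is exactly what condition \eqref{ite:3-equally-important} requires.

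The core computation is to show that $\hvp^{\hx}$ is of finite variation with exactly this support structure. I would apply \ito's formula to $\hvp^{\hx}_t=\hV^{\hx}_t\,\Pi(X^{\hx}_t)/\hS_t$, using the frictionless wealth equation $d\hV^{\hx}_t=\hvp^{\hx}_t\,d\hS_t-\hc^{\hx}_t\,dt$ implicit in Lemma \ref{lem:complete-optimal}, the reflected SDE \eqref{equ:Skorokhod} for $X^{\hx}$, and the dynamics \eqref{equ:sx-dynamics} of $\hS$. Substituting the explicit optimizers \eqref{equ:optimizers}, the $dB_t$ coefficient should cancel --- this is the frictionless optimality condition encoded in the choice $\hat{\pi}^{\hx}=\Pi(X^{\hx})$ --- while the absolutely continuous drift should cancel thanks to the HJB equation \eqref{equ:HJB-g} together with the envelope identity \eqref{equ:envelope}. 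What survives is of the form $\zeta(X^{\hx}_t)\,d\Phi^{\hx}_t$ for a continuous function $\zeta$, establishing \eqref{ite:2-equally-important} and \eqref{ite:3-equally-important} in one shot.

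The initial jump is dictated by the sign of $r(\hx)$. Lemma \ref{lem:complete-optimal} places the strategy at $\hvp^{\hx}_0=\xi(\hx)\Pi(\hx)/\hS_0$, and a direct algebraic manipulation shows $\eta_S-\hvp^{\hx}_0=r(\hx)/\hS_0$ (with the analogous expression for $p=0$). The three cases in the definition of $\hx$ ensure the jump is feasible: if $r>0$ throughout $[\ux,\ox]$ we take $\hx=\ox$, so $\hS_0=\uS_0$ and $\hvp^{\hx}_0\le\eta_S$, corresponding to an initial \emph{sale} at the bid; if $r<0$ everywhere we take $\hx=\ux$ and buy at the ask; otherwise we pick $\hx$ with $r(\hx)=0$ by the intermediate value theorem and no jump is needed. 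An arithmetic check then shows that the cash value $\hvp^{0,\hx}_0$ supplied by the frictionless wealth equation agrees with $\eta_B-\oS_0\hvp^{\hx\uparrow}_0+\uS_0\hvp^{\hx\downarrow}_0$, precisely because $\hS_0$ coincides with the relevant bid/ask. Invoking Propositions \ref{pro:equally-important} and \ref{pro:when-shadow} then closes the argument. The main obstacle is the \ito{} calculation above: verifying the two cancellations requires carefully combining the explicit forms of $\hth,\hSig,\hGm,\Pi$ with \eqref{equ:HJB-g} and \eqref{equ:envelope}, which is mechanical but intricate.
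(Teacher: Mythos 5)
Your proposal is correct and follows essentially the same route as the paper's proof: reduce to Proposition \ref{pro:when-shadow} via Lemma \ref{lem:complete-optimal}, resolve the time-zero jump through the sign analysis of $r(\hx)$ and the identification $f(\ux)=\oy$, $f(\ox)=\uy$ (so the jump trades at the correct bid/ask price in each of the three cases), and show that for $t>0$ the dynamics of $\hvp$ reduce, after the It\^o/HJB/envelope cancellations, to a pure reflection term $\tfrac{\hvp_t}{X^{\hx}_t}\,d\Phi^{\hx}_t$ supported on $\set{\hS=\oS}\cup\set{\hS=\uS}$. The paper likewise leaves that cancellation as a ``tedious but straightforward computation,'' so nothing essential is missing.
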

\begin{remark} The three possible cases in Proposition \ref{prop:rx} relate to whether the initial condition is outside the no-transaction region (above or below) or inside it. It is easy to check that $\hx$ minimizes the value
$\frac{\xi (x)^p}{p}|g(x)|^{1-p}$ for 
$\xi (x)=\eta _B+\eta _S S_0 e^{f(x)},$ as mentioned in Subsection \ref{game}.
\end{remark}
\begin{proof}
  The idea of the proof is to show that the triplet $(\hvpz, \hvp,
  \hc)$ of Lemma \ref{lem:complete-optimal} satisfies the conditions
  of Proposition \ref{pro:when-shadow}.  Since $\hc$ is the optimal
  consumption process, it will be enough to show that conditions
  \eqref{ite:2-equally-important} and \eqref{ite:3-equally-important}
  of Proposition \ref{pro:equally-important} hold. The expression
  \eqref{equ:optimal} implies that the processes $\hvpz$ and $\hvp$
  are continuous, except for a possible jump at $t=0$.

  Let us, first, deal with the jump at $t=0$. The conditions
  \eqref{ite:2-equally-important} and \eqref{ite:3-equally-important}
  of Proposition \ref{pro:equally-important} at $t=0$ translate into
  the following equality:
  \[ \hvpz_{0+} -\eta_B +(1+\old)S_0 (\vp_{0+}-\eta_S)^+ - (1-\uld)S_0
  (\vp_{0+}-\eta_S)^- =0,\]
which, after \eqref{equ:optimal} is used, becomes
\begin{equation}
   \label{equ:at-t=0}
   % \nonumber 
   \begin{split}
	 e^{f(x)} r(x) = G( r(x) ),\text{ where }
G(x) = (1-\uld) x^+ - (1+\old) x^-.
   \end{split}
\end{equation}
If $r(x)=0$ admits a solution $\hx\in [\ux,\ox]$, then
$x=\hx$ clearly satisfies the equation (\ref{equ:at-t=0}). On the
other hand, if $r(x)\ne 0$, for all $x\in [\ux,\ox]$, then by
continuity, either $r(x)>0$, for all $x\in [\ux,\ox]$ or $r(x)<0$, for
all $x\in [\ux,\ox]$. Focusing on the first possibility (with the
second one being similar) we note that in this case $G(r(x)) = (1-\uld)
r(x)$, and so, if we pick $\hx=\ox$, we get $e^{f(\hx)}r(\hx)=(1-\uld)
r(\hx)= G(r(\hx))$.

Next, we deal with the trajectories of the processes $\hvpz$ and
$\hvp$ for $t>0$.  It is a matter of a tedious but entirely
straightforward computation (which can be somewhat simplified by
passing to the logarithmic scale and using the identities
(\ref{equ:HJB-g}) and (\ref{equ:envelope})) to obtain the following
dynamics:
\begin{equation}
   \label{equ:rxt}
   % \nonumber 
   \begin{split}
     d\hvp_t = \tfrac{\hvp_t}{X^x_t}\, d\Phi^x_t.
     \end{split}
 \end{equation}
 Thanks to the fact that $\Phi^x$ is a finite-variation process which
 decreases only when $X^x_t=\ox$ (i.e., $\sx_t=\uS$) and 
 increases only when $X^x_t=\ux$ (i.e., $\sx_t=\oS$), the 
 conditions \eqref{ite:2-equally-important} and
 \eqref{ite:3-equally-important} of Proposition
 \ref{pro:equally-important} hold. 
\end{proof}

\section{Main ideas behind the proof of  existence for the free-boundary problem}
\label{sec:ODE}
Having presented a verification argument in the previous section, we
turn to the analysis of the (non-standard) free-boundary problem
\eqref{equ:HJB-g}, \eqref{equ:integral-cond}. 
We start by remarking that that
the equation (\ref{equ:HJB-g}) simplifies to the form
\[ g'(x) = L(x,g(x)),\text{ where } L(x,z) =\frac{P(x,z)}{Q(x,z)},\]
and where the second-order polynomials
 $P(x,z)$ and $Q(x,z)$ are given by 
\begin{equation}
\nonumber 
   \begin{split}
P(x,z)&=
\begin{cases}
-2q\delta z^2 + 2p(\mu x +\sgn(p))z - (1-p)^2 \sigma^2 x^2, &p\neq 0\\
-2\delta z - \sigma^2 x^2 + 2\mu x, & p=0\\
\end{cases}\\
Q(x,z)&=
\begin{cases}
-P(x,z) + (p\sigma^2-2\hd)xz + 2(\mu-(1-p)\sigma^2)x^2 + 2\sgn(p)x , &p\neq 0\\
(1-x)(2\delta z + (\sigma^2-2\mu)x), & p=0\\
\end{cases}\\
   \end{split}
\end{equation}
The existence proof is based on  
a geometrically-flavored analysis of the equation  \eqref{equ:HJB-g},
where the  
curves $\bT$ and $\bB$, given by 
\begin{equation}
\begin{split}
  \bT&= \Bsets{(x,z) \in (0,\infty)\times \R}{ P(x,z)=0}\\
  \bB&= \Bsets{(x,z) \in (0,\infty) \times \R}{ Q(x,z)=0},
\end{split}
\end{equation}
play a prominent role. 
Many cases need to be considered, but we always proceed according to the 
following \define{program}:
\begin{enumerate}
\item First, we note that the boundary conditions $g'(\ux)=g'(\ox)=0$ amount to 
$$(\ux, g(\ux)), (\ox, g(\ox))\in \mathbb{T}.$$
\item Then, for a fixed $(\alpha, z(\alpha))\in \mathbb{T}$
  we solve the ODE
$g'(x)=L(x,g(x))$ with initial condition 
$g(\alpha)=z(\alpha)$ and let it evolve  to the right (if possible) until meeting again the curve $\mathbb{T}$ at the $x$-intercept $\beta _{\alpha}$. We therefore obtain a solution $g_{\alpha}:[\alpha, \beta _{\alpha}]\rightarrow \mathbb{R}$ satisfying
$$g _{\alpha }'(\alpha)=g_{\alpha }'(\beta _{\alpha})=0.$$
If $P=Q=0$ on some point along the way, only continuity
is required there.
\item Finally, we vary the parameter $\alpha$ to meet the integral
  condition $\int_{\ux}^{\ox} \tfrac{g'(x)}{x}\, dx = \log(\tfrac{1+\old}{1-\uld}).$
\end{enumerate}

In order to give some intuition for the technicalities that follow,
let us consider, for a moment, the "degenerate" frictionless case
$\uld=\old=0$.
For fixed $\mu, \sigma, p$, and $\delta$, the absence of transaction costs
suggests a trivial solution with $\ux=\ox$.
In addition, the point $ (\ox, g(\ox))\in \mathbb{T}$   is expected to 
have the highest possible  $z$-coordinate. Indeed, larger values of
$g$ translate, as we saw in Lemma \ref{lem:complete-optimal}, to larger
expected utilities. If such a point exists 
we call it the \define{North pole}, denote it by $N$ and its
$x$-coordinate by $x_N>0$.
In that case, furthermore, 
 the curve $\mathbb{T}$ decomposes into two parts $\mathbb{W}$
 (West of North) and $\mathbb{E}$ (East of North) so that
$$\mathbb{T}=\mathbb{W}\cup \{N\}\cup \mathbb{E}.$$
\begin{remark}
  \label{north-pole}
  It turns out that:
\begin{enumerate}
\item The curve $\mathbb{T}$ has a North pole, if and only if
  $u<\infty$ when $\uld=\old =0$.
\item When the North pole does exist:
  \begin{enumerate}
\item if $\uld=\old =0$ then $\ux =\ox=x_N$ and $(\ux, g(\ux))=N$, and
\item if $\uld+\old>0$,  we expect
  $(\ux,g(\ux))\in \mathbb{W}$ and $(\ox,g(\ox))\in \mathbb{E}.$
\end{enumerate}
\end{enumerate}
\end{remark}
%While the above remark is mere heuristics for the moment, it will become a fully rigorous statement. We wanted to make it explicit to justify the large number of (technically different) cases below.
Before we go ahead, we note that the quantities
$\frac{2\delta}p$ and $(1-p)\sigma ^2$ together with their geometric
and arithmetic means play a special role. In fact, they deserve their
own notation:
\[
  G=G( \sigma, p, \delta)=\sqrt{\tfrac{2\delta (1-p)\sigma
	  ^2}{p}},\quad
  A=A(\sigma, p, \delta)=\tfrac {\delta}p+\tfrac{(1-p)\sigma ^2}2.\]
Another quantity that will play a role in the analysis is the {\em
  Merton proportion} 
\[ \pi=\pi (\mu, \sigma, p)=\tfrac {\mu}{(1-p)\sigma ^2},\]
for an investor in a frictionless market, with the power utility.
The last thing we need to do before we delve deeper into the analysis of
various cases, is to introduce a suitable notation for the \define{singular} points,
i.e., the points $(x,z)\in (0,\infty)\times \R$ with
$P(x,z)=Q(x,z)=0$. The explicit expressions for $P$ and $Q$ above yield
immediately that, in general, there are three($p\neq0$) or two($p=0$) solutions to $P=Q=0$ in
$\R^2$,
two($p\neq0$) or one($p=0$) of which lie on the $z$ axis (and, therefore, do not count as
singular points). The other one,
denoted by $P$ is the unique singular point and will be quite
important in our analysis. It has 
coordinates
\[ x_P = \begin{cases}  \frac{\sgn(p)}{A-\mu}, &
	p\ne 0,\\ 1, & p =0, \end{cases}\quad \text { and } 
  \quad  z_P=\begin{cases} \frac{1}{q}x_P, & p \ne 0, \\ \tfrac{ 2\mu -
	  \sigma^2}{2 \delta },& p=0,\end{cases}\]
which clearly degenerate for $A\leq \mu$, $p\ne 0$;
in those cases, we set $(x_P, z_P) = (\infty,\infty)$.

\medskip

We are now ready to start differentiating between several (technically
different) cases which are chosen,
roughly, according to the following criteria:
(1) whether the risk aversion is \emph{low} ($0<p<1$) or
\emph{high} ($p\leq 0$), \ (2) 
whether the ``North pole'' exists, and \ 
(3) the sign of $\pi-1$.

\subsection{Low risk aversion $0<p<1$} In this case 
investor is less risk averse than the log-investor, and it is the only 
case when well-posedness may fail.
We further separate it into several sub-cases:

\medskip

\emph{ - Sub-case a): $\mu<G$.}\ \ 
For these particular values of parameters, the problem turns out to be well posed.
The reason is simple: the value function of the {\em frictionless}
version is finitely-valued here.
 The curve $\mathbb{T}$ is (a portion of) an
ellipse, and, as such, it obviously has a "North pole", in agreement
with Remark \ref{north-pole}.

Let $E$ denote the  most
right-ward point (East) and by $x_E$ its $x$-coordinate, so that
$0<x_N<x_E$.
Taking into account Remark \ref{north-pole} and the fact that
$\mathbb{T}$ is an ellipse, we expect to find a solution $(\ux,\ox,g)$
of the free boundary which satisfies
$\ux<\ox\leq x_E.$ As described in the outline of our program above,
we ``evolve'' the solution from the initial point to the right, as long as we can. More
precisely, 
we consider a maximal (with respect to the domain) $C^2$-solution of
the initial-value problem
$g_{\alpha}'(x) = L(x,g_{\alpha}(x))$, $g'(\alpha)=0$ with the property
that $P(x,g_{\alpha}(x))\geq 0$, i.e., such that the curve $\ga$ stays on
the inside of $\bT$. 
It turns out that the domain of this
solution is of the form $[\alpha,\beta_{\alpha}]$, for some
$\beta_{\alpha}\in [\alpha,x_E]$, and that the following statements
hold:
\begin{enumerate}
  \item  The map $\alpha  \mapsto \int_{\alpha }^{\beta _{\alpha}}
  \frac{g_{\alpha}'(x)}{x} dx$ is continuous and strictly decreasing
  on $(0,x_N)$, and 
\item $\lim_{\alpha  \searrow 0} \int_{\alpha }^{\beta _{\alpha}}
  \frac{g_{\alpha }'(x)}{x} dx =\infty$ while
$\lim_{\alpha  \nearrow x_N} \int_{\alpha }^{\beta _{\alpha}}
\frac{g_{\alpha}'(x)}{x} dx = 0$.
\end{enumerate}
It follows immediately that a unique $\alpha$, such that $g_{\alpha}$
solves the free-boundary problem
\eqref{equ:HJB-g},\eqref{equ:integral-cond} exists. The major
difficulty in the analysis is the fact that, for a given $\alpha$, the
maximal solution $g_{\alpha}$ may encounter the singularity $P$ on its
trajectory 
(see Figure 3.~below). 

An important tool here turns out the be the
so-called \define{containment curve}, i.e., a function
$\tau:(0,\infty)\to \R$ such
that
\begin{itemize}[itemindent=-2em]
 \item $\ga$ cannot hit $\tau$ before it hits $\bT$, and
	\item $\ga$ must hit $\tau$ before it hits $\bB$.
 \end{itemize}
 It serves a two-fold purpose here. First, it restricts the possible
 values the function $\ga$ can take and makes sure that it
 either does not intersect the (singular) curve $\bB$ at all, or that it encounters
 it only at the point $P$.
The shaded area
$\Omega_0$ in Figure 2.~below depicts the region of the plane the graph $\Gamma_{\alpha}$
is restricted to lie in, under various conditions on the problem
parameters. 
Second, when the singular point $P$ indeed happens to lie on the graph
$\Gamma_{\alpha}$, a well-constructed
containment curve $\tau$ provides crucial information about the
behavior of $\ga$ in a neighborhood of $P$. 
Whether or not $P$ falls on the graph 
\[ \Gamma_{\alpha} = \sets{ (x,\ga(x))}{ x\in
	[\alpha,\beta_{\alpha}]}.\]
of $\ga$ depends on the
values of the parameters. In particular, it depends on the relative
position of the points $E$, $P$ and $N$. The lead actor turns out to
be the Merton proportion $\pi=\pi(\mu,\sigma,p)$, and the following
three cases need to be distinguished (see Figure 2., below):
\begin{enumerate}[leftmargin=0pt,itemindent=1.8em]
  \item $\pi<1$ :  $P\in \bE$ and $P\notin\Gamma_{\alpha}$, with the
	relative positions of $P$, $E$ and $N$, further determined by
	the sign of $\tfrac{2 \delta}p- (1-p)\sigma^2$
\item $\pi=1$ : Here, $P=N$ and $\ba=x_P$ for any $\alpha$. 
  \item $\pi>1$:  In this case, $P\in \bW$. Furthermore,  $P\in \Gamma_{\alpha}$ if and
	only if $\alpha\leq x_P$. 
\end{enumerate}

 \ \\[1ex]

\begin{center}
$
  \begin{array}{cccc}
\includegraphics[width=3.7cm]{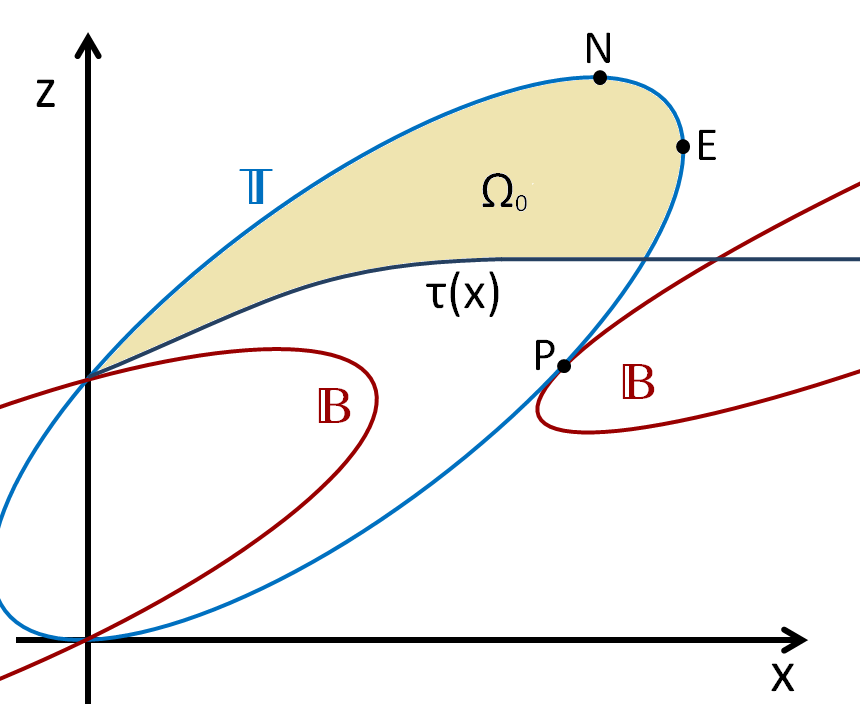} & \ 
   \includegraphics[width=3.7cm]{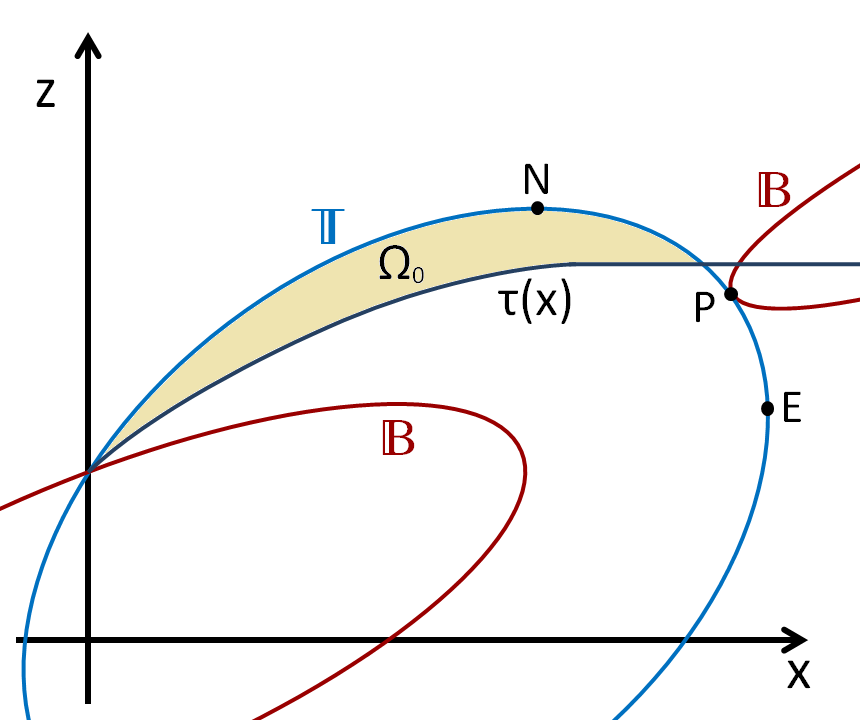} &
 \includegraphics[width=3.7cm]{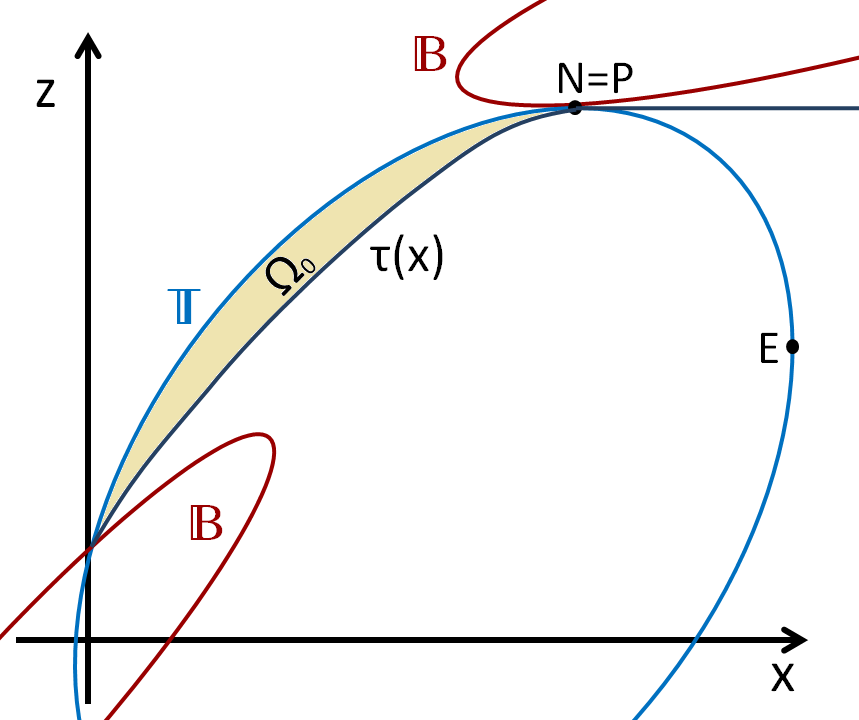} &  
\includegraphics[width=3.7cm]{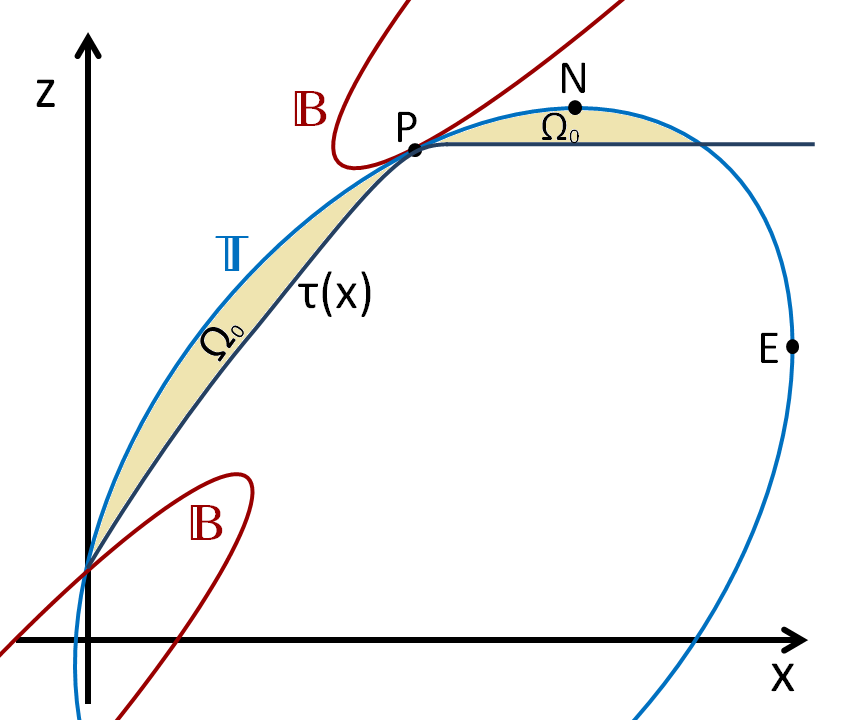}\\
\pi<1, \ \tfrac{\delta}{p}<\tfrac{(1-p)\sigma^2}{2} & 
\pi<1, \ \tfrac{\delta}{p}>\tfrac{(1-p)\sigma^2}{2} &
 \pi=1  
& \pi>1 
 \end{array}$\\[2ex]
 \figcaption{  $0<p<1$, $\mu<G$.}
 \end{center}
 
 \bigskip

 Figure 3.~below shows some of the possible shapes the graph
$\Gamma_{\alpha}$ can take, under a representative choice of parameter regimes.

 \ \\

\begin{center}$
\begin{array}{cccc}
\includegraphics[width=3.7cm]{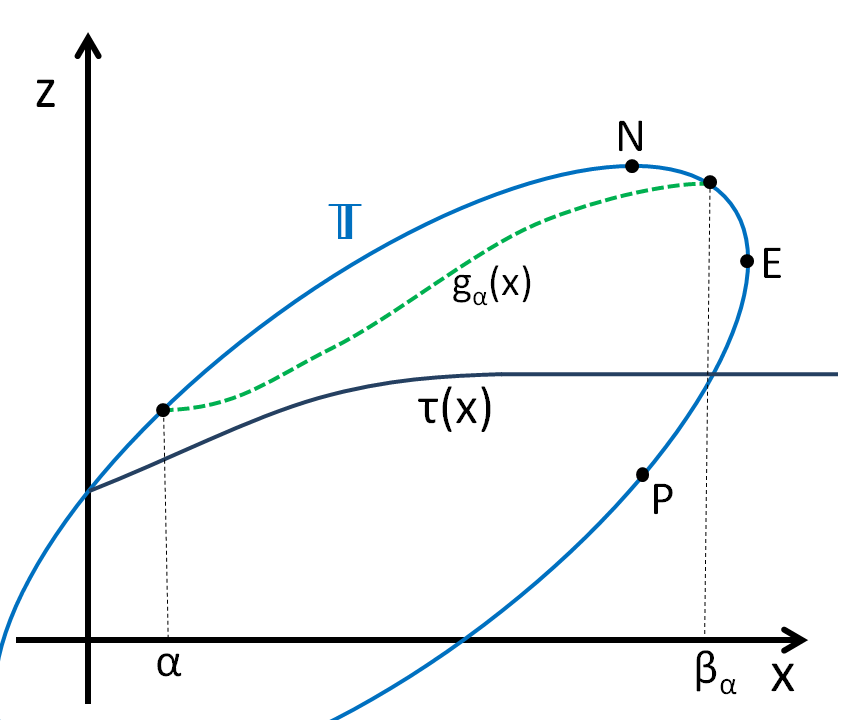} &
  \includegraphics[width=3.7cm]{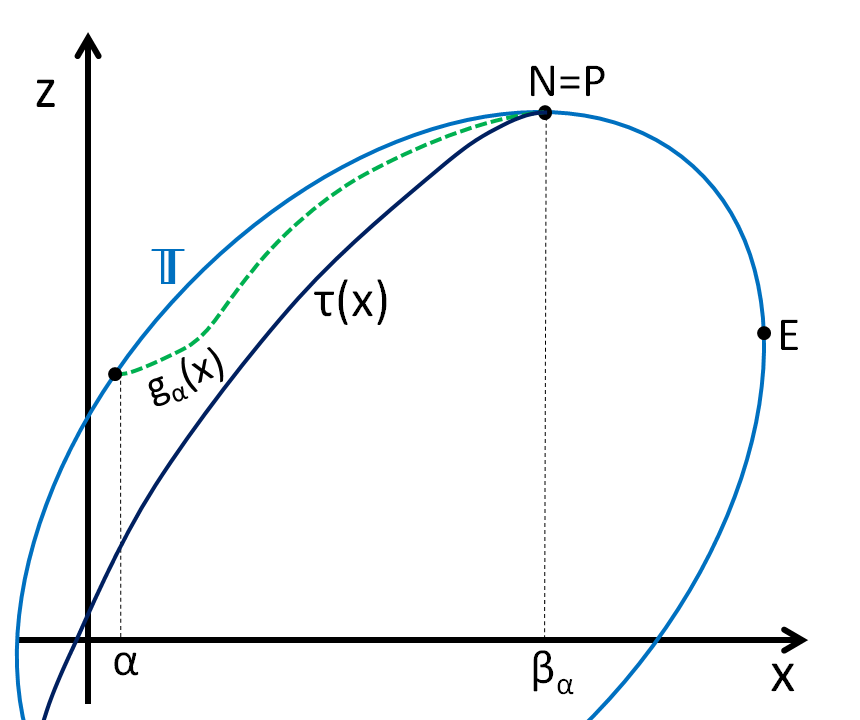} &
\includegraphics[width=3.7cm]{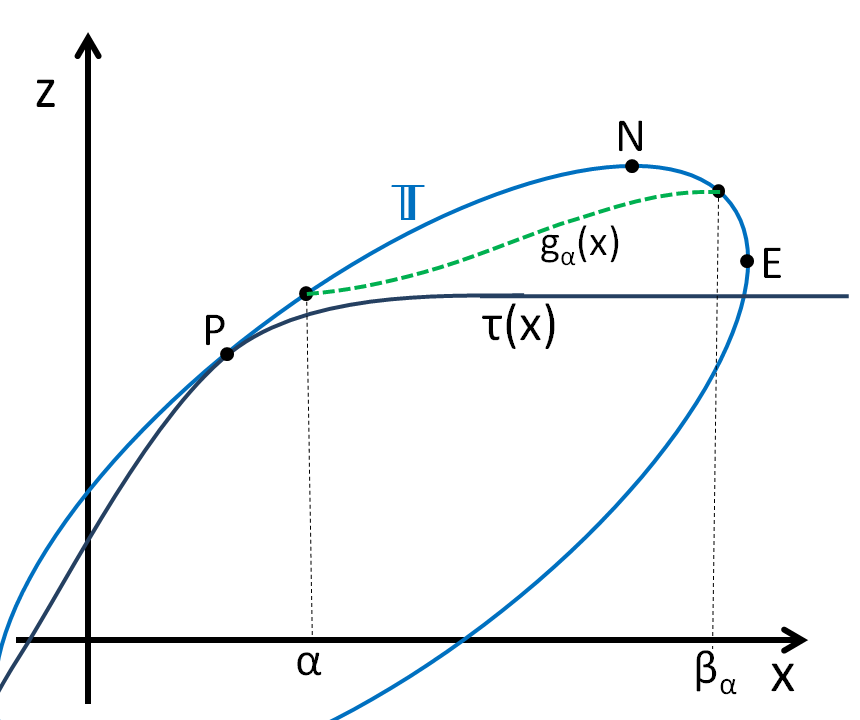} & \includegraphics[width=3.7cm]{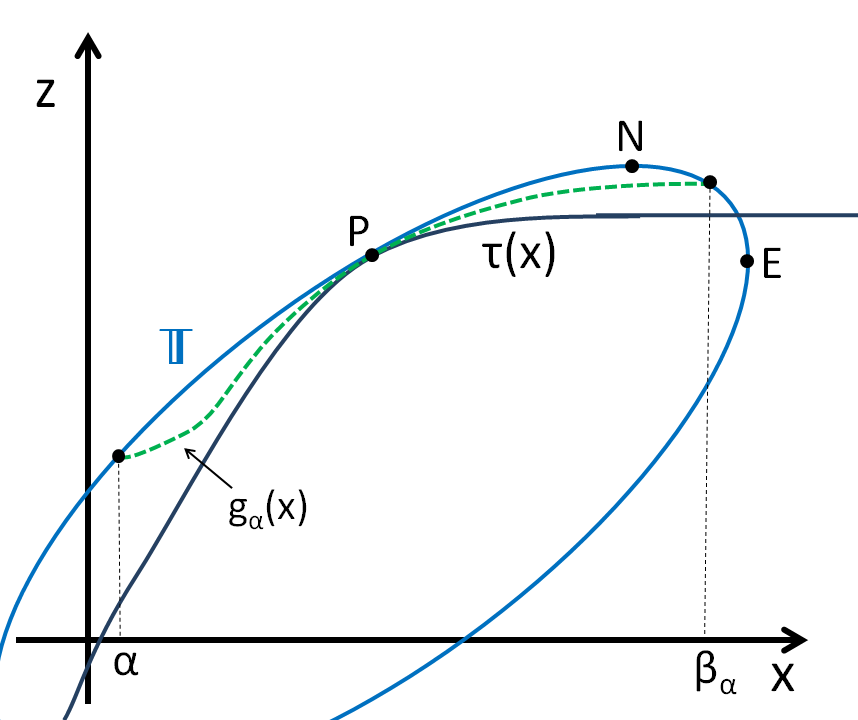} \\
  \pi>1,  \ \alpha>x_P & \pi>1, \ \alpha\leq x_P & \pi<1 & \pi=1  
\end{array}$
\\[2ex]
\figcaption{ $0<p<1, \ \mu<G$}
\end{center}

\bigskip

\noindent A rigorous treatment of the first possibility ($\pi<
1$) is given in the Proposition 
\ref{pro:FIRST} in Section \ref{rigorous}. The other
cases are treated in the Proposition \ref{pro:SECOND}.

\noindent\begin{minipage}{0.7\textwidth}
  \emph{ - Sub-case b): $\mu\geq A$.}\ \ 
  The rate of return in this sub-case is so large, that the value
  function of the problem with transaction costs is infinite,
  independently of the size of the transaction costs $\uld$ and
  $\old$.  A constructive argument is presented in Proposition
  \ref{lem:infinity}.  From the analytic point of view, this phenomenon
  is related to the non-existence of the solution to the free-boundary
  problem;
  an illustration of the reason why is given by
  the picture to the right (Figure 4). 
  In a nutshell, we can find an
  asymptotically linearly increasing curve $T_u(x,K)$ (the notation is
  chosen to fit that of Section 6) such that $\ga$
  stays above it for all $x$. 
  Consequently, it is prevented from
  reaching the other branch of the curve $\bT$ and satisfying the 
  free-boundary condition.
\end{minipage}\hfill
\begin{minipage}{0.26\textwidth}
  \begin{center}
	\includegraphics[width=4.5cm]{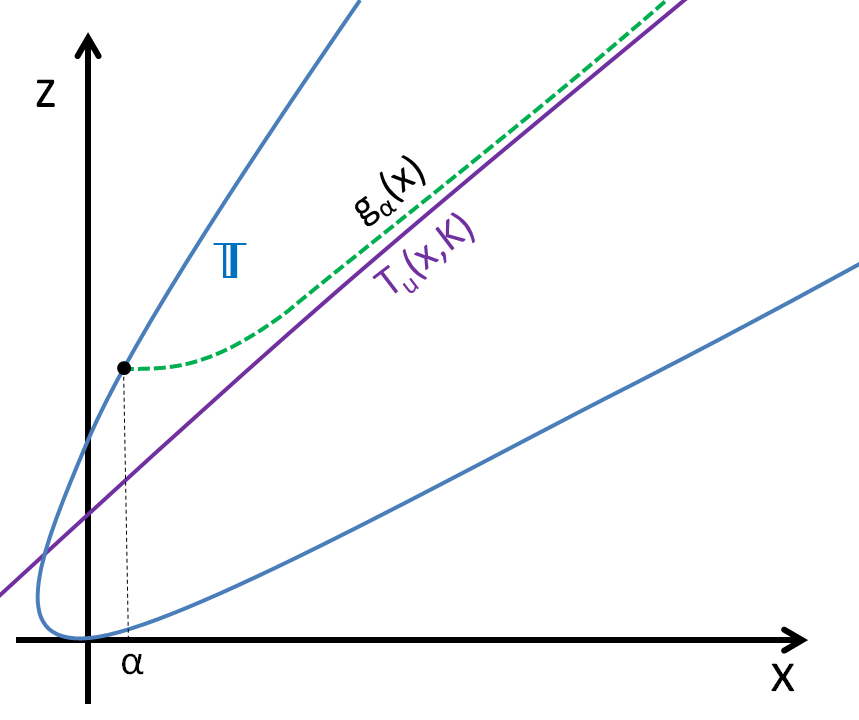}
	\figcaption{}
  \end{center}
\end{minipage}

\medskip

\emph{ - Sub-case c): $G \leq \mu < A$.}\ \ 
This is the most interesting sub-case from the point of view of
well-posedness; whether the value function is finite or not is
determined by the size of the transaction costs.
The curve $\mathbb{T}$ is a hyperbola, 
and has no North pole.
The overall approach is the same as in sub-case a): we construct a
maximal solution $g_{\alpha}$ on an interval of the form
$[\alpha,\beta_{\alpha}]$, and show that the following two
statements hold:
\begin{enumerate}
  \item  The map $\alpha  \mapsto \int_{\alpha }^{\beta _{\alpha}}
  \frac{g_{\alpha}'(x)}{x} dx$ is continuous and strictly decreasing
  on $(0,\infty)$, and 
\item $\lim_{\alpha  \searrow 0} \int_{\alpha }^{\beta _{\alpha}}
  \frac{g_{\alpha }'(x)}{x} dx =\infty$ while
$\lim_{\alpha  \nearrow \infty} \int_{\alpha }^{\beta _{\alpha}}
\frac{g_{\alpha}'(x)}{x} dx = C$,
\end{enumerate}
where an expression for $C=C(\mu,\sigma,\delta,p)$ is given in \eqref{ite:C-expression}
below.
The reader will note two major differences when the statements here
are
compared to the corresponding statements in the sub-case a). The first
one is that
$+\infty$ now plays the role of $x_N$. The second one is that the
range of the integral $\int_{\alpha }^{\beta _{\alpha}}
\frac{g_{\alpha}'(x)}{x} dx $ is not the set of positive numbers
anymore. It is an interval of the form
$(C,\infty)$, which makes the free-boundary problem
solvable only for 
$\log(\tfrac{1+\old}{1-\uld})>C$.

In addition to the fact that we still need to deal with the possible
singularity along the graph $\Gamma_{\alpha}$ of $g_{\alpha}$,
difficulties of a different nature appear in this sub-case.
First of all, due to the unboundedness of the regions separated by a
hyperbola, it is not clear whether the maximal solution started at
$x=\alpha$ will ever hit the curve $\bT$ again. Indeed, this is
certainly a possibility when $\mu\geq A$, as depicted in Figure 4. 
However, we prove by contradiction that this is not the case for $G\leq \mu < A$. 
The second new difficulty has to do with fact that $C$ is finite - a
fact which prevents the existence of a solution to \eqref{equ:HJB-g},
\eqref{equ:integral-cond}.
% middle mu
\begin{center}$
\begin{array}{ccc}
\includegraphics[width=4.7cm]{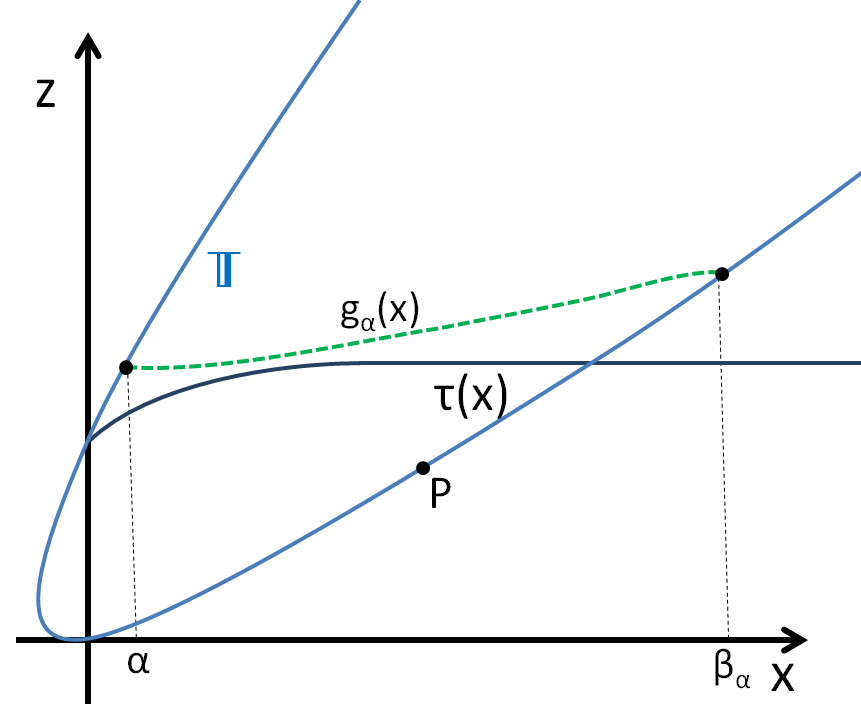} &\qquad
\includegraphics[width=4.7cm]{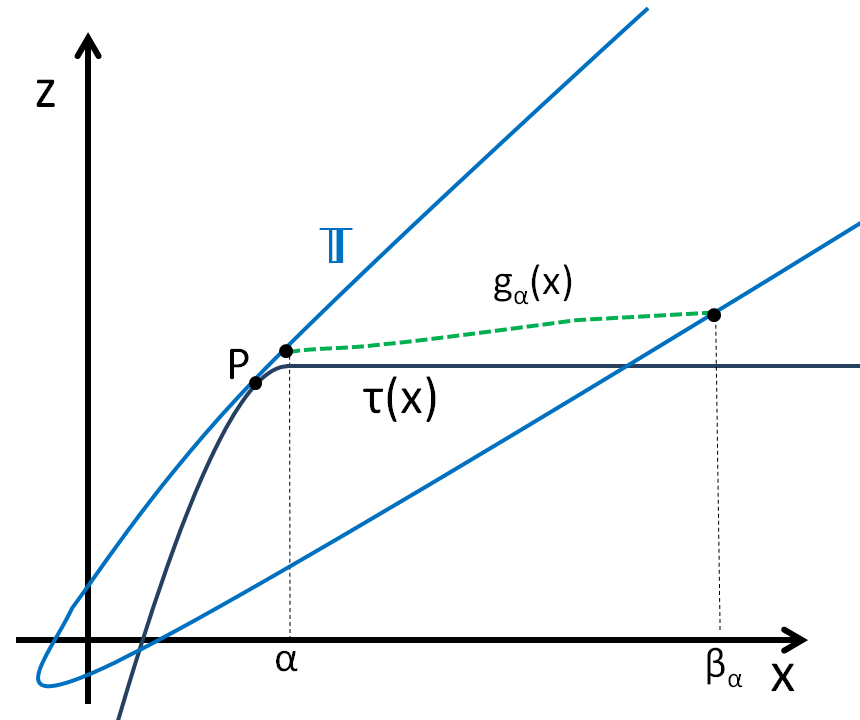} &\qquad
\includegraphics[width=4.7cm]{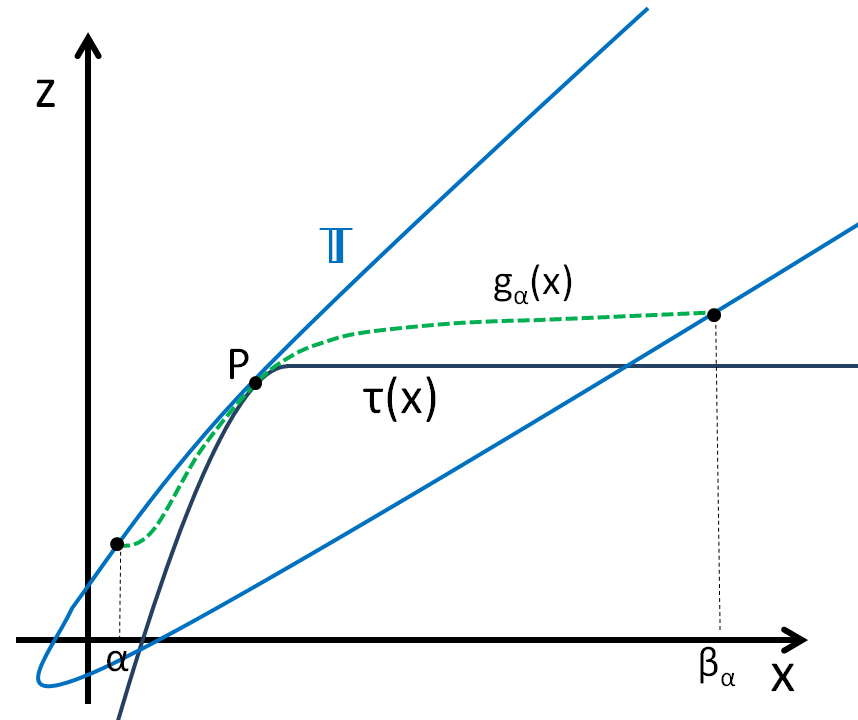}  \\
\pi<1 &\qquad \pi>1,  \ \alpha>x_P &\qquad \pi>1, \ \alpha\leq x_P
\end{array}$
\figcaption{$0<p<1, \ G\leq \mu<A$}
\end{center}

The rigorous treatment of this sub-case
is in Propositions \ref{pro:THIRD} and \ref{prop:wellposed} in Section 6.
Figure 5.~below
illustrates three representative regimes. We note that
the equality $\pi=1$ cannot hold for $\mu\in [A,G)$, as it would
force $A=G$. 

\subsection{High risk aversion $p\leq 0$} 
In this case the problem is
always well posed independently of the values of $\uld$ and $\old$; indeed, the utility
function is bounded from above. The curve $\mathbb{T}$ is
a hyperbola for $p<0$ and a parabola for $p=0$, and it has a
North-pole for any $p\leq 0$.  $\bB$ is a hyperbola for $p<0$, and for
$p=0$, it is a union of two straight lines, one of which is $x=1$.

Compared to the case $0<p<1$, no major new difficulties arise here,
even though one still has to deal with the existence of singularities. 
For this reason we only
present a figure (Figure 6.~below) which illustrates different sub-cases that may
arise. The formal treatment is analogous to that of Section \ref{rigorous}.
\begin{figure}[htb]
	\begin{center}$
	  \begin{array}{cccc}
\includegraphics[width=0.23\textwidth]{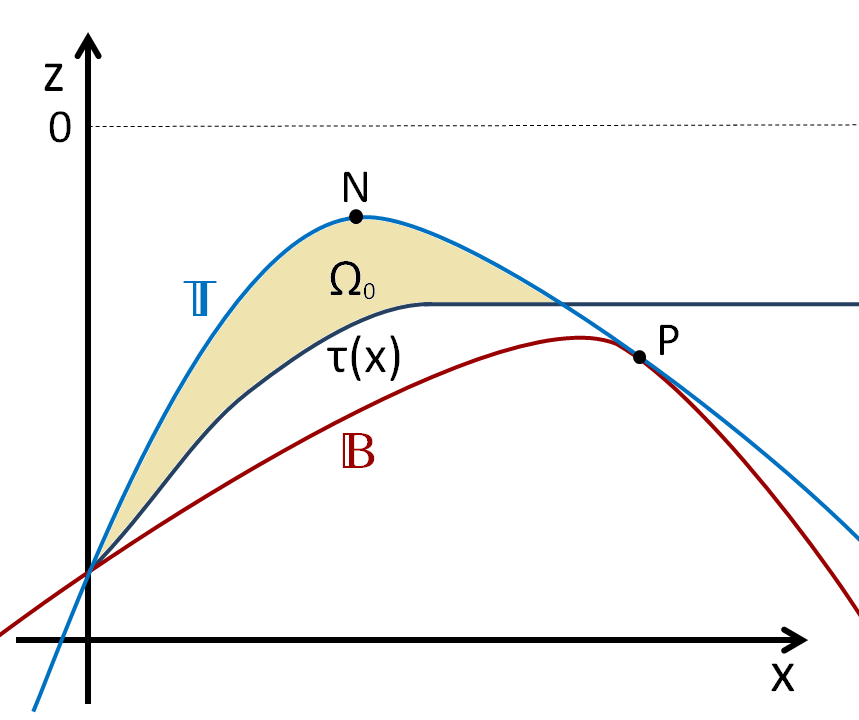} &
\includegraphics[width=0.23\textwidth]{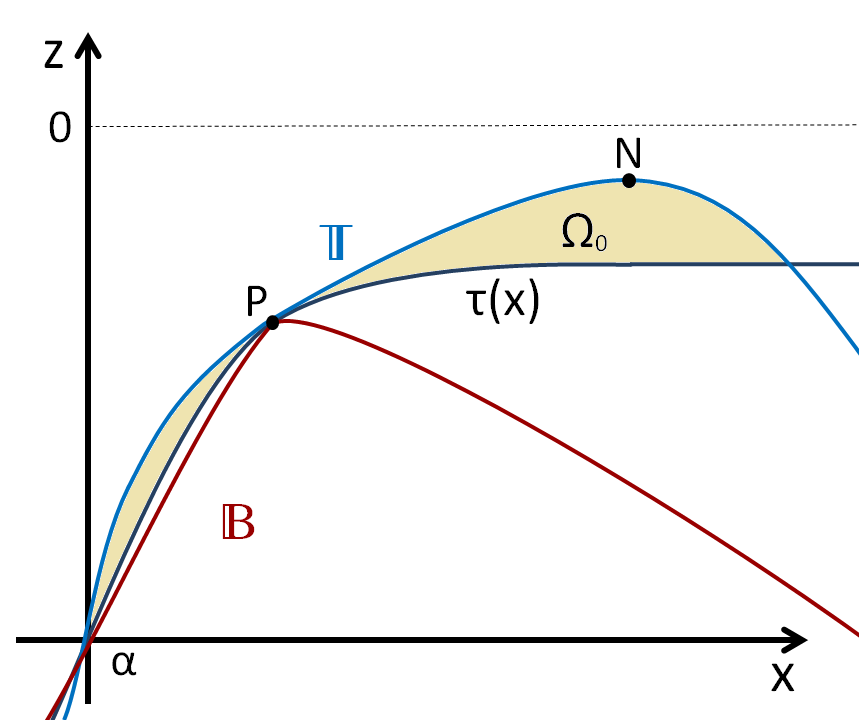} 
&\includegraphics[width=0.23\textwidth]{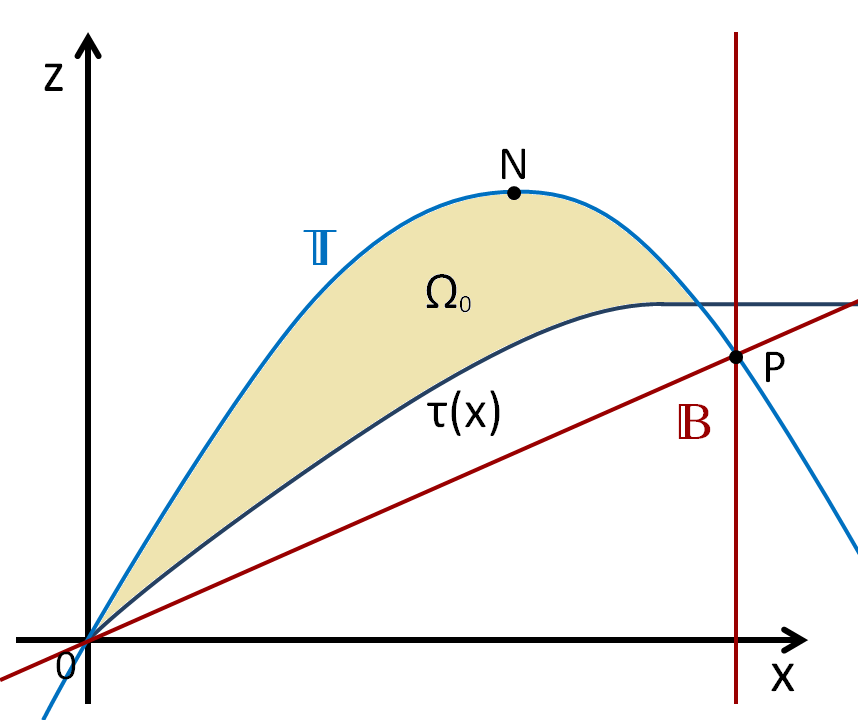} &
\includegraphics[width=0.23\textwidth]{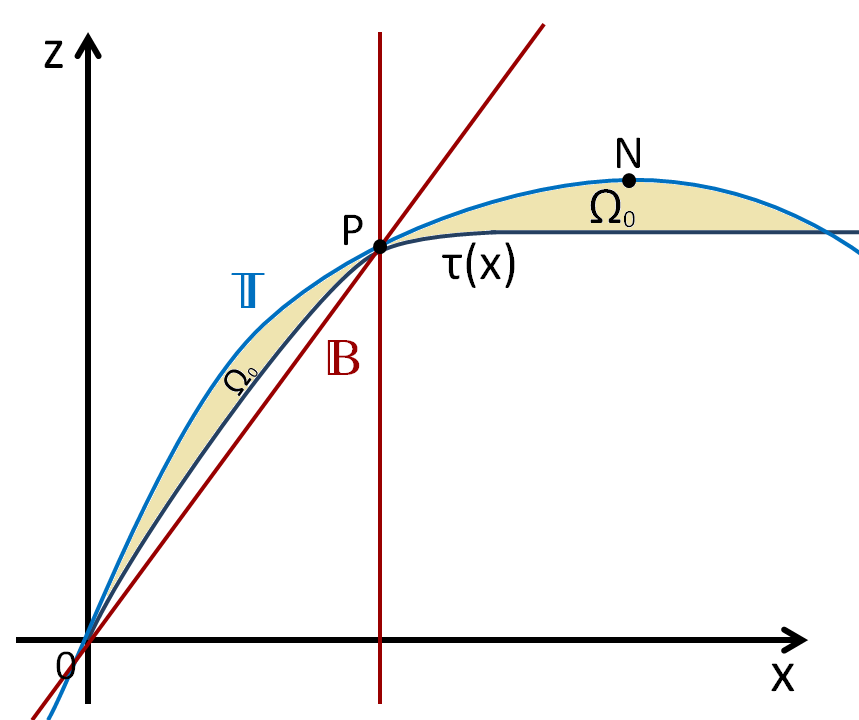} \\
p<0, \ \pi<1 & p<0 \ \pi=1 & 
p=0, \ \pi<1 & p=0, \ \pi>1\\
\end{array}$
\end{center}
\caption{$p\leq 0$}
\end{figure}

\section{An existence proof for the free-boundary problem
  \eqref{equ:HJB-g}, \eqref{equ:integral-cond}}
\label{rigorous}
After a heuristic description of the major steps in the existence
proof and the
associated difficulties, we now proceed to give more rigorous, formal
proofs. 
More precisely, the goal of this section is to
present a proof of the part \eqref{ite:main-3} of Theorem
\ref{thm:main}.

As already mentioned in the previous section, the proofs in
the case $p\leq 0$, are very similar (but less involved) than those in
the case $p\in (0,1)$ so we skip them and refer the reader to the
first author's PhD dissertation \cite{Cho12} for details. We also
do not provide the proof of the part (c) of Theorem \ref{thm:main}, as
it can be obtained easily by an explicit computation.

\medskip

Out first result states that
problem is not well posed for large $\mu$.
  As a consequence, we will be focusing on the case $0<p<1$, $\mu<A$
  in the sequel.
\begin{proposition}[$0<p<1$, $\mu\geq A$.]\label{lem:infinity}
If $0<p<1$ and  $\mu \geq A$, then
$u=\infty$, for all 
$\uld,\old\geq 0$.
\end{proposition}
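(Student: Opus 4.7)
My plan is to exhibit, for a one-parameter family of admissible strategies, expected utilities that are either infinite directly or diverge as the parameter shrinks, confirming $u = +\infty$. Since $\Liq(\eta_B, \eta_S, \uS_0, \oS_0) > 0$, a single Hahn--Jordan jump at $t = 0$ can bring the position to $(\vpz_{0+}, \vp_{0+}) = (b_0, n)$ with $b_0 \geq 0$ and $n > 0$ --- for instance, no trade is needed if $\eta_S > 0$ and $\eta_B \geq 0$, and in other cases a small initial purchase or sale suffices by strict solvency. After time $0$ no further purchases are made, but shares are sold at the rate $d\vpd_t = \tfrac{\epsilon \vp_t}{1-\uld}\, dt$, and the entire sale proceeds are fed into consumption,
\begin{equation*}
    c_t \;:=\; \uS_t\, \tfrac{d\vpd_t}{dt} \;=\; \epsilon \vp_t S_t,
\end{equation*}
where $\epsilon > 0$ is a tuning parameter.

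Substituting into the self-financing equation \eqref{equ:self-fin} yields $\vpz_t \equiv b_0 \geq 0$ and $\dot{\vp}_t = -\tfrac{\epsilon}{1-\uld}\vp_t$, so $\vp_t = n\exp\bigl(-\tfrac{\epsilon t}{1-\uld}\bigr) > 0$ for every $t \geq 0$. Both components of the position are nonnegative, so \eqref{equ:liq} holds trivially and $(\vpz, \vp, c) \in \sA$. Since $U \geq 0$ for $p > 0$, Fubini, together with $\EE[S_t^p] = S_0^p \exp\bigl((p\mu - \tfrac{p(1-p)\sigma^2}{2})t\bigr)$ and the algebraic identity $p\mu - \tfrac{p(1-p)\sigma^2}{2} - \delta = p(\mu - A)$, gives
\begin{equation*}
    \sU(c) \;=\; \frac{(\epsilon n S_0)^p}{p} \int_0^{\infty} e^{E(\epsilon)\, t}\, dt, \qquad E(\epsilon) \;:=\; p(\mu - A) - \tfrac{p\epsilon}{1-\uld}.
\end{equation*}

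If $\mu > A$, pick any $\epsilon \in \bigl(0, (1-\uld)(\mu - A)\bigr)$; then $E(\epsilon) > 0$ and the integral is $+\infty$, so $\sU(c) = +\infty$. If $\mu = A$, then $E(\epsilon) = -p\epsilon/(1-\uld) < 0$, the integral equals $(1-\uld)/(p\epsilon)$, and
\begin{equation*}
    \sU(c) \;=\; \tfrac{n^p S_0^p (1-\uld)}{p^2}\, \epsilon^{p-1} \;\longrightarrow\; +\infty \quad \text{as } \epsilon \downarrow 0,
\end{equation*}
since $p - 1 < 0$. Either way $u \geq \sup_{\epsilon} \sU(c) = +\infty$. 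The only real obstacle I anticipate is careful bookkeeping of the initial jump so that the choice of $(b_0, n)$ is consistent with the Hahn--Jordan conventions built into \eqref{equ:self-fin}, but this reduces to a one-line check and does not interact with the main estimate.
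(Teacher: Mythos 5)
Your proposal is correct and follows essentially the same route as the paper: an explicit sell-the-stock-and-consume-the-proceeds strategy (with cash position held nonnegative) whose expected utility is evaluated via $e^{-\delta t}\,\EE[S_t^p]=S_0^p\,e^{p(\mu-A)t}$ together with Tonelli. The only difference is cosmetic: the paper uses a single strategy with polynomially decaying holdings $\vp_t=(t+1)^{-(1-p)/p}$, whose utility integral diverges even in the borderline case $\mu=A$, whereas you use an exponentially decaying family indexed by $\epsilon$ and let $\epsilon\downarrow 0$ to cover that case.
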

\begin{proof}
Without loss of generality, we consider the case $\eta_B=0,\eta_S=1$,
and construct a portfolio $(\vp^0,\vp,c)$ as follows:
$$ \varphi^0_t = 0,\  \varphi_t = (t+1)^{-\frac{1-p}{p}}\text{ and }
c_t = \tfrac{1-p}{p} (1-\uld)\,S_t (t+1)^{-\frac{1}{p}},\text{ for
}t\geq 0. $$
One easily checks that it is admissible and that its expected utility
is given by 
\begin{equation}
\nonumber
\begin{split}
	\mathbb{E} \Big[\int_0^{\infty} e^{-\delta t} \tfrac{ c_t^p}{p} dt
\Big]& = \tfrac{(1-p)^p(1-\uld)^p}{p^{1+p}} \mathbb{E}
\Big[\int_0^{\infty} e^{-\delta t} \tfrac{S_t^p}{t+1} dt \Big] 
 =\tfrac{(1-p)^p(1-\uld)^p}{p^{1+p}} \int_0^{\infty}
 e^{pt(\mu-A)}\tfrac{1}{t+1}dt = \infty.\qedhere
 \end{split}
\end{equation}
\end{proof}
\subsection{Maximal inner solutions of $g'=L(\cdot,g)$.}
As explained in the previous section, the main technique we employ in
all of our existence proofs is the construction
of a family of solutions to the equation $g'=L(\cdot,g)$, followed by
the choice of 
the one that satisfies the appropriate integral condition. We,
therefore, take some time here to define the appropriate notion of a
solution to a singular ODE $g'=L(\cdot,g)$:
\begin{definition}
  Let $\sD$ be a convex interval in $(0,\infty)$.
  We say that a
  function  $g:\sD\to\R$ is a
  \define{continuous solution} of the equation $g'=L(\cdot,g)$ if
  \begin{enumerate}[leftmargin=1.8em]
    \item $g$ is continuous on $\sD$, 
	\item $g$ is differentiable at $x$ and $g'(x)=L(x,g(x))$, for all
	  $x\in\Int\sD\setminus\set{x_P}$
  \end{enumerate}
\end{definition}
\noindent We note that any function with a single-point domain $\sD=\set{x}$ is
considered a continuous solution according to the above definition.

Remembering that $p\in (0,1)$ and 
using the notation of the previous section
we remark that the level curves $L=k$ are ellipses or hyperbolas, and, as such, they 
are not graphs in general.
\begin{figure}[h!]
  \noindent\begin{minipage}{0.57\textwidth}
We therefore introduce the \define{upper graph} $T_u(x,k)$ and the
\define{lower graph} $T_d(x,k)$ of the level curve $L=k$ by
\[ T_u(x,k)=\max 
  \sets{z\in\R}{ P(x,z)= k\,  Q(x,z)}\] and \[
  T_d(x,k) =
  \min
  \sets{z\in\R}{ P(x,z) = k\, Q(x,z)},
\] 
for all $x\in\sL_k$, where
\[ \sL_k=\sets{x>0}{ P(x,\cdot) = k\, Q(x,\cdot)\text{ admits a solution.}}\]
Moreover, for convenience, we include the case $k=\infty$, where the
minimal and maximal solutions of $Q(x,\cdot)=0$ (instead of
$L(x,\cdot)=k$) are considered; the domain $\sL_{\infty}$ is also
defined. 
One easily checks that
\begin{displaymath}\begin{split}
	\mathbb{T} &= \sets{(x,z)}{ x\in \sL_0,\,  z=T_u(x,0) \textrm{  or
	  } T_d(x,0)},\text{ and }\\
	\mathbb{B} &= \sets{(x,z)}{ x\in\sL_{\infty},\, z=T_u(x,\infty)
	  \textrm{  or  } T_d(x,\infty)}.
\end{split}
\end{displaymath}
Functions $T_u$ and $T_d$ allow us to define a subclass of solutions
to $g'=L(\cdot,g)$:
\begin{definition} A continuous solution 
  $g:\sD\to\R$ is said to be a \define{maximal inner solution} if
  \begin{enumerate}[leftmargin=1.8em]
	\item $T_d(x,0) \leq g(x) \leq T_u(x,0)$, for all $x\in \sD$, and
	\item $g$ cannot be extended to an interval strictly larger than
	  $\sD$, without violating either (1) or the continuous-solution
	  property. 
  \end{enumerate}
\end{definition}
\end{minipage}
\hfill
\begin{minipage}{0.42\textwidth}
\begin{center}
$\begin{array}{c}
\includegraphics[width=5cm]{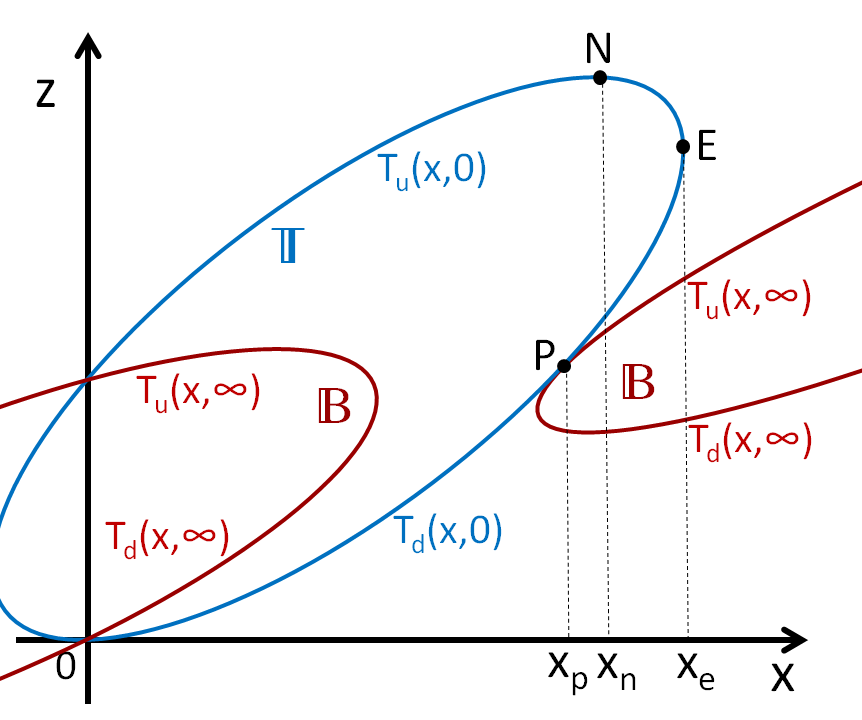}\\
\mu<G \\[4ex]
 \includegraphics[width=5cm]{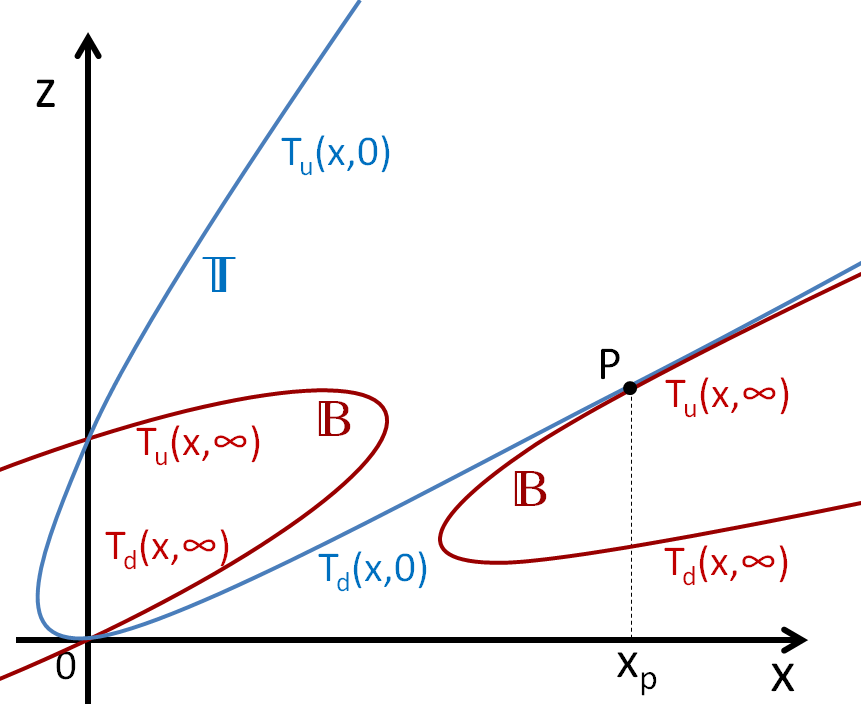} \\
 \mu \geq G\\
\end{array}$
\caption{
}\end{center}
\end{minipage}
\end{figure}

Thanks to the local Lipschitz property of the function $L$ away
from $\bB$, the general theory of ordinary differential equations,
namely the Peano Existence Theorem (see, for
example Theorem I, p.~73 in \cite{Wal98}), states that, starting from any point $(x,z)$
with $z\in [
T_d(\cdot,0),T_u(\cdot,0)]$ and 
$L(x,y)$ well defined, one can construct a maximal inner solution
$g:\sD\to\R$. This solution is necessarily real-analytic away from the curve $\sB$. 

We will be particularly interested in maximal inner solutions started
at the top portion of $\bT$, i.e., at the point $(\alpha, T_u(\alpha,0))$, for
$\alpha\in\sL_0\setminus\set{x_P}$. Not
assuming uniqueness, we pick one
such solution, denote it by $g_{\alpha}$, its domain by $\sD_{\alpha}$,
and the right boundary point of $\sD_{\alpha}$ by $\beta_{\alpha}\in
(0,\infty]$. To avoid the analysis of unnecessary cases, we assume
from the start that $\alpha\in (0,x_N)\setminus\set{x_P}$, so that $T_u(\cdot,0)$ is
strictly increasing in the neighborhood of $\alpha$ and the
singularity $x_P$ is not used as the initial value (a curious reader
can
peak ahead to Proposition \ref{pro:SECOND}, to see how the case
$\alpha=x_P$ can be handled.)

To rule out the possible encounters of a maximal inner solution with $\bB$ away from the
singular point $x_P$, we delve a bit deeper into the geometry of the
right-hand side $L$ of our ODE. We start by a technical lemma which
will help us construct the containment curve $\tau$. Some more
explicit expressions for the upper and lower curves $T_u$ and $T_d$ are
going to be needed:
\begin{equation}
  \label{T_u expression}
  T_{u,d}(x,k)=\frac{1-p}{2\delta}+ \tfrac{b(k)x \pm \sqrt{(b(k)^2-4a(k)c(k))x^2 + 4p(1-p)(k+1)(b(k)-4\delta k) x + 4p^2(1-p)^2(1+k)^2} }{2a(k)}
\end{equation}
where $a(k),b(k),c(k)$ are given by
\begin{equation}\label{abc}
\left \{ \begin{array}{ll} a(k)=2p\delta(1+k), \\  b(k)= (2\delta + p(1-p)(2\mu-\sigma^2))k + 2p(1-p)\mu, \\ c(k)=(1-p)(2\mu+(p^2-1)\sigma^2)k + (1-p)^3 \sigma^2. \end{array} \right.
\end{equation}
The end-points of the domains of $T_u$ and $T_d$, i.e., those $x$ for
which $T_u(x,k)=T_d(x,k)$ are given by 
\[ x_{\pm}(k)=\frac{k+1}{\pm G(1-\frac{p}{1-p} k)
	-\mu + (\frac{A}{1-p}-\mu)k}.\]
We can also check that $
T_{u,d}(x,\infty):=\lim_{k\to\infty} T_{u,d}(x,k)$
are solutions of $Q(x,\cdot)=0$ and that
$x_{\pm}(\infty):= \tfrac{1}{\frac{p}{1-p}(A\mp G)+ (A-\mu)}$ are the
solutions to $T_d(\cdot,\infty)=T_u(\cdot,\infty)$. 
Finally, we note for future reference that
 $0<x_-(\infty)<x_+(\infty)\leq x_P$ holds, and that, for $\mu<G$, 
the $x$-coordinates of the north and east points ($N,E$) are given by 
$x_N = \tfrac{2\mu}{G^2-\mu^2}$ and $x_E
=x_+(0)=\tfrac{1}{G-\mu}.$

\begin{lemma}\label{prop:k0}
 For $0<p<1$ and $\mu<A$,  there exists constant $k_0\in (0,\frac{1-p}{p})$ such that
  \begin{enumerate}[itemindent=-3ex]
\item $x_+(k_0)\geq x_P$ and $(0,x_+(k_0)]\subseteq \sL_{k_0}$.
\item $\tfrac{\partial}{\partial x} T_u(x,k_0)<k_0$ for $x\in (0,x_+(k_0))$.
\end{enumerate}
\end{lemma}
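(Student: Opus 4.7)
The plan is to choose $k_0$ as a small perturbation of the critical value $\frac{1-p}{p}$, which is natural because the level curve $L = \frac{1-p}{p}$ has its right-most vertex exactly at the singular point $P$. I would then verify both conditions of the lemma by a continuity argument based at this endpoint.

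For part (1), I would first substitute $k = \tfrac{1-p}{p}$ directly into the formula
\[ x_+(k) = \frac{k+1}{G\bigl(1-\tfrac{p}{1-p}k\bigr) - \mu + \bigl(\tfrac{A}{1-p}-\mu\bigr)k}. \]
The $G$-term vanishes ($1-\tfrac{p}{1-p}\cdot\tfrac{1-p}{p}=0$) and a short calculation yields $x_+(\tfrac{1-p}{p}) = \tfrac{1}{A-\mu} = x_P$. Differentiating $x_+$ and evaluating at $k=\tfrac{1-p}{p}$ gives
\[ x_+'\!\left(\tfrac{1-p}{p}\right) = \frac{p^2(G-A)}{(1-p)(A-\mu)^2} \leq 0, \]
where I use the AM--GM inequality $A\geq G$ (since $A$ and $G$ are, respectively, the arithmetic and geometric means of $\tfrac{2\delta}{p}$ and $(1-p)\sigma^2$, up to constant factors). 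In the generic case $G<A$, this derivative is strictly negative, so $x_+(k_0)>x_P$ for any $k_0$ slightly below $\tfrac{1-p}{p}$; the edge case $A=G$ can be handled by a separate perturbation argument. The inclusion $(0,x_+(k_0)]\subseteq \sL_{k_0}$ then follows from a sign analysis of the discriminant inside the radical in \eqref{T_u expression}: that discriminant is a quadratic in $x$ with value $4p^2(1-p)^2(1+k_0)^2>0$ at $x=0$ and roots exactly at $x_\pm(k_0)$, so it is non-negative on $(0,x_+(k_0)]$.

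For part (2), I would differentiate the defining equation $P(x, T_u(x, k)) = k\,Q(x, T_u(x, k))$ implicitly in $x$ to obtain
\[ \frac{\partial T_u}{\partial x}(x, k) = \frac{k\,Q_x - P_x}{P_z - k\,Q_z}\bigg|_{z = T_u(x,k)}. \]
After establishing the sign of the denominator $P_z - k\,Q_z$ along the upper branch from the polynomial expressions for $P$ and $Q$, the inequality $\partial_x T_u(x, k_0) < k_0$ reduces to a polynomial inequality in $(x, k_0)$. I would verify this strictly at the boundary value $k=\tfrac{1-p}{p}$ on $(0, x_P)$ using the explicit forms of $P$ and $Q$, and then invoke continuity in $k$ together with a compactness argument to conclude it still holds for $k_0$ slightly below $\tfrac{1-p}{p}$ on $(0, x_+(k_0))$.

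The principal difficulty lies in the uniform verification of the slope inequality for all $x\in(0,x_+(k_0))$. The reduction produces a polynomial expression whose sign must be controlled across the entire interval, not just pointwise. A helpful observation is that $T_u$ meets $T_d$ tangentially at $x_+(k_0)$, so $\partial_x T_u \to -\infty$ there and the bound is trivially satisfied near the right endpoint; likewise, the limit $\lim_{x\to 0^+}\partial_x T_u(x,k_0) = \tfrac{b(k_0)-2\delta k_0}{2p\delta(1+k_0)}$ is an explicit constant that can be checked directly. The constraint is therefore genuinely binding only on a compact sub-interval away from the endpoints, but verifying it there requires explicit algebraic work with the coefficients $a(k), b(k), c(k)$ from \eqref{abc}.
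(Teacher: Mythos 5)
Your part (1) is essentially the paper's argument: perturb down from $k=\tfrac{1-p}{p}$, where $x_+(\tfrac{1-p}{p})=x_P$, and use $\tfrac{d}{dk}x_+(\tfrac{1-p}{p})=\tfrac{p^2(G-A)}{(1-p)(A-\mu)^2}\le 0$ (your derivative formula is correct, and in the borderline case $A=G$ one in fact has $x_+(k)\equiv x_P$, so no separate perturbation argument is really needed). For the inclusion $(0,x_+(k_0)]\subseteq \sL_{k_0}$ you should also record that the radicand in \eqref{T_u expression} has \emph{real, distinct} roots for $k_0$ near $\tfrac{1-p}{p}$ (i.e. $b(k_0)^2-4a(k_0)c(k_0)>0$, which follows by continuity from the value $4(1-p)^2(A-\mu)^2$ at $k=\tfrac{1-p}{p}$) and that $x_+(k_0)$ is the smaller positive root; with that, your sign analysis matches what the paper does.

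Part (2), however, has a genuine gap: you defer exactly the step that constitutes the claim. Your plan is to verify $\partial_x T_u(x,k)<k$ at $k=\tfrac{1-p}{p}$ on $(0,x_P)$ and then conclude for $k_0$ slightly below by ``continuity in $k$ plus compactness,'' but the interval $(0,x_+(k_0))$ depends on $k_0$ and the level curve degenerates at $k=\tfrac{1-p}{p}$ (the radicand has a double root at $x_P$, where $T_u(\cdot,\tfrac{1-p}{p})$ is a straight line). Consequently continuity in $k$ only controls the slope on $[0,x_P-\epsilon]$ for fixed $\epsilon$, and your blow-down observation $\partial_x T_u\to-\infty$ only controls a $k_0$-dependent neighborhood of $x_+(k_0)$; nothing in the proposal rules out the slope exceeding $k_0$ in the intermediate zone near $x_P$, and you yourself flag this verification as still requiring ``explicit algebraic work.'' The missing idea is structural, not computational: since the radicand is a quadratic in $x$ with positive discriminant and $a(k_0)=2p\delta(1+k_0)>0$, the map $x\mapsto T_u(x,k_0)$ is \emph{strictly concave} on $(0,x_+(k_0))$ (this is the paper's one-line argument), so $\partial_x T_u(x,k_0)<\partial_x T_u(0,k_0)<k_0$, the last inequality being precisely one of the conditions already secured by continuity in your choice of $k_0$ in part (1). (Equivalently, one can note that the radicand is decreasing on $(0,x_+(k_0))$, hence $\partial_x T_u(x,k_0)<\tfrac{b(k_0)}{2a(k_0)}$, and $\tfrac{b(k)}{2a(k)}<k$ at $k=\tfrac{1-p}{p}$ exactly because $\mu<A$.) Without such an observation, the continuity/compactness scheme as written does not close.
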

\begin{proof} 
(1) A direct calculation shows that, for $k=\tfrac{1-p}{p}$, we have
  $x_+(k)=x_P>0$, as well as 
\begin{equation}\nonumber
 \begin{split}
   b^2(k)- 4 a(k) c(k) &= 4 (1-p)^2 (A-\mu)^2>0,\text{ and }
   k- \tfrac{d}{dx}T_u(0,k) = \tfrac{1}{\delta} (1-p)(A-\mu)>0.
 \end{split}
\end{equation}
By continuity, we can find $k_0 \in (0,\frac{1-p}{p})$ such that 
$$b(k_0)^2- 4a(k_0)c(k_0)>0, \ k_0>\tfrac{\partial}{\partial x}
T_u(0,k_0) \text{ and } x_+(k_0)>0.$$
We can check that $x_+(k_0)\leq x_-(k_0)$, which, in turn, implies
that $\sL_{k_0} = (0,x_+(k_0)]\cup [x_-(k_0),\infty)$. Since
$x_+(\frac{1-p}{p}) =x_P$ and $ \tfrac{d}{dk}x_+(\frac{1-p}{p}) \leq 0$, we conclude that $x_+(k_0)\geq
x_P$\\
(2) The result follows from $\tfrac{\partial^2}{\partial x^2} T_u(x,k_0)<0$ and
$\tfrac{\partial}{\partial x} T_u(0,k_0)< k_0$.
\end{proof}

With the constant $k_0$ as in Lemma \ref{prop:k0} above fixed,
we
define the \define{containment curve} $\tau : [0,\infty) \to \bR$ and
a \define{containment region} $\Omega_0 \in \bR^2$ by
\[ \tau(x)= \max_{x'\in[0,x]}T_u(x' \wedge x_+(k_0),k_0),\text{ and }
  \Omega_0 = \sets{(x,z)\in (0,\infty)\times \R}{T_d(x,0) \vee \tau(x)\leq z \leq T_u(x,0)
}.\]
The significance of these objects is made clearer in the following
proposition. The reader is invited to consult Figure 2 for
an illustration.
\begin{proposition}\label{Omega} For $0<p<1$ and $\mu<A$, 
  the following statements hold:
  \begin{enumerate}[itemindent=-3ex]
  \item If $\pi\geq 1$, then $\Omega_0 \cap \bB = \set{P}$, $\tau(x_P)=T_u(x_P,0)$, and
	$x_P\leq x_N$. For $\pi<1$, $\Omega_0\cap \bB = \emptyset$.
\item $\Omega_0$ is simply connected. It is bounded if and only 
  if $\mu<G$.
\item $(\Omega_0 \setminus \set{P}) \cap \{L(x,z)=k\} = (\Omega_0
  \setminus \set{P}) \cap\set{ z=T_u(x,k)\text{ or }z=T_d(x,k)}$.
\item $(\Omega_0 \setminus \set{P}) \cap \{ L(x,z)>k \} = (\Omega_0 \setminus \set{P}) \cap \{ T_d(x,k)<z<T_u(x,k)\}$.
\item $0\leq L(x,z)\leq k_0$ for $(x,z)\in \Omega_0 \setminus \{P\}$.
\item $\tau \in C^1([0,\infty))$. For $x>0$ such that
  $T_d(x,0)<\tau(x)<T_u(x,0)$, we have $\tau'(x) < L(x,\tau(x))$. 
\end{enumerate}
\end{proposition}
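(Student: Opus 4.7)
The plan is to prove the six items in a reorganized order: parts (3)–(5) first, since they are essentially algebraic and give the structure of level sets of $L$; then (1), which contains the geometric heart of the proposition; and finally (2) and (6), which are consequences.

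For parts (3)–(5), the key observation is that the level set $\{L = k\}$ on the domain $\{Q \neq 0\}$ coincides with $\{P - k\, Q = 0\}$. For each fixed $x$, $P(x, \cdot) - k\, Q(x, \cdot)$ is a polynomial of degree at most two, so it admits at most two roots, and these are by definition precisely $T_u(x,k)$ and $T_d(x,k)$. This proves (3). For (4), on each vertical slice of $\Omega_0 \setminus \{P\}$, the sign of $Q(x, \cdot)$ is constant (by (1)), and one checks that $\{L > k\}$ corresponds to the region strictly enclosed by the roots $T_d(x,k) < T_u(x,k)$. For (5), on the upper/lower portions of $\partial\Omega_0$ lying on $\bT$ we have $L = 0$, while on the portion where $\tau(x) = T_u(x, k_0)$ we have $L = k_0$ by definition of $T_u(\cdot, k_0)$; the bound $0 \leq L \leq k_0$ on the interior follows by continuity of $L$ on $\Omega_0 \setminus \{P\}$ (guaranteed by (1)) together with (4).

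For part (1), I would evaluate $Q(x, T_u(x,0))$ and $Q(x, T_d(x,0))$ using the explicit formulas for $P, Q$ and verify that zeros along $\bT$ occur only at the singular point $P = (x_P, z_P)$. The dichotomy in terms of $\pi \lessgtr 1$ then emerges from the sign of $Q$ on the strip: under $\pi < 1$ the strip $\Omega_0$ lies entirely in one of the open half-spaces determined by $\bB$, so $\Omega_0 \cap \bB = \emptyset$; under $\pi \geq 1$, the curve $\bB$ enters the strip precisely through $P$. The inequality $x_P \leq x_N$ reduces, via the explicit formulas $x_P = \tfrac{1}{A - \mu}$ and $x_N = \tfrac{2\mu}{G^2 - \mu^2}$, to an algebraic condition equivalent to $\mu \geq (1-p)\sigma^2$, i.e., $\pi \geq 1$. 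The identity $\tau(x_P) = T_u(x_P, 0)$ is then a matter of checking that $T_u(\cdot, k_0)$ attains the value $T_u(x_P, 0)$ on $[0, x_P]$, using $x_+(k_0) \geq x_P$ from Lemma \ref{prop:k0}(1).

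For (2), once (1) is established, $\Omega_0$ is a union of closed vertical segments $\{x\} \times [T_d(x,0) \vee \tau(x),\, T_u(x,0)]$ whose endpoints depend continuously on $x$, which makes $\Omega_0$ homeomorphic to a closed planar disk and hence simply connected. Boundedness reduces to boundedness of $\bT$, a conic whose discriminant changes sign at $\mu = G$; thus $\Omega_0$ is bounded iff $\mu < G$. For (6), from Lemma \ref{prop:k0} the function $T_u(\cdot, k_0)$ is concave on $[0, x_+(k_0)]$ with slope strictly below $k_0$, so it has a unique maximum and $\tau$ coincides with $T_u(\cdot, k_0)$ up to this maximum and is constant thereafter; $C^1$-regularity follows by matching the zero derivatives at the join. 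The strict inequality $\tau'(x) < L(x, \tau(x))$ splits: on the increasing branch, $L(x, \tau(x)) = k_0 > \tau'(x)$ by Lemma \ref{prop:k0}(2); on the constant branch, $\tau'(x) = 0$ while $L(x, \tau(x)) > 0$ by (4), since $T_d(x,0) < \tau(x) < T_u(x,0)$.

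The main obstacle will be the geometric analysis in (1), in particular confirming that $\bB$ meets $\Omega_0$ in exactly the single point $P$ (rather than a larger subset) when $\pi \geq 1$, handling the borderline case $\pi = 1$ where $P$ coincides with $N$, and extracting the equivalence $x_P \leq x_N \iff \pi \geq 1$ cleanly from the explicit expressions. Once (1) is under control, the remaining parts are largely bookkeeping from the definitions and Lemma \ref{prop:k0}.
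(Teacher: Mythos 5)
Your handling of parts (3)--(6) tracks the paper's proof closely (level sets of $L$ as roots of $P-kQ$, the two-branch description of $\tau$, and the split of part (6) into the increasing branch where $L=k_0>\tau'$ and the flat branch where $\tau'=0<L$). The gap is in part (1), which you correctly flag as the heart of the matter but for which your proposed route would not work. Verifying that $Q$ has no zeros along $\bT$ other than at $P$ only controls $\bB\cap\bT$; it says nothing about $\bB$ crossing the two-dimensional interior of $\Omega_0$, and a priori it does: the branches $T_{u,d}(\cdot,\infty)$ of $\bB$ on $(0,x_-(\infty)]$ emanate from the same point $(0,\tfrac{1-p}{\delta})$ as $T_{u,d}(\cdot,0)$ and lie between the two branches of $\bT$ near $x=0$, so the region bounded below by $T_d(\cdot,0)$ alone \emph{would} meet $\bB$. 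What excludes $\bB$ is precisely the containment curve: the paper's Claim~1 shows $\tau(x)>T_u(x,\infty)$ on $(0,x_-(\infty)]$ by comparing three curves through $(0,\tfrac{1-p}{\delta})$ --- the straight line $T_u(\cdot,\tfrac{1-p}{p})$ dominates the concave $T_u(\cdot,\infty)$ because it has the larger slope at $0$, and $T_u(\cdot,k_0)>T_u(\cdot,\tfrac{1-p}{p})$ there. Your proposal never compares $\tau$ with $\bB$, so the assertion that ``$\Omega_0$ lies in one open half-space determined by $\bB$'' is the conclusion restated rather than an argument. Similarly, pinning down that the intersection is \emph{exactly} $\set{P}$ when $\pi\ge 1$ and empty when $\pi<1$ requires the paper's Claim~2, $\tau(x_P)\ge z_P$ with equality iff $\pi\ge 1$, whose proof runs through a case analysis on the sign of $\tfrac{\delta}{p}-\tfrac{(1-p)\sigma^2}{2}$ (which decides whether $z_P=T_u(x_P,k_0)$ or $z_P=T_d(x_P,k_0)$) and on the sign of $\partial_x T_u(x_P,k_0)\propto\pi-1$. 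None of this appears in your sketch.

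Two smaller points. First, your ordering is circular as stated: you propose to prove (3)--(5) before (1), yet your arguments for (4) and (5) invoke (1) (constancy of the sign of $Q$ on vertical slices is exactly $\Omega_0\cap\bB\subseteq\set{P}$); the paper proves (1) first for this reason. Second, in (2) the claim that $\Omega_0$ is a continuous union of vertical segments presupposes that $\sets{x>0}{\tau(x)\le T_u(x,0)}$ is an interval, which is not automatic: the paper proves it by showing $(0,x_m]$ is contained in this set and then using concavity of $T_u(\cdot,0)$ together with the constancy of $\tau$ beyond $x_m$ to rule out re-entry. That step should be made explicit.
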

\begin{proof}
(1) It is easily checked that, when all sets are seen as subsets of
$(0,\infty)\times \R$, that 
\begin{displaymath}\begin{split}
 \Omega_0\cap \bB &=\Big\{T_d(x,0)\vee \tau(x) \leq z \leq
   T_u(x,0)\Big\}
 \bigcap  \Big(\Big\{x\leq x_-(\infty),\  z=T_d(x,\infty)
  \textrm{  or  }T_u(x,\infty)\Big\}\cup \{P\}\Big).
\end{split}
\end{displaymath}
Hence,  it will be enough to show that following two claims hold:

\smallskip

\noindent {\bf Claim 1:} \emph{For $0<x\leq x_{-}(\infty)$, we have
  $\tau(x)>T_u(x,\infty)$.}\ \ 
Since $T_u(\cdot,\tfrac{1-p}{p})$ is a straight line on $[0,x_P]$ and
$T_u(\cdot,\infty)$ is concave on $[0,x_-(\infty)]$, the easy-to-check
facts that
$$T_u(0,\tfrac{1-p}{p})=T_u(0,\infty)\text{ and } \tfrac{\partial}{\partial x}T_u(0,\tfrac{1-p}{p}) >T_u'(0,\infty) $$
imply that $T_u(\cdot,\tfrac{1-p}{p})>T_u(\cdot,\infty)$, on  $(0,x_-(\infty)]$.
Similarly, $T_u(\cdot,k_0)>T_u(\cdot,\tfrac{1-p}{p})$ on $(0,x_-(\infty)]$.

\smallskip

\noindent{\bf Claim 2:} \emph{$\tau(x_P) \geq z_P$, with equality if
  and only if $\pi \geq 1$.}\ \
Several sub-cases are considered:
\begin{enumerate}[leftmargin=1.8em]
  \item[i)]
 $\pi\geq 1$ : Then, $\tfrac{\delta}{p}>\tfrac{(1-p)\sigma^2}{2}$ and $z_P=T_u(x_P,k_0)$. Since $T_u(\cdot,k_0)$ is strictly concave and 
$\tfrac{\partial}{\partial x}T_u(x_P,k_0)=\tfrac{2(1-p)^2 \sigma^2
  (\pi-1)}{2\delta-p(1-p)\sigma^2} \geq 0$, the map
$T_u(\cdot,k_0)$ is strictly increasing on $[0,x_P]$. Thus, $\tau(x_P)=T_u(x_P,k_0)=z_P$.
\item[ii)]
$\pi <1$, $\tfrac{\delta}{p}>\tfrac{(1-p)\sigma^2}{2}$ :
$\tfrac{\partial}{\partial x}T_u(x_P,k_0)<0$ implies that
$\tau(x_P)>T_u(x_P,k_0)=z_P$.
\item[iii)]  $\pi <1$, $\tfrac{\delta}{p}<\tfrac{(1-p)\sigma^2}{2}$ : $\tau(x_P)\geq T_u(x_P,k_0)>T_d(x_P,k_0)=z_P$.
\item[iv)] $\pi <1$, $\tfrac{\delta}{p}=\tfrac{(1-p)\sigma^2}{2}$ : $\lim_{x\nearrow x_P} \tfrac{\partial}{\partial x}T_u(x,k_0) = -\infty$ implies $\tau(x_P)>T_u(x_P,k_0)=z_P$.
\end{enumerate}

\medskip

\noindent (2) For the simple connectedness of $\Omega_0$, it is enough
to show that $\{x>0 : T_d(x,0)\vee \tau(x) \leq T_u(x,0)\}$ is an
interval. Given that $T_d(x,0) \leq T_u(x,0)$, for all $x$, 
it is enough to show that $\{x>0 : \tau(x) \leq T_u(x,0)\}$ is an
interval. With $x_m\in \argmax_{x\in
  [0,x_+(k_0)]}T_u(x,k_0)$, 
similarly as in the proof of Claim 1, we observe that for $x\in [0,x_m]$,
$T_u(x,k_0)\leq T_u(x,0)\text{ and that }  T_u(\cdot,0) \textrm{
  strictly increases.}$
Therefore,  $(0,x_m]\subset \{x>0 : \tau(x) \leq T_u(x,0)\}$. 
Since
$T_u(\cdot,0)$ is strictly concave and $\tau$ is constant after
$x_m$, we have $T_u(\cdot,0)< \tau$, to the right of the right-most
point at which 
$T_u(\cdot, 0)$ equals $\tau$.

Since $T_u(\cdot,0)$ is strictly concave and $\tau$ is constant after $x_m$, boundedness of $\Omega_0$ is equivalent to the boundedness of the
domain $\sL_0$, of $T_u(\cdot,0)$ and $T_d(\cdot,0)$. The set $\sL_0$
is, in turn, bounded, if and only if $\mu<G$.

  \medskip

  \noindent (3) The statement follows from definitions of $T_u$ and $T_d$, and the fact that $P$ is the
unique singular point.

\medskip

\noindent (4) We only need to observe that
$\big(\Omega_0 \setminus\{P\}\big) \subset \{P(x,z)\geq0\} \cap \{Q(x,z)>0\}$ from (1).

\medskip

\noindent (5) The result follows from (3),(4) and the definition of $\tau$.

\medskip

\noindent (6) $C^1$-smoothness of $\tau$ follows easily from the construction. 
With $x_m$ as in (2) above, we have 
$$\tau(x)= \left\{\begin{array}{ll} T_u(x,k_0), & \textrm{for  } x\in
	[0,x_m],\\
T_u(x_m,k_0), &\textrm{for  } x\in [x_m,\infty). \end{array} \right.$$
For $x\in [0,x_m)$, by Lemma \ref{prop:k0}, we have
$$\tau'(x)= \tfrac{\partial}{\partial x} T_u(x,k_0)<k_0=L(x,T_u(x,k_0))=L(x,\tau(x)).$$
For $x\in [x_m,\infty)$, $\tau'(x)=0$, but the condition
$T_d(x,0)<\tau(x)<T_u(x,0)$ implies that
$$(x,\tau(x))\in \{Q(x,z)>0, \ P(x,z)>0\}\subset \{L(x,z)>0\}.\qedhere
$$
\end{proof}
With the result of Proposition \ref{Omega} in hand, we can say 
more about the shape of the function $\ga$ and its domain
$\sD_{\alpha}$. In particular, we show that the graph
$\Gamma_{\alpha}$ stays at a positive distance from any point of
$\mathbb{B}$, except, maybe, $P$. Remember that $x_N=\infty$ if $\mu \geq G$.
\begin{proposition}\label{Gamma}
  For $\alpha\in (0,x_N)$ such that $(\alpha,T_u(\alpha,0))\neq P$, we have
  \begin{enumerate}[leftmargin=1.8em]
	\item $(\Gamma_{\alpha}\setminus\{P\})\cap \{(x,T_u(x,0)):\alpha<x\leq x_N\}=\emptyset$.
	\item $\Gamma_{\alpha} \subseteq \Omega_0$.
	\item $\sD_{\alpha}$ is a closed
  interval of the form $[\alpha,\beta_{\alpha}]$, for some
  $\beta_{\alpha}\in (\alpha,\infty]$. If $\beta_{\alpha}<\infty$ and $(\beta_{\alpha},\ga(\beta_{\alpha}))\neq P$, then $\ga'(\beta_{\alpha})=0$.
   \end{enumerate}
 \end{proposition}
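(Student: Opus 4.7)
My plan is to prove the three claims sequentially, relying chiefly on the geometric control established in Proposition \ref{Omega} and on slope-comparison arguments against the barriers $T_u(\cdot,0)$ and $\tau$. The unifying strategy is: whenever $g_\alpha$ attempts to touch one of these barriers, the required one-sided slope inequality at the contact point directly contradicts the value of $L$ computed there.

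For part (1), I would argue by contradiction. Suppose some $x^*\in (\alpha,x_N]$ satisfies $g_\alpha(x^*)=T_u(x^*,0)$ with $(x^*,T_u(x^*,0))\neq P$, and take the smallest such $x^*$. On $[\alpha,x^*]$ the inequality $g_\alpha\leq T_u(\cdot,0)$ holds by the maximal-inner-solution property, so the difference $g_\alpha-T_u(\cdot,0)$ attains a local maximum at $x^*$, forcing $(g_\alpha-T_u(\cdot,0))'(x^*-)\geq 0$. But $g_\alpha'(x^*)=L(x^*,T_u(x^*,0))=0$ while $T_u'(x^*,0)>0$ since $x^*\leq x_N$ and $T_u(\cdot,0)$ is strictly increasing there. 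This yields $(g_\alpha-T_u(\cdot,0))'(x^*)<0$, a contradiction.

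For part (2), the two-sided bound $T_d(\cdot,0)\leq g_\alpha\leq T_u(\cdot,0)$ is built into the definition of a maximal inner solution, so only $g_\alpha\geq\tau$ needs an argument. Let $x^*=\inf\{x\in\sD_\alpha : g_\alpha(x)<\tau(x)\}$; then $g_\alpha(x^*)=\tau(x^*)$ and $g_\alpha\geq\tau$ on $[\alpha,x^*]$. Of the three possibilities $\tau(x^*)=T_u(x^*,0)$, $\tau(x^*)=T_d(x^*,0)$, or $T_d(x^*,0)<\tau(x^*)<T_u(x^*,0)$, the first is excluded by part (1). The second forces $g_\alpha(x^*)=T_d(x^*,0)$ and is ruled out by an analogous slope-comparison with $T_d$. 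The third case contradicts Proposition \ref{Omega}(6): the strict inequality $\tau'(x^*)<L(x^*,\tau(x^*))=g_\alpha'(x^*)$ clashes with the fact that $g_\alpha-\tau\geq 0$ on a left-neighborhood of $x^*$ with equality at $x^*$, which forces $(g_\alpha-\tau)'(x^*-)\leq 0$.

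For part (3), the domain $\sD_\alpha$ is an interval by maximality. Assuming $\beta_\alpha<\infty$, parts (1)--(2) confine the graph to the compact set $\Omega_0\cap\{x\leq\beta_\alpha\}$, which by Proposition \ref{Omega}(1) lies at positive distance from $\bB\setminus\{P\}$; hence $L(\cdot,g_\alpha(\cdot))$ is bounded up to $\beta_\alpha$ and $g_\alpha$ extends continuously to the closed interval $[\alpha,\beta_\alpha]$. If $(\beta_\alpha,g_\alpha(\beta_\alpha))\neq P$, maximality prevents the extension of $g_\alpha$ past $\beta_\alpha$, which can only happen if the endpoint sits on $\bT$, i.e., $g_\alpha(\beta_\alpha)\in\{T_u(\beta_\alpha,0),T_d(\beta_\alpha,0)\}$. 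Part (1) rules out the $T_u$ alternative for $\beta_\alpha\leq x_N$, leaving either $g_\alpha(\beta_\alpha)=T_d(\beta_\alpha,0)$ or the coalescence point $\beta_\alpha\geq x_N$ where $T_u=T_d$; in either case $L(\beta_\alpha,g_\alpha(\beta_\alpha))=0$, and hence $g_\alpha'(\beta_\alpha)=0$. The main obstacle I anticipate is making rigorous the claim that $g_\alpha$ truly cannot be extended past a point of $\bT$ off $P$, and handling the logically parallel but geometrically delicate situation where $\Gamma_\alpha$ approaches the singular point $P$ itself; the containment curve $\tau$ was engineered precisely so that any encounter with $\bB$ must occur at $P$, but quantifying this and excluding tangential approaches to $P$ requires the local analysis near the singularity that is postponed to the subsequent propositions.
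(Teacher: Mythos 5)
Your overall strategy (first-exit points plus slope comparison against the barriers $T_u(\cdot,0)$, $T_d(\cdot,0)$ and $\tau$) is the same as the paper's, but there are two concrete gaps, both at exactly the degenerate points that make this proposition delicate.

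First, in part (1) your contradiction rests on the claim that $\tfrac{\partial}{\partial x}T_u(x^*,0)>0$ ``since $x^*\leq x_N$''. This is false at the endpoint $x^*=x_N$: when $\mu<G$ the North pole is the maximum of $T_u(\cdot,0)$, so $\tfrac{\partial}{\partial x}T_u(x_N,0)=0=\ga'(x_N)$ and the first-order comparison degenerates. Since the statement explicitly includes $x\leq x_N$, a touching at $x_N$ must be excluded by a second-order argument: the paper computes $\ga''(x_N)=\tfrac{d}{dx}L(x,\ga(x))\big|_{x=x_N}=0>\tfrac{\partial^2}{\partial x^2}T_u(x_N,0)$ (strict concavity of $T_u(\cdot,0)$), which contradicts $\ga\leq T_u(\cdot,0)$ on a left neighborhood of $x_N$. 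Your argument as written does not cover this case.

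Second, in part (2) your trichotomy disposes of the case $\tau(x^*)=T_u(x^*,0)$ by appealing to part (1). But part (1) only excludes points of $\Gamma_{\alpha}\setminus\{P\}$ from the upper branch, and the coalescence $\tau(x_P)=T_u(x_P,0)$ occurs precisely at the singular point $P$ (when $\pi\geq 1$, by Proposition \ref{Omega}(1)); moreover $P$ genuinely does lie on $\Gamma_{\alpha}$ for $\alpha\leq x_P$. So the exit scenario $x^*=x_P$ is not excluded by your case analysis, and it cannot be ``postponed to subsequent propositions'': Propositions \ref{pro:SECOND} and \ref{pro:THIRD} use $\Gamma_{\alpha}\subseteq\Omega_0$ as an input when analyzing the passage through $P$. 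The paper closes this with a separate argument: if $x_0=x_P$ is the first exit, there is $x'>x_P$ with $\ga(x')<\tau(x')$, hence by the mean value theorem some $x''\in(x_P,x')$ with $\ga'(x'')<\tau'(x'')$; since $L(x,\cdot)$ is decreasing near $P$, $L(x'',\ga(x''))\geq L(x'',\tau(x''))\geq\tau'(x'')>\ga'(x'')=L(x'',\ga(x''))$, a contradiction. Part (3) of your proposal is essentially the paper's argument and is fine (note only that boundedness of $L$ on $\Omega_0\setminus\{P\}$ already follows from Proposition \ref{Omega}(5), which also covers the continuous extension when the endpoint is $P$ itself, via the squeeze between $\tau$ and $T_u(\cdot,0)$).
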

 \begin{proof} 

(1) Suppose that it is not true. Then, there exists $x_0\in(\alpha,x_N]$ such that $\ga(x_0)=T_u(x_0,0)$ and $(x_0,\ga(x_0))\neq P$. Since $T_u(x,0)\geq \ga(x)$ for $x<x_0$ close enough, we have $\tfrac{\partial}{\partial x}T_u(x_0,0) \leq \ga(x_0)$. Combine this with $\ga'(x_0)=0$, we deduce that $x_0=x_N$ and $\mu<G$. So, $\ga(x_N)=T_u(x_N,0)$ and $\ga'(x_N)=0=\frac{\partial}{\partial x}T_u(x_N,0)$. Using this, we can calculate $\ga''(x_N)=\frac{d}{dx}L(x,\ga(x))\vert_{x=x_N}=0>\frac{\partial^2}{\partial x^2}T_u(x_N,0)$, which contradict to the fact that $T_u(x)\geq \ga(x)$ for $x<x_N$ close enough.

   (2) Noting that
$(\alpha,\ga(\alpha))\in\Omega_0$, 
we assume that there exists a point $x\in \sD_{\alpha}$ with
$(x,\ga(x))\not \in\Omega_0$. With $x_0$ denoting the infimum of all such
points, we observe immediately that $x_0>\alpha$ and $x_0<x_E\in
(0,\infty]$. 
If, additionally, $x_0\ne x_P$, we can use the continuity of $\ga$ to conclude that
$\ga(x_0)=\tau(x_0)$ and $\ga'(x_0)\leq \tau'(x_0)$. By using (1), 
we can exclude the case $\ga(x_0)\in \{T_u(x_0,0),T_d(x_0,0)\}$, otherwise, $D_{\alpha}$ should be $[\alpha,x_0]$, which contradicts to the choice of $x_0$.
Then,
since $T_d(x_0,0) < \tau(x_0) < T_u(x_0,0)$, we reach a contradiction
with part (6) of Lemma \ref{Omega}:
\[ L(x_0,\ga(x_0)) = \ga'(x_0) \leq \tau'(x_0) <
  L(x_0,\tau(x_0))=L(x_0, \ga(x_0)).\]

In the case $x_0=x_P$, we have $\ga(x_0)=\tau(x_0)$ and, by the
definition of the point $x_0$ and the domain $\Omega_0$, there exists
$x'>x_0$, $x'\in \sD_{\alpha}$ such that $\ga(x')<\tau(x')$.
Consequently, we have $\ga'(x'')< \tau'(x'')$ for some
$x_0<x''<x'$, and we observe that $L(x'',\ga(x''))\geq L(x'',\tau(x''))$, because $L(x,z)$ is a decreasing function of $z$ near $P$. Now we reach a contradiction as in the case $x_0\neq x_P$:
\[ L(x'',\ga(x'')) = \ga'(x'') <\tau'(x'') \leq
  L(x'',\tau(x''))\leq L(x'', \ga(x'')),\]
where $\tau'(x'') \leq  L(x'',\tau(x''))$ can be showed by part (6) of Lemma \ref{Omega}.
\medskip

\noindent (3) We first observe that the initial value $(\alpha,\ga(\alpha))$ lies on the graph of the function $T_u(\cdot,0)$ which is strictly increasing in the neighborhood of $\alpha$. Therefore, any extension of $\ga$ to the left of $\alpha$ would cross $\bT$ and exit the set $\set{ z\leq  T_u(x,0)}$. So, left end-point of the domain $\sD_{\alpha}$ should be $\alpha$.

To deal with the right end-point of $\sD_{\alpha}$, we note that noe of the following must occur: 1) $\ga$ explodes, 2) $\bB$ cossed, or 3) $\bT$ is hit. The first possibility is easily ruled out by the observation that no explosion can happen without $\ga$ crossing the curve $\bT$, first. The second possibility is severely limited by (2) above; indeed, with part (1) of Proposition \ref{Omega}, $\Gamma_{\alpha}\cap \bB \subset \set{P}$. It is clear now that, in the right end-point limit $\beta_{\alpha}$, the function
  $\ga$ hits $\bT$, provided $\beta_{\alpha}<\infty$. 
  For $\beta_{\alpha}<\infty$ and $(\beta_{\alpha},\ga(\beta_{\alpha})) \neq P$, $\lim_{x\nearrow
	\beta_{\alpha}} \ga(x)$ clearly exists, and, so,
  $\beta_{\alpha}\in \sD_{\alpha}$. Furthermore, $\ga'(\ba)=0$ since $(\ba,\ga(\ba))\in \bT \setminus\{P\}$.
  In case $\beta_{\alpha}<\infty$ and $(\beta_{\alpha},\ga(\beta_{\alpha}))= P$, we also conclude that $\beta_{\alpha}\in \sD_{\alpha}$, by observing that
  $\tau(x) \leq \ga(x) \leq T_u(x,0)$ for $x<x_P$ and
  $\lim_{x\nearrow x_P} T_u(x,0) = \lim_{x\nearrow x_P} \tau(x)$. 
\end{proof}

\subsection{The sub-case $\mu<G$}
We focus on the case $\mu<G$ in this subsection. The curve $\bT$ is
now an ellipse and it admits a north pole with the $x$-coordinate
$x_N<\infty$. By part (1) of Proposition \ref{Omega} and (2) of Proposition \ref{Gamma}, we have the
following dichotomy, valid for all $\alpha\in (0,x_N)$.
\begin{enumerate}[leftmargin=1.8em]
	\item For $\pi<1$, $P\not\in \Gamma_{\alpha}$, and
	 \item For $\pi\geq 1$, $P\in \Gamma_{\alpha}$ if and only if
	   $x_P\in \sD_{\alpha}$. 
\end{enumerate}
We start with the first possibility which avoids the singularity $P$
altogether.
\begin{proposition}[$0<p<1$, $\mu<G$, $\pi<1$] 
  \label{pro:FIRST} 
  Suppose that $0<p<1$, $\mu<G$ and $\pi<1$. Then,
  $\beta_{\alpha}\in (\alpha, x_E]$ and  $\ga$ is
  of class $C^{\infty}$ on $\Int\sD_{\alpha}$, for all $\alpha\in
  (0,x_N)$. Moreover, the function
  $G(\alpha) = \int_{\alpha}^{\beta_{\alpha}} \tfrac{\ga'(x)}{x}\,
	dx$ 
  has the following properties:
  \begin{equation}\label{equ:props1}
   \begin{split}
	 \text{$G$ is continuous on $(0,x_N)$},
	\textstyle\lim_{\alpha\searrow 0} G(\alpha) = +\infty,
	\text{ and }
	\lim_{\alpha\nearrow x_N} G(\alpha) = 0.
   \end{split}
  \end{equation}
  In particular, $\ga$ solves the free-boundary problem 
  \eqref{equ:HJB-g}, \eqref{equ:integral-cond}, for some $\alpha\in
  (0,x_N)$. 
\end{proposition}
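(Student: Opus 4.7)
The plan is to follow the program of Section~\ref{sec:ODE}: for each $\alpha \in (0, x_N)$ construct the maximal inner solution $\ga$ starting from $(\alpha, T_u(\alpha, 0))$, analyze the map $\alpha \mapsto G(\alpha) := \int_\alpha^{\ba} \ga'(x)/x\, dx$, and apply the intermediate value theorem after establishing continuity together with the two prescribed limits at $0$ and at $x_N$.

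The hypothesis $\pi < 1$ is decisive for the structural claims. Proposition~\ref{Omega}(1) gives $\Omega_0 \cap \bB = \emptyset$, so Proposition~\ref{Gamma}(2) confines $\Gamma_\alpha$ to a region bounded away from the singular curve $\bB$. Proposition~\ref{Gamma}(3) then yields $\sD_\alpha = [\alpha, \ba]$ with $\ga'(\ba) = 0$, while boundedness of $\Omega_0$ (Proposition~\ref{Omega}(2), using $\mu < G$) together with the ellipse geometry forces $\ba \leq x_E$. Real-analyticity of $L = P/Q$ away from $\bB$ yields $\ga \in C^\infty(\Int \sD_\alpha)$. Continuity of $\alpha \mapsto G(\alpha)$ follows from standard continuous dependence of ODE solutions on their initial conditions $(\alpha, T_u(\alpha, 0))$ together with the implicit function theorem applied to $\ga(x) = T_u(x, 0)$ at $x = \ba$; the required transversality comes from $\ga'(\ba) = 0 \neq T_u'(\ba, 0)$, where $T_u'(\ba, 0) < 0$ because $\ba > x_N$ (a consequence of Proposition~\ref{Gamma}(1) combined with the monotonicity $\ga' \geq 0$ from Proposition~\ref{Omega}(5)).

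The limit $\alpha \nearrow x_N$ is handled by a squeeze: the initial point approaches the north pole $N$, so $\ba \to x_N$ as well, and the uniform bound $|\ga'| \leq k_0$ from Proposition~\ref{Omega}(5) gives $G(\alpha) \leq k_0 \log(\ba/\alpha) \to 0$. For the limit $\alpha \searrow 0$ I first establish a non-crossing property: for $\alpha_1 < \alpha_2$, Proposition~\ref{Gamma}(1) yields $g_{\alpha_1}(\alpha_2) < T_u(\alpha_2, 0) = g_{\alpha_2}(\alpha_2)$, and ODE uniqueness (valid since $\Omega_0 \cap \bB = \emptyset$) prevents crossing, so $g_{\alpha_1}(x) \leq g_{\alpha_2}(x)$ on the common domain. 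Hence $\ga(x)$ is monotone in $\alpha \searrow 0$ for each fixed $x > 0$; combined with $|\ga'| \leq k_0$ and Arzel\`a--Ascoli, the limit $g_0(x) := \lim_{\alpha \searrow 0} \ga(x)$ is continuous on $(0, x_E]$, satisfies the ODE there, is attained locally uniformly on $(0, x_E]$, and extends to $g_0(0^+) = T_u(0, 0) =: z_0$.

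The key new input is the local analysis of the ODE at the singular point $(0, z_0)$, where $P$ and $Q$ both vanish. Taylor-expanding $P$ and $Q$ around $(0, z_0)$ and inserting the ansatz $g_0(x) = z_0 + cx + o(x)$ reduces self-consistency to a quadratic of the form $c^2 + A_1 c - B_1 = 0$ with $B_1 = \mu(1-p)/\delta > 0$, whose unique positive root gives $g_0'(0^+) = c > 0$. Consequently $g_0'(x)/x$ is non-integrable near $0$; given any $M > 0$, one picks $0 < \epsilon' < \epsilon$ with $\int_{\epsilon'}^{\epsilon} g_0'(x)/x\, dx > M$, and Dini's theorem (monotone convergence of continuous functions to a continuous limit on the compact $[\epsilon', \epsilon]$) transfers the bound to $G(\alpha) \geq \int_{\epsilon'}^{\epsilon} \ga'(x)/x\, dx > M$ for all sufficiently small $\alpha$. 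Continuity together with $G(0^+) = +\infty$ and $G(x_N^-) = 0$ then produces $\alpha \in (0, x_N)$ with $G(\alpha) = \log((1+\old)/(1-\uld))$ via the IVT. The main obstacle is the singular linearization at $(0, z_0)$ together with the justification that the monotone limit $g_0$ genuinely exhibits the predicted linear behavior at the origin; everything else reduces to standard ODE theory layered on top of the geometry already assembled in Section~\ref{rigorous}.
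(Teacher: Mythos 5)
Most of your proposal runs parallel to the paper's proof: the structural claims via Propositions \ref{Omega} and \ref{Gamma}, smoothness away from $\bB$, continuity of $G$ through continuous dependence plus the implicit function theorem at $\ba$, and the squeeze giving $\lim_{\alpha\nearrow x_N}G(\alpha)=0$ are all essentially the arguments of the paper (one small imprecision: the exit of $\ga$ may occur on $T_d(\cdot,0)$ rather than $T_u(\cdot,0)$, so transversality must be checked there too, and $\ba>x_N$ is only known conditionally on the exit being through the upper branch). The genuinely different step is your treatment of $\lim_{\alpha\searrow 0}G(\alpha)=+\infty$ via the monotone limit $g_0=\lim_{\alpha\searrow 0}\ga$, and that is where there is a real gap. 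Your entire argument rests on the claim that $g_0(x)=z_0+cx+o(x)$ with $c>0$ the positive root of the quadratic obtained by Taylor-expanding $P$ and $Q$ at $(0,z_0)$ and inserting the ansatz. But $(0,z_0)$ is a common zero of $P$ and $Q$ through which \emph{every} level curve $T_u(\cdot,k)$, $k\in[0,k_0]$, passes, so the ODE selects no direction there, and the self-consistency computation presupposes exactly what must be proved: that the one-sided difference quotient of $g_0$ at $0$ has a limit (and a positive one), rather than oscillating or degenerating. The squeeze $\tau\le g_0\le T_u(\cdot,0)$ does not rescue this: it only traps the difference quotient between $\tfrac{\partial}{\partial x}T_u(0,k_0)$ and $\tfrac{\partial}{\partial x}T_u(0,0)$, and the lower slope $\tfrac{\partial}{\partial x}T_u(0,k_0)=\tfrac{(1-p)(2\mu(1+k_0)-\sigma^2k_0)}{2\delta(1+k_0)}$ need not be positive (e.g.\ when $\pi<\tfrac12$ and $k_0$ is close to $\tfrac{1-p}{p}$). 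A priori $g_0$ could leave the corner hugging the zero-level curve $T_u(\cdot,0)$, with $g_0'=L(\cdot,g_0)$ dipping toward $0$, and then the non-integrability of $g_0'/x$ — which needs a positive lower bound on $g_0'$, or at least on the difference quotient, near $0$ — has no support. (Your Dini transfer also silently needs $[\eps',\eps]\subseteq[\alpha,\ba]$ for all small $\alpha$; true, but unaddressed.)

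The way to close this gap is, in effect, the paper's own Claim 4: fix a small $\eps>0$ with $\tfrac{\partial}{\partial x}T_u(x,\eps)>2\eps$ on $[0,\eps]$ (possible since $\tfrac{\partial}{\partial x}T_u(0,0)=\tfrac{(1-p)\mu}{\delta}>0$), show that $\ga$ must cross the barrier $T_u(\cdot,\eps)$ by $x=l(\alpha)$ with $l(\alpha)\to0$, and that it then stays below it on $[l(\alpha),\eps]$, where Proposition \ref{Omega}(4) forces $\ga'\ge\eps$; this gives $G(\alpha)\ge\eps\log(\eps/l(\alpha))\to\infty$ directly, for each small $\alpha$, with no limit function and no corner asymptotics. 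Once you have such a barrier estimate, the detour through $g_0$, the singular linearization, and the Dini argument become unnecessary; conversely, without some quantitative comparison of this kind, your formal quadratic (whose algebra, incidentally, is correct: the constant term is $\tfrac{\mu(1-p)}{\delta}$) does not constitute a proof of the asserted linear behavior at the origin.
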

\begin{proof}
   By part (2) of Proposition \ref{Omega}, $\beta_{\alpha}$ is bounded and in $(\alpha,x_E]$. $\ga'(\alpha)=\ga'(\beta_{\alpha})=0$ is a consequence of part (3) of Proposition \ref{Gamma}.
Since $P\not\in\Gamma_{\alpha}$, smoothness of $\ga$ follows from the general theory (Peano's
  theorem).
  Moreover, the existence of the initial value 
  $\alpha$, with the desired properties, is a direct consequence
  of the listed properties of $G$, by way of the intermediate value theorem.
  We, therefore, focus on \eqref{equ:props1} in the
  remainder of the proof, which is broken into several claims. The
  proof of each claim is placed directly after the corresponding
  statement.
 
  \medskip

  \noindent  {\bf Claim 1:} {\em If $\ga(\ba)=T_u(\ba,0)$, then $\ba>x_N$.}
 This follows from part (1) of Proposition \ref{Gamma}. 
\medskip
  
\noindent  {\bf Claim 2}: {\em The map $\alpha\mapsto \beta_{\alpha}$ is
	continuous}.\  \ For this, we use the implicit-function theorem and
  the continuity of $\ga$ with respect to the initial data
  (see, e.g., Theorem VI., p~145 in \cite{Wal98}). To be able to use the implicit-function theorem,
  it will be enough to observe that, $\ga(\cdot)$ is not tangent
  to $T_u(\cdot,0)$ (or $T_d(\cdot,0)$) at $x=\ba$, which is a consequence of $\ga'(\beta_{\alpha})=0$ and Claim 1. above.

  \medskip

  \noindent {\bf Claim 3}: {\em The map $\alpha\mapsto G(\alpha)$ is
	continuous.}
 It suffices to use the dominated convergence theorem. Its conditions
 are met, since 
 $\ga'(x)\in [0,k_0]$ (by Proposition \ref{Omega}, part
 (5)).

\medskip

\noindent {\bf Claim 4:} {\em  $\lim_{\alpha \searrow 0} G(\alpha)=\infty$.}\ \ 
The joint continuity of
$\tfrac{\partial}{\partial x} T_u(x,k)$ at $(0,0)$ and the fact that
$\tfrac{\partial}{\partial x} T_u(0,0)=\tfrac{(1-p)\mu}{\delta}>0$,
imply that
there exists $\epsilon>0$ such that
\[\tfrac{\partial}{\partial x} T_u(x,\epsilon)>2\epsilon \textrm{  for
  } x\in [0,\epsilon].\] We define
$l(\alpha)=\alpha+\tfrac{T_u(\alpha,0)-T_u(\alpha,\epsilon)}{\epsilon}$
and remind the reader that $T_u(0,k)=\tfrac{1-p}{\delta}$ for each $k$,
so that $\lim_{\alpha \searrow 0} l(\alpha)=0$. Hence,  we can pick
$\alpha_{\eps}>0$ such that
$l(\alpha)<\eps$, for $\alpha<\alpha_{\eps}$.

For any given $\alpha\in (0,\alpha_{\epsilon})$, if it so happens that
$\ga(x)>T_u(x,\epsilon)$ for  $x\in [\alpha,l(\alpha)]$, then
Proposition \ref{Omega}, part (4), implies that
$\ga'(x)<\epsilon$ on $[\alpha,l(\alpha)]$.
Therefore,
\begin{displaymath}\begin{split}
	0&<\ga(l(\alpha))-T_u(l(\alpha),\epsilon) =
	\int_{\alpha}^{l(\alpha)}\Big( \ga'(x)-\tfrac{\partial}{\partial
	  x}T_u(x,\epsilon)\Big )dx + T_u(\alpha,0)-T_u(\alpha,\epsilon)\\
	&\leq \int_{\alpha}^{l(\alpha)}( \epsilon-2\epsilon)dx +
	T_u(\alpha,0)-T_u(\alpha,\epsilon) = 0, \end{split}
\end{displaymath} which is contradiction. 
We conclude that
 $\ga$ intersects
$T_u(\cdot,\epsilon)$ on $[\alpha, l(\alpha)]$, for each
$\alpha\in(0,\alpha_{\eps})$. 

Using the fact that
$\frac{\partial}{\partial x}T_u(x,\epsilon)>L(x,T_u(x,\epsilon))$ on $[0,\epsilon]$,
we conclude that $\ga(x)<T_u(x,\epsilon)$ on $[l(\alpha),\epsilon]$. By Proposition
\ref{Omega}, part (4) and the fact that $\tau(x)>T_d(x,\epsilon)$ for small
$x$, we have that  $\ga'(x)\geq \epsilon$ on $[l(\alpha),\epsilon]$.
Therefore, $$\liminf_{\alpha \searrow 0} G(\alpha)
\geq \liminf_{\alpha \searrow 0 }
\int_{l(\alpha)}^{\epsilon} \frac{\epsilon}{x} dx = \liminf_{\alpha
  \searrow 0 }  \epsilon \ln{(\frac{\epsilon}{l(\alpha)})} = \infty.
$$ 

\medskip

\noindent {\bf Claim 5:} {\em $\lim_{\alpha\nearrow x_N} G(\alpha)=0$.}\ \ 
We start with the inequality $T_u(\alpha,0)=\ga(\alpha)<\ga(\ba) \leq T_u(x_N,0)$, 
which implies that 
$\lim_{\alpha
  \nearrow x_N}\ba = x_N$. Thus, by Proposition \ref{Omega}, part (5),
we have
\[\limsup_{\alpha \nearrow x_N} G(\alpha)
\leq \limsup_{\alpha \nearrow
  x_N}\frac{(\ba-\alpha)k_0}{\alpha} = 0. \qedhere\]
\end{proof}

Before we move on to the case $\pi\geq 1$, we need a few facts about a  
specific, singular, ODE.
\begin{lemma}
  \label{lem:f}
  Given $\eps>0$, consider the ODE
  \begin{equation}
	\label{equ:f}
	\begin{split}
	  h'(y) = - \tfrac{h(y)}{A(y)y^2}+B(y),
   \end{split}
  \end{equation}
  where $A,B:[-\eps,\eps]$ are continuous functions, with $A(0)> 0$.
   Then,
  the following statements
  hold:
  \begin{enumerate}[leftmargin=1.8em]
	\item 
	  There is a single solution $h_+$ of \eqref{equ:f} on $(0,\eps]$ with
	  $\lim_{y\searrow 0} h_+(0)=0$.
	\item No solutions $h_+$ exist with $\lim_{y\searrow 0}
	  h_+(y)=c\in\R\setminus\set{0}$. 
	\item For any solution $h_-$ on $[-\eps,0)$, we have
	  $\lim_{y\nearrow 0} h_-(y)=0$.
	\item Any function $h:[-\eps,\eps] \to \R$ of the form \[h(y) =
		h_+(y) \ind{y>0} + h_-(y)\ind{y<0},\] 
	  where $h_+$ is as in (1) above, and $h_-$ is {\em any} function
	  as in (3) above, is a $C^1$-solution to
	  \eqref{equ:f}.
  \end{enumerate}
\end{lemma}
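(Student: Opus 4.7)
The equation \eqref{equ:f} is a first-order linear ODE, and the key to the whole lemma is a strong asymmetry in its integrating factor near $y=0$. I would introduce
\begin{equation*}
  \mu(y) = \exp\left(\int_{y_*}^y \tfrac{ds}{A(s)s^2}\right),
\end{equation*}
which reduces \eqref{equ:f} to $(\mu h)' = \mu B$ on each of the two intervals $(0,\eps]$ and $[-\eps,0)$ (with $y_*$ chosen in the appropriate half). Since $A(0)>0$ and $\tfrac{1}{A(s)s^2} = \tfrac{1}{A(0)s^2}+O(1/|s|)$ near $0$, a direct estimate gives $\mu(y) \asymp \exp(-1/(A(0)y))$, so that
\begin{equation*}
\mu(y)\searrow 0 \ \text{as } y\searrow 0, \qquad \mu(y) \nearrow +\infty \ \text{as } y \nearrow 0.
\end{equation*}

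For part (1) I would define the explicit candidate
\begin{equation*}
  h_+(y) := \tfrac{1}{\mu(y)}\int_0^y \mu(s) B(s)\,ds, \qquad y \in (0,\eps],
\end{equation*}
where $\mu B$ is absolutely integrable at $0^+$ thanks to the rapid decay of $\mu$. The integrating-factor identity makes it clear that $h_+$ solves \eqref{equ:f}. To identify the limit at $0^+$, I would apply L'H\^opital's rule to the $0/0$ form, using $\mu'(y)=\mu(y)/(A(y)y^2)$:
\begin{equation*}
  \lim_{y\searrow 0} h_+(y) = \lim_{y\searrow 0}\tfrac{\mu(y)B(y)}{\mu'(y)} = \lim_{y\searrow 0} A(y)y^2 B(y) = 0.
\end{equation*}
Uniqueness in (1), and simultaneously part (2), then follow from the fact that the homogeneous solution $1/\mu$ diverges to $+\infty$ at $0^+$: any two solutions on $(0,\eps]$ with a finite one-sided limit at $0^+$ must differ by zero times $1/\mu$, hence coincide with $h_+$, and the common limit is forced to equal $0$.

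For part (3), given any solution $h_-$ on $[-\eps,0)$, the representation $\mu(y)h_-(y) = C + \int_{y_0}^y \mu(s)B(s)\,ds$ leaves two possibilities as $y \nearrow 0$: either the right-hand side stays bounded, whence $h_-(y) \to 0$ because $\mu(y) \to +\infty$; or it diverges, in which case L'H\^opital's rule applied to the $\infty/\infty$ form yields again $\lim_{y\nearrow 0} A(y)y^2B(y)=0$. For part (4), continuity of the pasted function at $0$ follows from (1) and (3). To verify $C^1$-smoothness I would apply L'H\^opital a second time to $h_\pm(y)/y^2$ to obtain $\lim_{y\to 0^\pm} h_\pm(y)/y^2 = A(0)B(0)$, and then plug into \eqref{equ:f}:
\begin{equation*}
  \lim_{y\to 0^\pm} h_\pm'(y) = -A(0)B(0)/A(0) + B(0) = 0.
\end{equation*}
A standard mean-value argument then forces $h'(0)=0$, and the common limit $0$ of $h_\pm'$ at the origin ensures continuity of $h'$.

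The main technical obstacle is the asymptotic analysis of $\mu$ together with the repeated L'H\^opital applications; the subcase $B(0)=0$ in part (3) requires a touch of extra care since the bounded alternative can occur there, but the conclusion is identical.
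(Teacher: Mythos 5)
Your proof is correct and follows essentially the same route as the paper's: the explicit variation-of-constants representation with integrating factor $e^{-D(y)}$ (your $\mu$), the one-sided vanishing/divergence of the homogeneous solution $1/\mu$ to get uniqueness and statement (2) on the right and the universal zero limit on the left, and repeated L'H\^opital applications (first for $\lim h_\pm$, then for $\lim h_\pm(y)/y^2$) combined with the ODE itself to establish the $C^1$ gluing at the origin. The only cosmetic remark is that the asymptotic $\mu(y)\asymp\exp(-1/(A(0)y))$ is a bit stronger than mere continuity of $A$ guarantees, but all you actually use is $\int_0 \frac{ds}{A(s)s^2}=+\infty$ and $\mu'=\mu/(A(y)y^2)$, which do hold.
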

\begin{proof}
Elementary transformations can be used to show that
for any solution $h$ of \eqref{equ:f} defined on $[-\eps,\eps]\setminus\set{0}$,
there exist constants $c_1$ and $c_2$ such that
\[ h(y) = \begin{cases}
	e^{D(y)} \big(c_1 -
	  \int_{y}^{\eps} B(t)e^{-D(t)} dt \big), & y\in(0,\eps], \\
e^{D(y)} \big(c_2 + \int_{-\eps}^{y} B(t)e^{-D(t)} dt \big),
& y\in[-\eps,0), \end{cases}\text{ where }
  D(y) = \begin{cases}
	\int_y^{\eps} \tfrac{1}{A(t)} \frac{ dt}{t^2}, & y\in(0,\eps],\\
	\int_{-\eps}^y -\tfrac{1}{A(t)} \frac{dt}{t^2}, & y\in[-\eps,0)
\end{cases}\]
We first note that $\lim_{y\searrow 0}D(y)=\infty$. 
So, to satisfy the condition $\lim_{y\searrow 0}h_+(y)=c\in \R$, the only possibility is
$c_1=\int_0^{\epsilon} B(t)e^{-D(t)}dt$. Then, the L'Hospital's rule implies that
\[ \lim_{y\searrow 0}h_+(y)=\lim_{y\searrow 0} e^{D(y)}\int_{0}^{y} e^{-D(t)} B(t)\, dt = 
  \lim_{y\searrow 0} \tfrac{B(y)}{  \tfrac{1}{A(y)y^2} } =0,\]
and we immediately conclude (1) and (2).

As far as (3) is concerned, since $\lim_{y\nearrow 0}e^{D(y)}c_2= 0$, 
for any $c_2$, the limiting behavior is independent of
$c_2$. Moreover, another use of the L'Hospital's rule implies that
$h(y)\to 0$, as $y\nearrow 0$, for each $c_2\in\R$. 

It remains to show (4), and, for this, we start by computing the
derivative at $0$ of $h$. Like above, we use the L'Hospital rule and
the explicit expression for $h$:
\[ \lim_{y\searrow 0} \tfrac{h(y)-h(0)}{y} = \lim_{y\searrow 0}
  \tfrac{B(y)}{ 1+ \tfrac{1}{A(y)y} } = 0.\]
 Similarly, $\lim_{y\nearrow 0} \tfrac{h(y)-h(0)}{y}=0$, and, so $h'(0)=0$.
 To establish that $\lim_{y\to 0} h'(y)=h'(0)=0$, 
we first use the L'Hospital rule to compute
$\lim_{y \to 0} \frac{h(y)}{y^2} = \tfrac{B(0)}{1/A(0)}$, 
so that, using the equation \eqref{equ:f} for $h$, we can immediately deduce that
$\lim_{y\to 0} h'(y) = 0$. 
\end{proof}
\begin{proposition}[$0<p<1$, $\mu<G$, $\pi\geq 1$]
  \label{pro:SECOND} 
  Suppose that $0<p<1$, $\mu<G$.
  \begin{enumerate}[leftmargin=1.8em]
  \item If $\pi>1$ and 
  \begin{enumerate}[leftmargin=1.4em]
	\item[a)] $\alpha\in(x_P,x_N)$. Then $\beta_{\alpha}\in (\alpha,x_E]$, $\ga$ is
	  of class $C^{\infty}$  and $P\not\in\Gamma_{\alpha}$.
	  \item[b)] $\alpha=x_P$. Then the limits
		\[ 	\textstyle\beta_{x_P}=\lim_{\alpha\searrow x_P} \beta_{\alpha},
		  \text{ and } g_{x_P}(x) = \lim_{\alpha\searrow x_P} \ga(x),\
		  x\in (x_P, \beta_{x_P}],\]
		exist and define a continuous solution to \eqref{equ:HJB-g} with
		the domain
		$[x_P,\beta_{x_P}]$.
	  \item[c)] $\alpha\in(0,x_P)$. Then 
		$\beta_{\alpha}\in (\alpha,x_E]$, $\ga$ is
		of class $C^{2}$,  $P\in\Gamma_{\alpha}$ and
		$\ga'(x_P)=\tfrac{\partial}{\partial x} T_u(x,0)$. 
	\end{enumerate}
 \item If $\pi=1$, then $x_N=x_P$. For $\alpha\in (0,x_N)$, $(\beta_{\alpha},\ga(\beta_{\alpha}))=P$ and $\ga$ is of class $C^2$.
\end{enumerate}
   In all these cases, the function
  $G(\alpha) = \int_{\alpha}^{\beta_{\alpha}} \tfrac{\ga'(x)}{x}\, dx$ 
  has the following properties:
	\begin{equation}\label{equ:props2}
	 \begin{split}
	   \text{ $G$ is continuous on $(0,x_N)$, 
	$\textstyle\lim_{\alpha\searrow 0} G(\alpha) = +\infty$, 
	$\textstyle\lim_{\alpha\nearrow x_N} G(\alpha) = 0$. }
	 \end{split}
	\end{equation}
  In particular, $\ga$ is a solution to the free-boundary problem
  \eqref{equ:HJB-g}, \eqref{equ:integral-cond}, for some $\alpha\in
  (0,x_N)$.
\end{proposition}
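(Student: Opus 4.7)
The plan is to follow the blueprint of Proposition~\ref{pro:FIRST} -- build a family of maximal inner solutions and read off the properties of $G$ -- while handling the singularity $P$ that is now the main obstruction. The hypothesis $\pi\ge 1$ together with Proposition~\ref{Omega}(1) yields $\tau(x_P)=T_u(x_P,0)=z_P$, so the containment region $\Omega_0$ pinches to the single point $P$ at $x=x_P$. By Proposition~\ref{Gamma}(2), any maximal inner solution whose domain reaches $x_P$ must pass through $P$, which is precisely what drives the case split (1a)--(1c) and (2).

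For case~(1a), $\alpha\in(x_P,x_N)$, the trajectory stays in the Lipschitz region $\Omega_0\cap\{x>x_P\}$, so the argument of Proposition~\ref{pro:FIRST} transfers verbatim: Peano's theorem supplies $C^\infty$ smoothness and Proposition~\ref{Gamma}(3) gives $\beta_\alpha\in(\alpha,x_E]$ with $\ga'(\beta_\alpha)=0$. Case~(1c), $\alpha\in(0,x_P)$, is the heart of the matter: the pinching forces $\Gamma_{\alpha}$ to reach $P$, and I would analyse the passage through $P$ by the change of variables $y=x-x_P$, $h(y)=\ga(x_P+y)-z_P-k\,y$ with $k=\tfrac{\partial T_u}{\partial x}(x_P,0)$ (the slope of $\bT$ at $P$). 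Taylor-expanding $P(x,z)$ and $Q(x,z)$ around $(x_P,z_P)$ -- both vanish there -- and using the fact that the linear correction $k\,y$ cancels the $O(y)$ contribution to the numerator along the tangent direction of $\bT$, the ODE reduces to the canonical singular form
\[
h'(y)=-\tfrac{h(y)}{A(y)\,y^{2}}+B(y),\qquad A(0)>0,
\]
of Lemma~\ref{lem:f}. Parts~(1) and~(4) of that lemma then provide the unique $C^{1}$ extension across $y=0$ with $h(0)=h'(0)=0$, equivalently $\ga\in C^{1}$ at $x_P$ with $\ga'(x_P)=k$; a single further differentiation of the reduced equation upgrades the regularity to $C^{2}$. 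On $(x_P,\beta_\alpha]$ the equation is again regular, so $\ga$ extends and Proposition~\ref{Gamma}(3) yields $\beta_\alpha\in(x_P,x_E]$ with $\ga'(\beta_\alpha)=0$.

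Case~(1b) is then obtained as the limit $\alpha\searrow x_P$ from within case~(1a): the uniform bound $\ga'\in[0,k_{0}]$ from Proposition~\ref{Omega}(5) gives an equicontinuous family, and continuous dependence on initial data on the regular region $\{x>x_P\}$ together with uniqueness of the right-branch at $P$ (Lemma~\ref{lem:f}(1)) produce a well-defined continuous solution $g_{x_P}$ on $[x_P,\beta_{x_P}]$. Case~(2), $\pi=1$, is the degenerate case where a direct computation gives $x_N=x_P$; then $\beta_\alpha\le x_P$ and the pinching forces $(\beta_\alpha,\ga(\beta_\alpha))=P$, while the left-branch analysis of Lemma~\ref{lem:f}(3)--(4) supplies $C^{1}$, upgraded to $C^{2}$ by one further differentiation of the ODE.

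The properties~\eqref{equ:props2} of $G$ are proved by adapting Claims~1--5 in the proof of Proposition~\ref{pro:FIRST}. Continuity of $\alpha\mapsto\beta_\alpha$ uses the implicit function theorem away from $x_P$ and the $C^{1}$ extension from Lemma~\ref{lem:f} at $\alpha=x_P$; continuity of $G$ then follows by dominated convergence with dominating constant $k_{0}$. The $\alpha\searrow 0$ and $\alpha\nearrow x_N$ limits are essentially unchanged from the proof of Proposition~\ref{pro:FIRST}. The principal technical difficulty is the reduction near $P$ to the form of Lemma~\ref{lem:f}: verifying the non-degeneracy $A(0)>0$ and correctly identifying the slope $k$ so that the leading-order linear parts of the numerator cancel.
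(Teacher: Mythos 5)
Your overall architecture (pinching of $\Omega_0$ at $x_P$ when $\pi\geq 1$, transferring Proposition \ref{pro:FIRST} for $\alpha>x_P$, limits $\alpha\searrow x_P$ for the boundary case, and adapting Claims 1--5 of Proposition \ref{pro:FIRST} for \eqref{equ:props2}) matches the paper, but the core technical step --- the passage through the singular point --- has a genuine gap. You claim that the substitution $y=x-x_P$, $h(y)=\ga(x_P+y)-z_P-ky$ with $k=\tfrac{\partial}{\partial x}T_u(x_P,0)$ reduces $g'=L(\cdot,g)$ to the form \eqref{equ:f} of Lemma \ref{lem:f} with continuous $A,B$ and $A(0)>0$. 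It does not. At $P$ the curves $\bT$ and $\bB$ are tangent with common slope $k$, so along the tangent direction \emph{both} the numerator $P-kQ$ and the denominator $Q$ vanish to second order: writing the expansion one gets, schematically, $h'=\bigl((P_z-kQ_z)h+c_1y^2+\cdots\bigr)/\bigl(Q_zh+c_2y^2+\cdots\bigr)$, which is a genuinely nonlinear (rational) relation between $h/y^2$ and $h'$, not an equation affine in $h$ with a $1/y^2$ coefficient. Consequently Lemma \ref{lem:f}(1),(4) cannot be invoked at this stage, neither to obtain the $C^1$ passage with $h(0)=h'(0)=0$, nor to get uniqueness of the right-hand branch (which you also use to build case (1b) and to continue the solution past $x_P$ in case (1c)); and the ``single further differentiation'' upgrading to $C^2$ is unsupported, because the continuity of the coefficients in the differentiated equation is exactly what is at stake.

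The paper's proof supplies precisely the missing intermediate steps, in a specific order. First it shows (Claim 1) that $\ga'$ has no local minimum off $x_P$, then (Claim 2) obtains $C^1$ at $x_P$ with $\ga'(x_P)=\tau'(x_P)=\tfrac{\partial}{\partial x}T_u(x_P,0)$ by squeezing $\ga$ between $\tau$ and $T_u(\cdot,0)$, which coalesce tangentially at $P$, combined with the mean value theorem. Next (Claim 3) it computes the value of $\ga''(x_P)$ explicitly, by rewriting the ODE as a quotient of distances to the four branches $T_{u,d}(\cdot,0)$, $T_{u,d}(\cdot,\infty)$ and applying L'Hospital to a second-order difference quotient, plus a short argument showing $\ga'$ is differentiable there with that value. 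Only after subtracting the \emph{full second-order} Taylor polynomial and differentiating the equation does one land in the framework of Lemma \ref{lem:f}; the continuity of $A,B$ and the positivity $A(0)>0$ rely on the already-established facts that $f(y)/y^2$ and $f'(y)/y$ are continuous, i.e.\ on Claims 2--3. Also note the existence of the continuation to the right of $x_P$ (hence $\beta_\alpha=\beta_{x_P}$ for $\alpha<x_P$) is obtained in the paper through the limiting construction of case (1b) and maximality of inner solutions, not through a uniqueness statement at the singularity. To repair your argument you would need to insert the analogues of Claims 1--3 (or an equivalent a priori $C^1$ estimate and an identification of the second-order coefficient) before any reduction to Lemma \ref{lem:f} can be justified.
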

\begin{remark}
  (1) 
  The parameter regime treated in Proposition \ref{pro:SECOND} above
  leads to a truly singular behavior in the ODE \eqref{equ:HJB-g}.
  Indeed, the maximal continuous solution passes through the singular
  point $P$, at which the right-hand side $L(\cdot,g)$ is not
  well-defined. It turns out that the continuity of the solution,
  coupled with the particular form \eqref{equ:HJB-g} of the equation,
  forces higher regularity (we push the proof up to $C^2$) on the
  solution. The related equation \eqref{equ:f} of Lemma \ref{lem:f}
  provides a very good model for the situation. Therein, uniqueness
  fails on one side of the equation (and general existence on the
  other), but the equation itself forces a smooth passage of any
  solution through the origin. It follows immediately, that, even
  though high regularity can be achieved at the singularity, the
  solution will never be real analytic there, except, maybe, for one
  particular value of $\log(\tfrac{1+\old}{1-\uld})$. This is a
  general feature of singular ODE with a rational right-hand sides.
  Consider, for example, the simplest case $y' = - \tfrac{y}{x^2}$
  which admits as a solution the textbook example $y(x) =
  e^{1/x}\inds{x<0}$ of
  a $C^{\infty}$ function which is not real analytic. 

  \medskip

  \noindent (2) For large-enough 
$\log(\tfrac{1+\old}{1-\uld})$, the value of $\alpha$ such that $\ga$
solves \eqref{equ:HJB-g}, \eqref{equ:integral-cond}, will fall below
$x_P$, and an interesting phenomenon will occur. Namely, the right
free boundary $\ox$ will stop depending on $\old$ or $\uld$. Indeed,
the passage through the singularity $P$ simply ``erases'' the memory
of the initial condition in $\ga$.
In financial terms, the right boundary of the
no-trade region will be stop depending on the transaction costs, while
the left boundary will continue to open up as the transaction costs
increase. 
\end{remark}
\begin{proof} 
We will only prove (1) here; (2) can be proved by the same methods used in the proof of c) below. 
For both (1) and (2), $\ga'(\alpha)=\ga(\beta_{\alpha})=0$ follows easily.\\
 \noindent	a) By Proposition \ref{Omega}, part (1), $\Omega_0 \cap \bB \cap
  \sets{(x,z)}{x>x_P} = \emptyset$. So, if $\alpha>x_P$, the statement
  can be proved by using the argument from the proof of
  Proposition \ref{pro:FIRST}, mutatis mutandis.

  \medskip
  
  \noindent b) The existence of the limit $\beta _{x_P}$ from the statement 
 is established in a matter similar to that used to prove the 
continuity of the map $\alpha
\rightarrow \ba$ in Claim 2.~in the proof of Proposition
\ref{pro:FIRST}.
The existence of the limit $g_{x_P}$ follows from a standard argument
involving a weak formulation and the dominated convergence theorem.
Finally, by part (2) of Proposition \ref{Gamma} and $T_u(x_P,0)=\tau(x_P)$, we conclude that $g_{x_P}$ is defined and continuous on $[x_P,\beta_{x_P}]$, with $(x_P,g_{x_P}(x_P))=P$.

\medskip

\noindent c) As in the proof of
Proposition \ref{pro:FIRST}, $\ga(x)$ does not hit either $\tau$ or
$T_u(\cdot,0)$ on $(0,x_P)$.
Hence, we must have $x_P\in \Cl \sD_{\alpha}$; moreover
since the curves $T_u(\cdot,0)$ and $\tau(\cdot)$ coalesce
at $x_P$, the limit $\lim_{x\to x_P} \ga(x)$ exists and equals to
$T_u(x_P,0)$. In particular, we have $x_P\in
\sD_{\alpha}$ and $P\in \Gamma_{\alpha}$.

For $x_p<x_N$, part b) above guarantees that a continuous solution with a domain
of the form $[x_P, \beta_{x_P}]$, with $\beta_{x_P}>x_P$, exists.
Therefore, by maximality, a maximal inner solution $\ga$, with
$\sD_{\alpha}=[\alpha,\beta_{x_P}]$ exists (in other words,
$\beta_{\alpha} = \beta_{x_P}$, for all $\alpha<x_P$).

Our next task is to upgrade the regularity of $\ga$ from
$C[\alpha,\beta_{\alpha}]$ to
$C^2[\alpha,\beta_{\alpha}]$, where, clearly, we can focus on a
neighborhood of the 
point $x_P$: we need to show that
that $\ga'(x_P), \ga''(x_P)$ exist and $\ga'(x), \ga''(x)$ are
continuous at $x_P$. The argument is divided in several claims, whose
proofs follow the respective statements.

\medskip

\noindent {\bf Claim 1:} {\em $\ga'(x)$ does not admit a local minimum on
  $(\alpha,x_P)\cup (x_P, \ba)$.}\ \ 
 Suppose, to the contrary, that it does. 
 Then, there exists $\eps>0$ and a point $x_m\in (\alpha,x_P)\cup
 (x_P, \ba)$ such that
 \[ \ga'(x_m)\leq \ga'(x) \textrm{  for  } x\in
   [x_m-\epsilon,x_m+\epsilon].\]
 For $k_m :=\ga'(x_m)$, parts (3) and (4) of Proposition \ref{Omega}
 imply that
 \begin{equation}\label{equ:som-label}
  \left\{\begin{split}
   \ga(x_m)&=T_u(x_m,k_m) \textrm{  or  }
	   \ga(x_m)=T_d(x_m,k_m), \\ T_d(x,k_m)&\leq \ga(x) \leq T_u(x,k_m) \textrm{
		 on  } [x_m-\epsilon, x_m + \epsilon]. 
  \end{split}\right.
 \end{equation}
We focus on the case $\ga(x_m)=T_u(x_m,k_m)$, with the other one
- when $\ga(x_m)=T_d(x_m,k_m)$  - being similar. By
\eqref{equ:som-label}, we have  $\frac{\partial}{\partial x}
T_u(x_m,k_m)=\ga'(x_m)=k_m$; moreover, 
since $\frac{\partial^2}{\partial x^2}T_u(x,k)< 0$, we get
$0= \ga'(x_m)-\tfrac{\partial}{\partial x}
T_u(x_m,k_m)<\ga'(x)-\tfrac{\partial}{\partial x} T_u(x,k_m)$, on 
$(x_m,x_m+\epsilon]$. This leads to the following
contradiction:
\[ 0<\int_{x_m}^{x_m+\epsilon}\Big(\ga'(x)-\tfrac{\partial}{\partial
	x} T_u(x,k_m)\Big)dx = \ga(x_m+\epsilon) - T_u(x_m+\epsilon,k_m)
  \leq 0.\]

\medskip

\noindent {\bf Claim 2:} {\em $\ga \in C^1([\alpha,\ba])$ and $\ga'(x)$
  decreases around $x_P$.}\ \ 
We observe that
$\tau(x)<\ga(x)<T_u(x,0)$ for $x\in (\alpha,x_P)\cup(x_P,\ba)$, 
  $\tau(x_P)=\ga(x_P)=T_u(x_P,0)$, and $
  \tau'(x_P)=\frac{\partial}{\partial x} T_u(x_P,0)$, 
and conclude that $\ga$ is differentiable at $x_P$ and
$\ga'(x_P)=\tau'(x_P)=\frac{\partial}{\partial x} T_u(x_P,0)$. By the Claim 1., $\lim_{x\nearrow x_P}\ga'(x), \lim_{x\searrow x_P}\ga'(x)$ exist. So, using the mean value theorem, we obtain $\ga'(x_P)=\lim_{x\to x_P}\ga'(x)$ and conclude that $\ga \in C^1([\alpha,\ba])$.

Given an $\eps$ in a small-enough neighborhood of $0$, 
the concavity of $T_u(\cdot,0)$ implies that
$$\ga(x_P-\epsilon)<T_u(x_P-\epsilon,0)<T_u(x_P,0) - \epsilon
\tfrac{\partial}{\partial x} T_u(x_P,0) = \ga(x_P)- \epsilon
\ga'(x_P).$$ The mean value theorem can now be used to
conclude that there exist $x_1,x_2$, arbitrarily close to $x_P$, with
$x_1<x_P <x_2$  such that
$$\ga '(x_1)>\ga '(x_P)>\ga '(x_2).$$ 
Finally, if we combine the obtained results with those of Claim 1.,  
 we can conclude that $\ga'(x)$ decreases near $x_P$.

\medskip

\noindent {\bf Claim 3:} {\em The second derivative of $\ga$ exists at $x_P$ and
}
\begin{equation}
  \label{nunu}
\begin{split}
	\ga''(x_P)=-\frac{(1-p)^2\sigma^2(2\delta-2p\mu+p(1-p)\sigma^2)^2
	(2\delta + 2(1-p)\mu + (p-2)(1-p)\sigma^2)}{p (2\delta -
	p(1-p)\sigma^2)^3}.
 \end{split}
\end{equation}
The proof is based on an explicit  computation where the easy-to-check fact that
our ODE admits the form
\[
  \ga'(x)=-\frac{(\ga(x)-T_u(x,0))(\ga(x)-T_d(x,0))}{(\ga(x)-T_u(x,\infty))(\ga(x)-T_d(x,\infty))},\]
is used. We begin with the equality
\begin{equation}
\nonumber
\begin{split}
	\frac{\ga(x)-\ga(x_P)-\ga'(x_P)(x-x_P)}{(x-x_P)^2}&=
\frac{T_d(x,\infty)-T_d(x_P,\infty)-\frac{\partial}{\partial x}
  T_d(x_P,\infty)(x-x_P)}{(x-x_P)^2}\\ & \quad -
\frac{\frac{T_d(x,\infty)-T_u(x,0)}{(x-x_P)^2}}{1+\ga'(x)
  \frac{\ga(x)-T_u(x,\infty)}{\ga(x)-T_d(x,0)}}.
 \end{split}
\end{equation}
  By L'Hospital's rule, as $x\to x_P$, the right-hand side above converges to 
  \[ 
\frac{1}{2}\frac{\partial^2}{\partial x^2} T_d(x_P,\infty) -
\frac{1}{2}\frac{\frac{\partial^2}{\partial x^2}
  T_d(x_P,\infty)-\frac{\partial^2}{\partial x^2} T_d(x_P,0)
}{1+\ga'(x_P) \frac{\ga(x_P)-T_u(x_P,\infty)}{\ga(x_P)-T_d(x_P,0)}},\]
which, in turn, evaluates to the half of the right-hand side of \eqref{nunu}.

Having computed a second-order quotient of differences for $\ga$ at
$x_P$, we could use the concavity of $\ga$ at $x_P$ (established in Claim 2.~
above) to conclude that $\ga$ is twice differentiable there. We opt to
use a short, self-contained argument, instead, where $c$ denotes the
right-hand side of \eqref{nunu}. For small enough $\zeta$, we have
\begin{displaymath}
\begin{split}
\Big(\ga'(x)-\ga'(x_P)-c (x-x_P)\Big)\zeta &\leq
\ga(x)-\ga(x-\zeta)-\ga'(x_P)\zeta - c(x-x_P) \zeta\\
&= - \tfrac{c}{2} \zeta^2 + \Big( \ga(x)-\ga(x_P)-\ga'(x_P)(x-x_P)-\tfrac{c}{2}(x-x_P)^2 \Big) \\
&\quad - \Big(\ga(x-\zeta)-\ga(x_P)-\ga'(x_P)(x-\zeta-x_P)-\tfrac{c}{2}(x-x_P-\zeta)^2 \Big) \\
&= -\tfrac{c}{2}\zeta^2 + o((x-x_P)^2) + o((x-\zeta-x_P)^2).
\end{split}
\end{displaymath}
If we fix $t>0$ and choose $\zeta=t \ \vert x-x_P \vert \
\sgn\Big( \ga'(x)-\ga'(x_P)-c(x-x_P)\Big)$, 
we obtain
$$\limsup_{x \to x_P} \Big\vert \frac{\ga'(x)-\ga'(x_P)}{x-x_P} - c
\Big \vert \leq -\tfrac{c}{2} t,$$
from which the claim follows immediately. 

\medskip

{\bf Claim 4:} {\em $\ga\in C^2([\alpha,\ba]).$}\ \ 
For convenience, we change variables as follows  
$$y=x-x_P, \quad f(y)=\ga(x)-\ga(x_P)-\ga'(x_P)(x-x_P)-\tfrac{1}{2}\ga''(x_P)(x-x_P)^2.$$
With respect to the new coordinate system, we have  $f\in C^1([\alpha-x_P,\ba-x_P]) \cap C^2([\alpha-x_P,0)\cup(0,\ba-x_P])$, and
$f(0)=f'(0)=f''(0)=0$; we need to show that $\lim_{y\to 0}f''(y)=f''(0)$.
This follows, however, directly from Lemma \ref{lem:f}, as we obtain
the ODE \eqref{equ:f} if we differentiate the equality
$g'=L(\cdot,g)$, and pass to the new coordinates. The coefficient functions
$A(y)$ and $B(y)$ admit a rather messy but explicit form which can be used
to establish their continuity. Indeed, it turns out that
$A(y)$ and $B(y)$ can be represented as 
continuous transformations of
functions of $y$, $f(y)/y^2$ and $f'(y)/y$, which are, themselves,
continuous. Similarly, the condition $A(0)>0$ imposed in Lemma \ref{lem:f} is satisfied because
one can use the aforementioned explicit expression to conclude that 
 $A(0) = \lim_{y\to 0} A(y) = 
\tfrac{(1-p)\sigma^2(A-\mu)}{2\delta -
  p(1-p)\sigma^2}>0$
\end{proof}
	\subsection{The sub-case $G\leq \mu < A$.} This sub-case is, perhaps the
most challenging of all, as it combines the existence of a singularity
with a possible failure of the well-posedness of the value function. 

For $k\in\R$ let  $l_u(k), l_d(k)$ be the (ordered) solutions $X_1,X_2$ of  the quadratic
equation $a(k) X^2 - b(k) X + c(k) = 0$, where $a(k)$, $b(k)$ and
$c(k)$ are as in \eqref{abc}.
The analysis in the sequel centers around the 
constants $C=C(\mu,\sigma,p,\mu)$ and $K=K(\mu,\sigma,p,\mu)$, given
by
\begin{equation}
\label{ite:C-expression}
K=
\tfrac{(1-p)(\mu-G)}{(A-\mu)+p(\mu-G)}
\text{ and }
  C = \int_0^{K} k \Big( \frac{l_u'(k)}{k-l_u(k)} - \frac{l_d'(k)}{k-l_d(k)} \Big) dk.
\end{equation}
\begin{lemma}
  \label{K and lim} Assume that $0<p<1$ and $G\leq \mu < A$. Then
\begin{enumerate}[leftmargin=1.8em]
\item $K$ is the smallest solution to $b(\cdot)^2 = 4 a(\cdot) c(\cdot)$.
 Moreover $K$ is nonnegative and $K=0$ if and only
  if $\mu=G$.
\item $\Omega_0 \cap \{L(x,z)=k\}$ is bounded if $k>K$ and unbounded
  otherwise.
\item For $0\leq k \leq K$, $l_d(k)>k$.
\item For $0 \leq k<K$, we have
  \[ \lim_{x\to \infty}\tfrac{\partial}{\partial x} T_{u,d}(x,k) \to
	l_{u,d}(k)\text{ and }
  \lim_{x\to \infty}\tfrac{1}{x} \tfrac{\partial}{\partial k}
  T_{u,d}(x,k) \to l_{u,d}'(k).\]
\item There exists a constant $c^*>0$ such that for $x>c^*$ and
  $k\in[0,K)$ we have
$$\Big\vert \tfrac{\frac{\partial}{\partial k} T_d(x,k)}{x(k-\frac{\partial}{\partial x} T_d(x,k))} \Big \vert <c^* + \tfrac{c^*}{\sqrt{K-k}}, \quad \Big\vert \tfrac{\frac{\partial}{\partial k} T_u(x,k)}{x(k-\frac{\partial}{\partial x} T_u(x,k))} \Big \vert <c^* + \tfrac{c^*}{\sqrt{K-k}}.$$
\item $C$ is well-defined and nonnegative. Moreover, $C=0$ if and only
  if $\mu=G$.
\end{enumerate}
\end{lemma}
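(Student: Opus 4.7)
The plan is to treat the six claims as a single algebraic/analytic package about the one-parameter family of conics $\{L(\cdot,\cdot)=k\}$, driven by the explicit quadratics $a(k), b(k), c(k)$ from \eqref{abc} and the closed form \eqref{T_u expression} for $T_{u,d}(x,k)$. For (1), I would expand $\Delta(k) := b(k)^2 - 4 a(k) c(k)$ as a polynomial in $k$ and factor to check that its smallest root is precisely the $K$ given in \eqref{ite:C-expression}; the hypothesis $G \leq \mu < A$ makes the numerator $(1-p)(\mu-G)$ nonnegative and the denominator $(A-\mu)+p(\mu-G)$ strictly positive, so $K \geq 0$ with equality iff $\mu = G$. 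For (2), the level set $\{L=k\}$ is, modulo lower-order terms, the conic $a(k)z^2 - b(k)xz + c(k)x^2 = (\text{lower order})$, whose $(x,z)$-projection is bounded iff $\Delta(k) < 0$; combining with (1) yields the dichotomy, and restriction to $\Omega_0$ does not change boundedness because $\Omega_0$ is itself unbounded in this parameter regime.

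Parts (3) and (4) are computational. Since $l_u, l_d$ are the roots of $a(k)X^2 - b(k)X + c(k) = 0$, the sign of $a(k)k^2 - b(k)k + c(k)$ determines whether $k$ lies between them; at $k=0$ we have $c(0) = (1-p)^3\sigma^2 > 0$ and $a(0), b(0) > 0$, so by Vieta $l_d(0) > 0$, and a continuity argument on $[0,K]$ (using that $l_u, l_d$ merge only at $K$) upgrades this to $l_d(k) > k$ throughout, giving (3). For (4), I would divide \eqref{T_u expression} by $x$ and send $x \to \infty$; the square-root dominates as $\sqrt{\Delta(k)}\, x(1 + O(1/x))$, so $T_{u,d}(x,k)/x \to l_{u,d}(k)$, and termwise differentiation in $x$ and $k$ yields the stated limits for $\partial_x T_{u,d}$ and $x^{-1}\partial_k T_{u,d}$.

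The main obstacle is (5), which requires uniform asymptotics near the double point $k = K$. Implicit differentiation of $a\, l_{u,d}^2 - b\, l_{u,d} + c = 0$ gives
\begin{equation*}
l_{u,d}'(k) = \mp \frac{a'(k)\, l_{u,d}(k)^2 - b'(k)\, l_{u,d}(k) + c'(k)}{\sqrt{\Delta(k)}},
\end{equation*}
so $l_{u,d}'$ blows up like $1/\sqrt{K-k}$ as $k \nearrow K$. I would then extract from \eqref{T_u expression} the refined uniform expansions $\partial_x T_{u,d}(x,k) = l_{u,d}(k) + O(1/x)$ and $\partial_k T_{u,d}(x,k) = x\, l_{u,d}'(k) + O(1)$ on $[0, K)$, and combine them with a Taylor bound $|k - l_{u,d}(k)| \geq c_1 \sqrt{K-k}$ that follows from the expansion of $\Delta$ at $K$ and part (3). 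The delicate step is choosing $c^*$ large enough that the $O(1/x)$ error in the denominator $k - \partial_x T_{u,d}(x,k)$ is absorbed uniformly for large $x$ and for all $k \in [0, K)$ simultaneously; the explicit form of \eqref{T_u expression} is what makes this uniformity accessible.

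For (6), on any $[0, K-\epsilon]$ the integrand is continuous, and near $K$ the $1/\sqrt{K-k}$ blow-up of $l_{u,d}'$ combined with $|k - l_{u,d}(k)| \gtrsim \sqrt{K-k}$ gives an $O(1/\sqrt{K-k})$ bound on it, which is integrable. For nonnegativity, I would use the identity (valid for $k \in [0, K)$)
\begin{equation*}
\frac{l_u'(k)}{k - l_u(k)} - \frac{l_d'(k)}{k - l_d(k)} = \frac{d}{dk}\log\frac{l_d(k) - k}{l_u(k) - k} + \frac{l_u(k) - l_d(k)}{(l_u(k) - k)(l_d(k) - k)},
\end{equation*}
multiply by $k$, and integrate by parts over $[0, K]$; since $l_d(K) = l_u(K) > K$ by (3), the boundary terms vanish, leaving
\begin{equation*}
C = -\int_0^K \log\!\frac{l_d(k) - k}{l_u(k) - k}\, dk + \int_0^K \frac{k\,(l_u(k) - l_d(k))}{(l_u(k) - k)(l_d(k) - k)}\, dk,
\end{equation*}
both terms of which are manifestly nonnegative by (3) and $l_u \geq l_d$, and the second strictly positive whenever $K > 0$, giving $C > 0 \iff \mu > G$.
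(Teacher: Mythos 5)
Your proposal has genuine gaps in parts (2), (3), and (5)--(6). For (2), the reduction ``the level set is bounded iff $\Delta(k):=b(k)^2-4a(k)c(k)<0$, and restriction to $\Omega_0$ does not change boundedness'' fails for large $k$: $\Delta$ is a quadratic in $k$ with positive leading coefficient, so beyond its \emph{second} root it is positive again (e.g.\ $\Delta(\tfrac{1-p}{p})=4(1-p)^2(A-\mu)^2>0$), and there the full level curve is an unbounded hyperbola even though $\Omega_0\cap\{L=k\}$ must be bounded. Your criterion would therefore give the wrong answer for those $k$; the restriction to $\Omega_0$ is doing real work, and the paper supplies it via part (4) of Proposition \ref{Omega}: for small $\eps>0$ one has $\Delta(K+\eps)<0$, so $\sL_{K+\eps}$ is bounded and $\Omega_0\cap\{L>K+\eps\}$ lies in a bounded strip, which handles \emph{all} $k>K$ at once. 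Symmetrically, unboundedness for $k\le K$ does not follow from ``$\Omega_0$ is unbounded''; you need that the unbounded branch of $\{L=k\}$ actually stays inside $\Omega_0$ for large $x$ (it does, because $T_u(\cdot,k)$ grows linearly while $\tau$ is eventually constant and $T_u(\cdot,k)\le T_u(\cdot,0)$). For (3), your continuity argument is missing its key ingredient: starting from $l_d(0)>0$ you must rule out a first crossing $l_d(k_*)=k_*$, i.e.\ show $a(k)k^2-b(k)k+c(k)\ne 0$ on $(0,K]$; the fact that $l_u$ and $l_d$ merge only at $K$ is irrelevant to this. That sign verification is exactly the computational content of the paper's proof of (3) (positivity of $b(k)-4\delta k+4p\delta k(\tfrac{1-p}{p}-k)$ and of the concave quadratic $-2\delta k^2+p(2\mu-\sigma^2)k+p^2\sigma^2$ on $[0,K]$), and it cannot be waved away.

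In (5) your quantitative chain does not deliver the stated bound: from $|l_{u,d}'(k)|\lesssim 1/\sqrt{K-k}$ together with only $|k-l_{u,d}(k)|\ge c_1\sqrt{K-k}$ you get an estimate of order $1/(K-k)$, not $c^*+c^*/\sqrt{K-k}$, and $1/(K-k)$ is not integrable, so the same slip sinks the well-definedness argument in (6) (where you again claim the two ingredients combine to $O(1/\sqrt{K-k})$ -- they do not). The correct input, which is precisely what (3) buys and what the paper uses, is a \emph{uniform constant} lower bound: since $l_d(K)>K$ and $l_u\ge l_d$, one has $l_{u,d}(k)-k\ge c_1>0$ on all of $[0,K]$, hence $\partial_x T_{u,d}(x,k)-k>\eps$ uniformly for $x$ large, and then the numerator estimate $\tfrac1x|\partial_k T_{u,d}|\lesssim 1+1/\sqrt{\Delta(k)}\lesssim 1+1/\sqrt{K-k}$ gives the claim. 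With that repair, your integration-by-parts identity in (6) is correct (the boundary term vanishes because $l_u(K)=l_d(K)$) and is a legitimate alternative to the paper's simpler pointwise argument, which just observes $l_d'>0$, $l_u'<0$ and $l_{u,d}(k)>k$ so that the integrand of $C$ is nonnegative term by term.
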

\begin{proof} (1) It follows by direct computation.
%\[ b(k)^2 - 4 a(k) c(k) =4 p^2 \Big(-G^2 (p k- (1-p)) ^2 + (A k - \mu (1 +
%  k) (1 - p))^2\Big), \]
%and, form there,  that $K$ is the smaller of the two solutions to $b(\cdot)^2 - 4a(\cdot) c(\cdot)$. 

\medskip

\noindent (2) It is easily checked that the leading coefficient of
$b(k)^2-4a(k)c(k)$ (seen as a
polynomial in $k$) is positive.
%$4 p^2 \Big(((1-p)(A-\mu) + p (A-G)\Big) \Big( (1-p) (A-\mu) + p (A+G) \Big)
%  >0$. 
Therefore, $b(k)^2 - 4 a(k) c(k)\geq 0$ for $k\in [0,K]$. 
Since $b(k) - 4\delta k$ is linear in $k$ and its values at $k=0,K$ are positive, $4p(1-p)(k+1)(b(k) - 4\delta k)>0$ for $k\in [0,K]$.
%\[  
%4p(1-p)(k+1)(b(k) - 4\delta k)\vert_{k=K}= \tfrac{8p^2(1-p)^2
 % (A-\mu)(A-G)G}{(A-pG-(1-p)\mu)^2}>0, \textrm{  and  }4p(1-p)(k+1)(b(k) - 4\delta k)\vert_{k=0}=8\mu p^2 (1-p)^2>0,\]
Thus, the expression inside the square root in \eqref{T_u expression}
is positive for $x\geq 0$ and $k\in [0,K]$, which, in turn, implies
that
for $k\in [0,K]$, $\Omega_0\cap\{L(x,z)=k\}$ is unbounded. 

Similarly, since $b(k)^2-4a(k)c(k)\vert_{k=K+\epsilon}<0$ for small enough $\epsilon>0$, we conclude
that the domain $\sL_{K+\eps}$ of $T_u(\cdot,K+\epsilon)$ is bounded. Part
(4) of Proposition \ref{Omega}, implies that
$\Omega_0 \cap \{L(x,z)>K+\epsilon\}$ is a bounded set for any sufficiently small $\epsilon>0$. 
We conclude that $\Omega_0\cap\{L(x,z)=k\}$ is bounded for $k>K$.

\medskip

\noindent (3) From the definition of $l_d(k)$ we get
\[2 a(k) \Big(l_d(k)-k\Big)=b(k)-4\delta k + 4p\delta \, k(\tfrac{1-p}{p}-k)
-\sqrt{b(k)^2 - 4a(k)c(k)}\]
We already checked that $b(k)-4\delta k>0$  for $k\in[0,K]$. Also, $\tfrac{1-p}{p}-k>0$ for $k\in[0,K]$, since $\tfrac{1-p}{p}-K = \tfrac{1-p}{p} \cdot \tfrac{A-\mu}{A-\mu +p(\mu-G)}>0$. Thus, $b(k)-4\delta k + 4p\delta k(\tfrac{1-p}{p}-k)>0$ for $k\in [0,K]$. Furthermore, 
\begin{multline}
  \label{mli}
  \Big(b(k)-4\delta k + 4p\delta k(\tfrac{1-p}{p}-k)\Big)^2 -
\Big(b(k)^2 - 4a(k)c(k)\Big) =  \\
=8p^2\delta (1+k) (\tfrac{1-p}{p}-k) (-2\delta k^2 + p(2\mu-\sigma^2)k +p^2 \sigma^2).
\end{multline}
We can now conclude that the left-hand side of \eqref{mli} is positive
on $[0,K]$, since
the function  $k\mapsto (-2\delta k^2 + p(2\mu-\sigma^2)k +p^2 \sigma^2)$ is concave and its 
values at $k=0, K$ are positive. It follows immediately that
$l_d(k)>k$ for $k\in [0,K]$.

\medskip

\noindent (4) This can be shown by the direct computation.

\medskip

\noindent (5) A straightforward (but somewhat tedious) calculation yields that
$\tfrac{\partial}{\partial x} T_d(x,k) \to l_d(k)$, as $x\to\infty$,
uniformly in $k\in [0,K]$.
So, by (3), we can choose $c^*$ such that $\tfrac{\partial}{\partial x} T_d(x,k)-k>\epsilon$ for some $\epsilon>0$ and all $x>c^*$, $k\in [0,K]$.
Also, we can check that there exists a constant $c^*$ such that
for $x>1$ and $k\in [0,K)$ we have
$$\tfrac{1}{x}\tfrac{\partial}{\partial k} T_d(x,k)< c^*+ c^* \tfrac{1}{\sqrt{b(k)^2-4a(k)c(k)}} <c^*+ c^* \tfrac{1}{\sqrt{K-k}},$$
whence the first inequality in the statement of (5) follows. The
second one is obtained in a similar manner.

\medskip

\noindent (6) We first observe that $l_d'(k)>0$ and $l_u'(k)<0$ for $k\in[0,K)$. Then, 
the statement follows from the integrability of $1/\sqrt{K-\cdot}$ on $[0,K]$
and the fact that
$\Big\vert
k(\tfrac{l_u'(k)}{k-l_u(k)}-\tfrac{l_d'(k)}{k-l_d(k)})\Big\vert <c^*+
c^* \tfrac{1}{\sqrt{K-k}}$, which is, in turn, implied by (4) and (5)
above.
\end{proof}

\begin{remark}\label{rem:K} In our current parameter range ($0<p<1$,
  $G\leq \mu < A$), 
  the level curve $L=0$ is a hyperbola and the curve $L=k$ is an
  ellipse for large-enough values of $k$. In fact,
 $K$ is the smallest value of $k\geq 0$ such that $L=k$ is a hyperbola (and, therefore, unbounded). 
\end{remark}

For $G\leq \mu < A$, the Merton proportion $\pi$ cannot take the value $1$, so we only consider the
cases $\pi<1$ and $\pi>1$ in the following proposition:
\begin{proposition} [$0<p<1$, $G\leq \mu < A$]
  \label{pro:THIRD}
Assuming that $0<p<1$  and $G\leq \mu < A$, we have the following
statements:
\begin{enumerate}[leftmargin=1.8em]
  \item If $\pi < 1$, then $P \not\in\Gamma_{\alpha}$, for each $\alpha>0$.
\item If $\pi > 1$ then $P\in\Gamma_{\alpha}$ if and only if
  $\alpha\leq x_p$.
\end{enumerate}
In both cases, $\beta_{\alpha}<\infty$. 
Moreover, for $G(\alpha)=\int_{\alpha}^{\ba} \tfrac{\ga'(x)}{x}\, dx$,
we have
\
\begin{equation}
  \label{limit-integral}
  \text{ $G$ is continuous on $(0,\infty)$, }
  \textstyle \lim_{\alpha\searrow 0} G(\alpha) = +\infty,\text{ and }
  \lim_{\alpha\nearrow +\infty} G(\alpha) = C,
\end{equation}
 where $C$ is given by \eqref{ite:C-expression}.

\end{proposition}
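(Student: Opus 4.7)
The proof consists of three pieces: the dichotomy statements (1) and (2), the finiteness $\ba<\infty$, and the three properties of $G$ in \eqref{limit-integral}. Parts (1) and (2) follow by direct adaptation of the arguments in Proposition \ref{pro:SECOND}; parts of the continuity/limit statement mirror Proposition \ref{pro:FIRST}, but the finiteness of $\ba$ and the limit $G(\alpha)\to C$ require genuinely new arguments tailored to the unbounded containment region $\Omega_0$.

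For (1), since $\Omega_0\cap\bB=\emptyset$ when $\pi<1$ by Proposition \ref{Omega}(1) and $\Gamma_\alpha\subseteq\Omega_0$ by Proposition \ref{Gamma}(2), it is immediate that $P\notin\Gamma_\alpha$. For (2): when $\alpha>x_P$, the trajectory lies to the right of $x_P$ and $P\notin\Gamma_\alpha$ as in the proof of part a) of Proposition \ref{pro:SECOND}; when $\alpha\leq x_P$, the coalescence $\tau(x_P)=T_u(x_P,0)$ squeezes $\ga$ to pass through $P$, exactly as in part c) of that same proposition.

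The \emph{main new obstacle} is to establish $\ba<\infty$, since $\Omega_0$ is now unbounded by Proposition \ref{Omega}(2). I would argue by contradiction. Suppose $\ba=\infty$. By Proposition \ref{Omega}(5), $\ga'(x)\in[0,k_0]$; by Lemma \ref{K and lim}(2) any level curve $L=k$ with $k>K$ has bounded intersection with $\Omega_0$, hence $\ga'(x)\leq K$ for all $x$ large enough. I then exploit Lemma \ref{K and lim}(3), which gives $l_d(k)>k$ for all $k\in[0,K]$: after passing to a subsequence $x_n\to\infty$ along which $\ga'(x_n)\to k^*\in[0,K]$ and $\ga(x_n)/x_n\to m^*$, one checks that $m^*\leq k^*$ (since $\ga'$ is essentially bounded by $k^*+o(1)$ eventually), while Lemma \ref{K and lim}(4) gives $T_d(x_n,0)/x_n\to l_d(0)$, and more generally along the trajectory the analysis forces $(x_n,\ga(x_n))$ to drop below $T_d(x_n,0)$ for large $n$, contradicting $\Gamma_\alpha\subseteq\Omega_0$. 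The technical effort is in controlling the possible oscillation of $\ga'(x)$ relative to $m^*$; I expect an envelope-style argument analogous to Claim 1 in the proof of part c) of Proposition \ref{pro:SECOND} to suffice.

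The properties of $G$ follow standard templates once $\ba<\infty$ is in hand. Continuity on $(0,\infty)$ is as in Claims 2--3 of the proof of Proposition \ref{pro:FIRST}: the implicit function theorem, applicable because $\ga'(\ba)=0$ and $T_u(\cdot,0)$ has nonzero slope at $\ba$, gives $\alpha\mapsto\ba$ continuous, and dominated convergence (using $\ga'\in[0,k_0]$) yields continuity of $G$. The blow-up $\lim_{\alpha\searrow 0}G(\alpha)=+\infty$ is proved verbatim as in Claim 4 of that same proof, whose local argument near $x=0$ does not depend on whether $\mu<G$ or $\mu\geq G$. The remaining limit $\lim_{\alpha\nearrow\infty}G(\alpha)=C$ is the most delicate point. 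The plan is to change variables from $x$ to $k=\ga'(x)$, which is monotone on each of the ascending and descending branches of the trajectory, and express $G(\alpha)$ as an integral over $k\in[0,k_{\max,\alpha}]$ of the form $\int k\,\bigl[\partial_k x_u(k)/x_u(k) - \partial_k x_d(k)/x_d(k)\bigr]\,dk$, where $x_u(k),x_d(k)$ are the two branch parametrizations. Using Lemma \ref{K and lim}(4), which states that $T_{u,d}(x,k)/x\to l_{u,d}(k)$ and $\partial_k T_{u,d}(x,k)/x\to l_{u,d}'(k)$, and the integrability estimate of Lemma \ref{K and lim}(5), the integrand converges pointwise to $k\bigl[l_u'(k)/(k-l_u(k)) - l_d'(k)/(k-l_d(k))\bigr]$ while remaining dominated by an $L^1$-function; dominated convergence then yields $G(\alpha)\to C$.
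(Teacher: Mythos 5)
Your overall strategy (contradiction via asymptotic slopes for $\ba<\infty$, then the change of variables $k=\ga'(x)$ for the limit $C$) is the paper's, but two steps as you describe them do not go through.

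First, in the proof that $\ba<\infty$ your contradiction is drawn from the boundary curve $T_d(\cdot,0)$ of $\Omega_0$: you claim the trajectory is forced to drop below $T_d(x,0)$ because $T_d(x_n,0)/x_n\to l_d(0)$. But $l_d(0)=\tfrac{(1-p)\left(\mu-\sqrt{\mu^2-G^2}\right)}{2\delta}$ can be strictly smaller than $K$ (e.g.\ $p=\tfrac12$, $\sigma=\delta=1$, $\mu=2.2$ gives $l_d(0)\approx 0.13$ while $K\approx 0.89$), so a trajectory with limiting slope $k^*\in(l_d(0),K]$ stays above $T_d(\cdot,0)$ forever and no contradiction arises from that comparison. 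The correct comparison, which is where Lemma \ref{K and lim}(3) actually enters, is with the \emph{level curve} $T_d(\cdot,k^*-\eps)$: since eventually $\ga'>k^*-\eps$, part (4) of Proposition \ref{Omega} forces $\ga>T_d(\cdot,k^*-\eps)$, while $l_d(k^*-\eps)>k^*+2\eps$ makes that curve grow strictly faster than $\ga$ — contradiction. Note also that your subsequence device leaves the oscillation of $\ga'$ unresolved; the clean route is to first show (as in Claim 1 of Proposition \ref{pro:SECOND}) that $\ga'$ admits no local minimum, so that $\ga'(x)$ genuinely converges to some $k^*\in[0,K]$ before the slope comparison is made.

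Second, in the limit $\lim_{\alpha\to\infty}G(\alpha)=C$ you apply dominated convergence to the full integral $\int_0^{k_{\max,\alpha}}k\bigl[\tfrac{x_u'(k)}{x_u(k)}-\tfrac{x_d'(k)}{x_d(k)}\bigr]dk$, but $k_{\max,\alpha}>K$ (the trajectory must cross the unbounded strip $\{L>K\}$), and for $k>K$ the level curves are bounded, $l_{u,d}(k)$ are undefined, and the majorant of Lemma \ref{K and lim}(5) is unavailable (it already blows up like $1/\sqrt{K-k}$ at $k=K$ and is stated only for $k<K$). So the portion of the trajectory with $\ga'>K$ must be treated separately: split the integral at the points $x_u(\alpha)<x_d(\alpha)$ where $\ga'=K$, apply your change-of-variables and dominated-convergence argument only on $[\alpha,x_u(\alpha)]$ and $[x_d(\alpha),\ba]$, and show the middle integral vanishes as $\alpha\to\infty$. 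That last step is a genuinely new estimate — from $\ga'\leq K+\eps$ and $\partial_x T_d(\cdot,K)\geq K+2\eps$ for large $x$ one bounds $x_d(\alpha)-x_u(\alpha)$ by a constant times $\sqrt{x_u(\alpha)}$ (the vertical gap $T_u(\cdot,K)-T_d(\cdot,K)$ grows only like $\sqrt{x}$), whence $\int_{x_u(\alpha)}^{x_d(\alpha)}\tfrac{\ga'(x)}{x}dx\to0$ — and it is absent from your sketch. A final minor point: for $\pi>1$ and $\alpha\leq x_P$ the continuity of $\alpha\mapsto\ba$ is not an implicit-function-theorem statement (the solution passes through the singularity $P$ and $\ba=\beta_{x_P}$ is constant there), so the continuity of $G$ needs the argument of Proposition \ref{pro:SECOND}, not only that of Proposition \ref{pro:FIRST}.
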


\begin{proof} 
  The parts of statements (1) and (2) involving singularities are proved similarly to parallel
  statements in Proposition \ref{pro:SECOND}. We show that
  $\beta_{\alpha}<\infty$ for $\pi<1$, with the case $\pi>1$ being
  quite similar. Proceeding by contradiction, we 
  suppose that $\beta_{\alpha}=\infty$, for some $\alpha>0$. Then,
  just like in the proof of Proposition \ref{pro:SECOND}, we can show
  that $\ga'(x)$ does not
  admit a local minimum on $(\alpha,\infty)$. Thus, there exists $k^*$
  such that $\lim_{x\to \infty} \ga'(x)=k^*$. From Proposition
  \ref{K and lim}, part (2), we learn that $k^*\in [0,K]$, whereas
  from
  part (3) 
  we conclude that there exists $\epsilon>0$ such that
  $l_d(k^*-\epsilon)>k^* + 2\epsilon$. 
Since $\vert \ga'(x)-k^*\vert<\epsilon$ for large enough $x$, we can
use part (4) of 
Proposition \ref{K and lim}, to obtain a contradiction
 $$\lim_{x\to\infty}
\Big(\tfrac{\partial}{\partial x}T_d(x,k^*-\epsilon)-\ga'(x)\Big)>k^*
+ 2\epsilon - (k^*+\epsilon)=\epsilon,$$ with the fact that
the inequality 
$\ga'(x)>k^*-\epsilon$ implies that $\ga(x)>T_d(x,k^*-\epsilon)$, for
large $x$.

\noindent \begin{minipage}{0.7\textwidth}
  It remains to prove \eqref{limit-integral}. 
 The main idea is to 
intersect the solution $\ga$ with the (unbounded) level curve $L=K$.
If the two points of intersection are denoted by $x_u$ (the intersection is on $T_u(\cdot, K)$) and $x_d$ (intersection on  $T_d(\cdot, K)$), with $x_u<x_d$ (see Figure 8), then the integral in 
 \eqref{limit-integral} is split into three integrals on the intervals
	 $[\alpha,x_u]$, $[x_u,x_d]$ and $[x_d,\ba]$.
The first and the last integrals are then computed using the change of
variable $k=\ga '(x)$, while the  limit of the middle integral is
shown to be zero.
\end{minipage}
\hfill
\begin{minipage}{0.27\textwidth}
\begin{center}
\includegraphics[width=4.5cm]{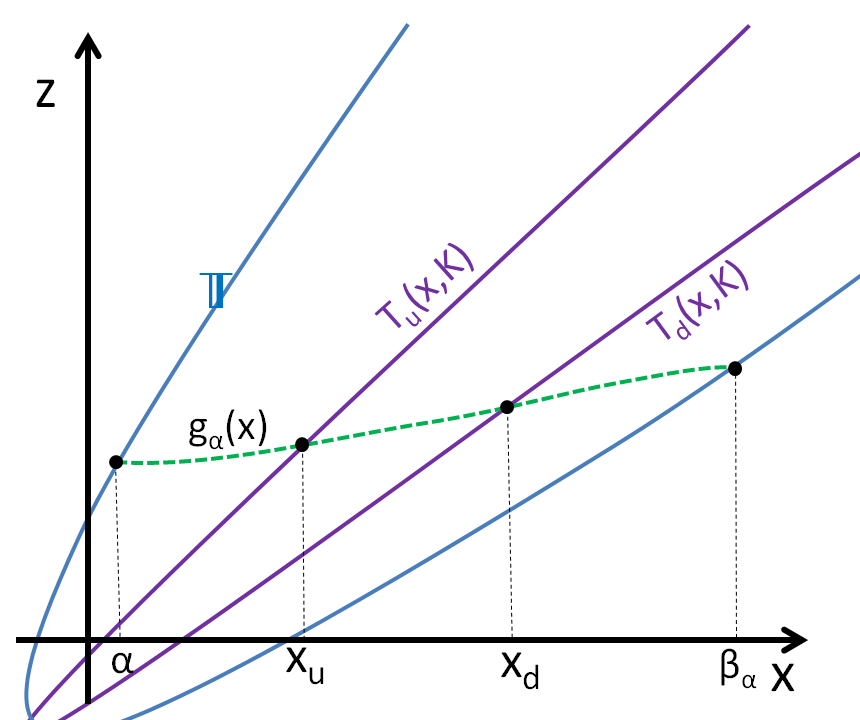}
\end{center}
\figcaption{$x_u$ and  $x_d$}
\end{minipage}

We start this program by observing that 
the region
$\Omega_0 \cap \{L(x,z)=K\}$ is unbounded (see Proposition
\ref{K and lim} (2)), and, hence, so is the region $\Omega_0\cap \{
  L(x,z)>K \}$. Also, we observe that $T_u(x,0)>T_u(x,K)>T_d(x,K)>T_d(x,0)$ for $x\in (0,\infty)$.
We conclude from there  that 
$\Gamma_{\alpha}$ intersects the region
$$\Omega_0 \cap \{(x,z): T_d(x,K)<z<T_u(x,K)\}=\Omega_0 \cap
\{L(x,z)>K\}.$$
Therefore, 
$\ga'(x_m(\alpha))>K\text{ for }x_m(\alpha)\in \argmax_{x\in
  [\alpha,\ba]} \ga'(x)$.

Since $\ga'(x)$ doesn't admit a local minimum on $(\alpha,\ba)$,
$x_m(\alpha)$ is uniquely defined and 
$\ga'(x)$ strictly increases on $(\alpha,x_m)$ and strictly decreases on
$(x_m,\ba)$. Consequently, 
there exists a pair $x_u(\alpha), x_d(\alpha)$ with  $x_u(\alpha)\in (\alpha,x_m(\alpha))$ and
$x_d(\alpha)\in (x_m(\alpha),\ba)$ such that 
\[ \ga'(x_u(\alpha))=K, \quad \ga'(x_d(\alpha))=K.\]
Let $I_{\alpha} : [0,K]\mapsto
[\alpha,x_u(\alpha)]$ be the  inverse function of $\ga'(x)$ on
$[\alpha,x_u(\alpha)]$, so that
\[\ga'(I_{\alpha}(k))=k,\quad
\ga(I_{\alpha}(k))=T_u(I_{\alpha}(k),k)\text{ and }
I_{\alpha}'(k)= \tfrac{ \frac{\partial}{\partial
k}T_u(I_{\alpha}(k),k)}{ k - \frac{\partial}{\partial
x}T_u(I_{\alpha}(k),k)},\]
where the last equality can be obtained by differentiating the middle
one. 
A change of variables $x=I_{\alpha}(k)$ yields
\begin{equation}\label{limit1}
\int_{\alpha }^{x_u(\alpha)} \tfrac{g_{\alpha}'(x)}{x} dx = \int_{0}^{K} \tfrac{k}{I_{\alpha}(k)} \tfrac{ \frac{\partial}{\partial k}T_u(I_{\alpha}(k),k)}{ k - \frac{\partial}{\partial x}T_u(I_{\alpha}(k),k)} dk \quad  \stackrel{\alpha \to \infty}{\longrightarrow}\quad  \int_{0}^{K} \tfrac{k l_u'(k)}{k-l_u(k)}dk,
\end{equation}
where the existence of the limit and its value are obtained using
parts (4) and (5) of Proposition \ref{K and lim}, 
together with the fact that
$\lim_{\alpha \to \infty} I_{\alpha}(k) = \infty$.
In particular, part (5) of Proposition \ref{K and lim}  allows us to apply the dominated convergence theorem.
Similarly, we have
\begin{equation}\label{limit2}
 \int_{x_d(\alpha)}^{\ba} \tfrac{g_{\alpha}'(x)}{x} dx  \quad  \stackrel{\alpha \to \infty}{\longrightarrow}\quad  
- \int_{0}^{K} \tfrac{k l_d'(k)}{k-l_d(k)}dk.
\end{equation}

\medskip

\noindent It remains to show that
$\int_{x_u(\alpha)}^{x_d(\alpha)} \frac{\ga'(x)}{x}dt \to 0$ as
$\alpha \to \infty$.
By Proposition \ref{K and lim} (parts (3) and (4)), 
there exist $\epsilon>0$ and $x_{\epsilon}>0$ such that
$\tfrac{\partial}{\partial x}T_d(x,K)>K+2\epsilon$, for
$x>x_{\epsilon}$.
Moreover, part (2) of the same proposition guarantees the existence
of a constant $\alpha_{\epsilon}>0$ such that 
$$\Omega_0 \cap \{L(x,y)>K+\epsilon\} \subset \{x \leq \alpha_{\epsilon}\}.$$
Then, $\ga'(x)<K+\epsilon$ for $\alpha>\alpha_{\epsilon}$ and $x\in
[\alpha,\ba]$, and, so, for  $\alpha>\alpha_{\epsilon} \vee
x_{\epsilon}$, we have
\begin{equation}
  \nonumber 
  \begin{split}
\tfrac{1-p}{\delta} \sqrt{1+ \tfrac{b(K)-4\delta K}{p(1-p)(1+K)}x_u(\alpha) } &= T_u(x_u(\alpha),K)-T_d(x_u(\alpha),K) \\
&=\ga(x_u(\alpha))-T_d(x_u(\alpha),K) + T_d(x_d(\alpha),K)- \ga(x_d(\alpha))\\
&=\int_{x_u(\alpha)}^{x_d(\alpha)} \Big(\tfrac{\partial}{\partial x}T_d(x,K)-\ga'(x)\Big) dx  
\geq  \epsilon \ (x_d(\alpha)-x_u(\alpha)),
 \end{split}
\end{equation}
where the first equality follows by direct computation, the
second one by the fact that
$\ga(x_u(\alpha))=T_u(x_u(\alpha),K)$ and
$\ga(x_d(\alpha))=T_d(x_d(\alpha),K)$, and the final
inequality from the choice of $\alpha$. Hence, 
\begin{equation}\nonumber
 \begin{split}
\limsup_{\alpha \to \infty}\Big\vert \int_{x_u(\alpha)}^{x_d(\alpha)}
\tfrac{\ga'(x)}{x} dx \Big\vert &\leq \limsup_{\alpha \to \infty}\Big\vert (K+\epsilon) \ln{\Big( 1+ \tfrac{x_d(\alpha)-x_u(\alpha)}{x_u(\alpha)} \Big)}\Big\vert \\
&\leq \limsup_{\alpha \to \infty}\Big\vert (K+\epsilon) \ln{\Big( 1+
  \tfrac{1-p}{\epsilon  \delta\ x_u(\alpha)} \sqrt{1+ \tfrac{b(K)-4\delta K}{p(1-p)(1+K)}
	x_u(\alpha) } \Big)}\Big\vert =0. \qedhere
 \end{split}
\end{equation}
\end{proof}

The remaining task in the proof of Theorem \ref{thm:well-posed} is to
show that the problem is not well posed, whenever $\log
(\frac{1+\old}{1-\uld})\leq C$.
\begin{proposition}\label{prop:wellposed} Assume that
$p\in (0,1)$ and   $G \leq \mu < A$. 
 If $\log(\tfrac{1+\old}{1-\uld})\leq C$, where $C$ is defined in
 \eqref{ite:C-expression},  then $u=\infty$, i.e., the
 problem is not well posed. 
\end{proposition}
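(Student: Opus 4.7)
The plan is to produce, for each large $\alpha$, an admissible strategy in the actual market whose expected utility is at least $\tfrac{\eta_B^p}{p}\,T_u(\alpha,0)^{1-p}$; since the latter diverges as $\alpha\to\infty$, $u=+\infty$ follows. I specialize to the strictly solvent initial position $\eta_B>0$, $\eta_S=0$, for which ill-posedness alone is enough to preclude overall well-posedness. The starting observation is that, by Proposition~\ref{pro:THIRD} together with the standing hypothesis, $G(\alpha)>C\ge\log(\tfrac{1+\old}{1-\uld})$ strictly for every finite $\alpha$, leaving room to invoke the Section~\ref{sec:verify} construction at slightly \emph{larger} transaction costs.

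Given large $\alpha$, introduce the auxiliary costs $\tilde\old:=\old$ and $\tilde\uld:=1-(1+\old)e^{-G(\alpha)}\in(\uld,1)$, so that $\log(\tfrac{1+\tilde\old}{1-\tilde\uld})=G(\alpha)$. With respect to these costs the triple $(\alpha,\beta_\alpha,g_\alpha)$ is a full solution of \eqref{equ:HJB-g}, \eqref{equ:integral-cond}, so Proposition~\ref{prop:rx} and Lemma~\ref{lem:complete-optimal} yield a shadow price $\hS^{\alpha}$ and a strategy $(\hvp^{0,\alpha},\hvp^{\alpha},\hc^{\alpha})$, admissible in the $(\tilde\old,\tilde\uld)$-market, with
\[
\sU(\hc^{\alpha})\;=\;\tfrac{\eta_B^p}{p}\,\bigl|g_\alpha(\hx)\bigr|^{1-p}.
\]
Because $r(x)=-\eta_B\,\Pi(x)<0$ on $[\alpha,\beta_\alpha]$, the selection rule of Proposition~\ref{prop:rx} forces $\hx=\alpha$, hence $g_\alpha(\hx)=T_u(\alpha,0)$.

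The next step is to transfer this strategy to the actual market. Let $\vp^0$ be the cash process determined by \eqref{equ:self-fin} from $(\hvp^{\alpha},\hc^{\alpha})$ and the \emph{actual} bid/ask $(\uS,\oS)$. A direct comparison gives
\[
\vp^0_t-\hvp^{0,\alpha}_t\;=\;\int_0^t(\tilde\uld-\uld)\,S_u\,d\hvp^{\alpha,\downarrow}_u\;\ge\;0,
\]
and, combined with $\uS\ge(1-\tilde\uld)S$ and $\oS=(1+\tilde\old)S$, this yields $\Liq(\vp^0_t,\hvp^{\alpha}_t,\uS_t,\oS_t)\ge\Liq(\hvp^{0,\alpha}_t,\hvp^{\alpha}_t,(1-\tilde\uld)S_t,(1+\tilde\old)S_t)\ge 0$, so that $(\vp^0,\hvp^{\alpha},\hc^{\alpha})\in\sA$ and $u\ge\sU(\hc^{\alpha})=\tfrac{\eta_B^p}{p}\,T_u(\alpha,0)^{1-p}$.

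Finally, letting $\alpha\to\infty$ in the explicit expression \eqref{T_u expression} and using $b(0)^2-4a(0)c(0)=4p^2(1-p)^2(\mu^2-G^2)>0$---the strict inequality $\mu>G$ is forced by the hypothesis, as $\mu=G$ would give $C=0<\log(\tfrac{1+\old}{1-\uld})$---one obtains $T_u(\alpha,0)\sim l_u(0)\,\alpha\to\infty$, whence $u=+\infty$. I anticipate the main obstacle to be the admissibility handoff between the auxiliary and actual markets, which the monotonicity comparison above handles cleanly: $\tilde\old=\old$ eliminates any ask-side discrepancy, while $\tilde\uld\ge\uld$ only enlarges the cash receipts from sales, preserving every inequality in \eqref{equ:self-fin} and \eqref{equ:liq} in the direction required for solvency.
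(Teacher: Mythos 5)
Your route is essentially the paper's: evaluate the problem at auxiliary transaction costs for which $\ga$ solves the full free-boundary problem (via Lemma~\ref{lem:complete-optimal} and Proposition~\ref{prop:rx}), transfer the resulting strategy to the actual market by monotonicity of the value in the transaction costs, and let $\alpha\to\infty$ so that $T_u(\alpha,0)^{1-p}\to\infty$. The paper normalizes $\uld=0$ and varies $\old$, whereas you fix $\old$, vary the sell-side cost, and make the cost-monotonicity explicit through the cash-process comparison; these are presentational differences, and that part of your argument is sound, as is the identification $\hx=\alpha$ and the asymptotics of $T_u(\cdot,0)$.

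There is, however, one genuine gap: the assertion that Proposition~\ref{pro:THIRD} gives $G(\alpha)>C$ for every finite $\alpha$. That proposition only provides continuity of $G$ and the limits $\lim_{\alpha\searrow 0}G(\alpha)=+\infty$ and $\lim_{\alpha\to\infty}G(\alpha)=C$; it does not say that $G$ stays above its limit at infinity. When $\log(\tfrac{1+\old}{1-\uld})<C$ this is harmless, since for all large $\alpha$ one still has $G(\alpha)>\log(\tfrac{1+\old}{1-\uld})$ and hence $\tilde{\uld}\geq\uld$. But the statement also covers the boundary case $\log(\tfrac{1+\old}{1-\uld})=C$, where your handoff needs $G(\alpha)\geq C$ along some sequence $\alpha_n\to\infty$; if $G$ approached $C$ from below, $\tilde{\uld}$ would fall below $\uld$ and the solvency comparison would reverse. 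The point is not cosmetic: a single $\alpha_0$ with $G(\alpha_0)\le C$ would, by the intermediate value theorem, yield a solution of \eqref{equ:HJB-g}, \eqref{equ:integral-cond} for the \emph{actual} costs and hence a finite value by verification, so the boundary case really hinges on excluding this. The paper closes the gap by proving that $G$ is strictly decreasing: if $\alpha_1<\alpha_2$ had $G(\alpha_1)\le G(\alpha_2)$, the corresponding values $\tfrac1p T_u(\alpha_i,0)^{1-p}$ of the control problem, taken at transaction costs ordered like the $G(\alpha_i)$, would contradict either the monotonicity of $u$ in the costs or the strict increase of $T_u(\cdot,0)$. Strict decrease together with $\lim_{\alpha\to\infty}G(\alpha)=C$ gives $G(\alpha)>C$ everywhere, after which your argument goes through verbatim; you should add this step (or restrict your claim to the strict-inequality case and handle equality separately).
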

\begin{proof}
Without loss of generality, we assume that $\uld=0$; indeed, it is
enough to scale (the initial value of) the stock price $\prfi{S_t}$ by $(1-\uld)$, otherwise. 

  For $\alpha>0$, the function $\ga$ in Proposition \ref{pro:THIRD}
corresponds to
the value function $u$ under the transaction costs $\old$ and $\uld=0$
such that
$G(\alpha) = \log(1+\old)$, where $G(\alpha)=\int_{\alpha }^{\beta _{\alpha}} \frac{g_{\alpha}'(x)}{x}
dx$.
More precisely, Lemma \ref{lem:complete-optimal} in Section 4~above 
yields that
\begin{equation}\label{representation-ga}
u(1, 0, \old,0)=\tfrac{1}{p}\ga
(\alpha)^{1-p}=\tfrac{1}{p}T_u(\alpha)^{1-p}, \textrm{~if~}
G(\alpha) =\log(1+\old),
\end{equation}
where $u(\eta _B, \eta _S, \old, \uld)$ is the optimal utility
for the initial position $\eta _B, \eta _S$,
under the transaction costs $\old$ and $\uld$.
The strict increase of $T_u$ and 
the decrease of $u(1,0,\cdot, 0)$, imply that $G(\alpha)$
is strictly decreasing, wherever it is defined. It now easily follows that
\[\lim _{\alpha \nearrow \infty}\tfrac{1}{p}\ga (\alpha ) ^{1-p}=\lim
_{\alpha \nearrow \infty}\tfrac{1}{p}T_u(\alpha)^{1-p}=\infty,\]
which, together with \eqref{limit-integral} and the representation
\eqref{representation-ga}, yields that
$\lim _{\log (1+\old)\searrow C} u(1,0, \old,
0)=\infty$. 
Since, clearly, $u(1,0, \cdot, 0)$ is decreasing in $\old$, this amounts
to saying that
\[ u(1,0,\old, 0)=\infty,\textrm{~for~}\log (1+\old)\leq C.\qedhere\]
\end{proof}
\begin{remark}
The map 
$\alpha \rightarrow  G(\alpha)=\int_{\alpha }^{\beta _{\alpha}}
\frac{g_{\alpha}'(x)}{x} dx$
is strictly decreasing in general, not just under the parameters
restricted by the hypothesis of Proposition \ref{prop:wellposed}.
The same argument, as the one given in the proof of Proposition
\ref{prop:wellposed}, applies. In particular, this fact can be used to show
that the free-boundary problem \eqref{equ:HJB-g},
\eqref{equ:integral-cond} has a \emph{unique} solution for all values
of the transaction costs, as long as $u<\infty$. 

It is, perhaps, interesting to note that the authors are unable to
come up with a  purely analytic argument
for the monotonicity of $G(\alpha)$. The crucial step in the proof of
Proposition \ref{prop:wellposed} above is to relate the value of
$G(\alpha)$ to the original control problem, and then argue by using
the natural monotonicity properties of the control problem itself,
rather than the analytic description \eqref{equ:HJB-g} only.
\end{remark}

\providecommand{\bysame}{\leavevmode\hbox to3em{\hrulefill}\thinspace}
\providecommand{\MR}{\relax\ifhmode\unskip\space\fi MR }
% \MRhref is called by the amsart/book/proc definition of \MR.
\providecommand{\MRhref}[2]{%
  \href{http://www.ams.org/mathscinet-getitem?mr=#1}{#2}
}
\providecommand{\href}[2]{#2}

\end{document}